\newtheorem*{theorem*}{Theorem}
\newcommand{\ovh}{\mathrm{ovh}}
\newcommand{\subI}{\mathcal{I}_{\mbox{\rm\tiny sub}}}
\newcommand{\inp}{\mbox{\rm inp}}
\newcommand{\prot}{\mathcal{P}} 
\newcommand{\outcomp}{\mathcal{O}} 
\newcommand{\incomp}{\mathcal{I}} 
\newcommand{\names}{\mathsf{names}}
\DeclareMathOperator{\Div}{Div} 
\DeclareMathOperator{\codim}{codim} 
\DeclareMathOperator{\Face}{Face} 
\DeclareMathOperator{\In}{In} 
\DeclareMathOperator{\Out}{Out} 
\DeclareMathOperator{\sig}{sig} 
\DeclareMathOperator{\skel}{skel} 
\DeclareMathOperator{\UG}{\cup \oG} 
\DeclareMathOperator{\oG}{\overline{G}} 
\newcommand{\G}{G} 
\newcommand{\V}{\mathcal{V}}
\newcommand{\T}{\mathcal{T}}
\newcommand{\mathcalover}[1]{\overline{\mathcal{#1}}}
\newcommand{\ophi}{\overline{\phi}}
\newcommand{\dirty}{\breve}
\DeclareMathOperator{\view}{view} 
\DeclareMathOperator{\Views}{Views} 
\DeclareMathOperator{\A}{DomP} 
\DeclareMathOperator{\M}{NotP} 
\DeclareMathOperator{\rad}{rad} 
\DeclareMathOperator{\Rad}{Rad} 
\DeclareMathOperator{\ecc}{ecc} 
\titlerunning{Lower Bounds for $k$-Set Agreement}
\author{Pierre Fraigniaud}
{Inst.\ de Recherche en Informatique Fondamentale (IRIF), CNRS and Université Paris Cité, France}
{pierre.fraigniaud@irif.fr}
{https://orcid.org/0000-0003-4534-4803}
{Additional support from ANR projects DUCAT (ANR-20-CE48-0006), ENEDISC (ANR-24-CE48-7768-01), and PREDICTIONS (ANR-23-CE48-0010), and from the InIDEX Project METALG.}
\author{Minh Hang Nguyen}{Inst.\ de Recherche en Informatique Fondamentale (IRIF), CNRS and Université Paris Cité, France}{mhnguyen@irif.fr}{https://orcid.org/0009-0008-2391-029X}{Additional support from ANR projects DUCAT (ANR-20-CE48-0006), TEMPORAL (ANR-22-CE48-0001), ENEDISC (ANR-24-CE48-7768-01), and the European Union’s Horizon 2020 program H2020-MSCA-COFUND-2019 Grant agreement n° 945332.
}
\author{Ami Paz}{Lab.\ Interdisciplinaire des Sciences du Numérique (LISN),
CNRS \& Université Paris-Saclay, France}{ami.paz@lisn.fr}{https://orcid.org/0000-0002-6629-8335}{}
\author{Ulrich Schmid}{TU Wien, Vienna, Austria}{s@ecs.tuwien.ac.at}{https://orcid.org/0000-0001-9831-8583}{}
\author{Hugo Rincon-Galeana}{TU Berlin, Germany}{hugorincongaleana@gmail.com}{https://orcid.org/0000-0002-8152-1275}{Supported by the German Research Foundation (DFG) SPP 2378 (project ReNO).}
\authorrunning{P. Fraigniaud, M.~H. Nguyen, A. Paz, U. Schmid, H. Rincon Galeana}
\keywords{Distributed computing, k-set agreement, time complexity, lower bounds, topology}
\title{Lower Bounds for $k$-Set Agreement in Fault-Prone Networks}
\date{}
\begin{document}
\maketitle

\begin{abstract}
We develop a new lower bound for $k$-set agreement in synchronous message-passing systems connected by an arbitrary directed communication network, where up to $t$ processes may crash. 
Our result thus generalizes the $\lfloor t/k\rfloor + 1$ lower bound for complete 
networks in the $t$-resilient model by Chaudhuri, Herlihy, Lynch, and Tuttle~[JACM 2000]. 
Moreover,
it generalizes two lower bounds for oblivious algorithms in synchronous systems connected by an arbitrary undirected 
communication network known to the processes, namely, the domination number-based lower bound by Casta\~neda, Fraigniaud, Paz, Rajsbaum,  Roy, and
 Travers~[TCS 2021] for failure-free
processes, and the radius-based lower bound in the $t$-resilient model 
by Fraigniaud, Nguyen, and Paz~[STACS 2024]. 

Our topological proof non-trivially generalizes and extends the connectivity-based approach for the complete network, as presented in the 
book by Herlihy, Kozlov, and Rajsbaum~(2013). It is based on a sequence of shellable carrier maps that, starting from a shellable input 
complex, determine the evolution of the protocol complex: During the first $\lfloor t/k\rfloor$ rounds, carrier maps
that crash exactly $k$ processes per round are used, which ensure high connectivity of their images. A Sperner's lemma style 
argument can thus be used to prove that $k$-set agreement is still impossible by that round. 
From round $\lfloor t/k\rfloor + 1$  up to our actual lower bound, a novel carrier map is employed, which maintains high connectivity. As a by-product, our proof also provides a strikingly simple lower-bound for $k$-set agreement in synchronous systems with an arbitrary communication network, where exactly $t\geq 0$ processes crash initially, i.e., before taking any step. 
We demonstrate that the resulting additional agreement overhead can be expressed via an appropriately
defined radius of the communication graphs, and show that the usual input pseudosphere complex for 
$k$-set agreement can be replaced by an exponentially smaller input complex based on Kuhn triangulations, which we prove to
be also shellable. 
\end{abstract}

\section{Introduction}
\label{sec:introduction}

In the $k$-set agreement task, introduced by Chaudhuri~\cite{Cha93}, each process starts with some 
input value belonging to an finite set of possible input values, and 
must irrevocably output a value usually referred to as its \emph{decision} value. The output value decided by a process has to be the input of 
some process (strong validity condition), and, system-wide, no more that $k$ different decision
values may be decided ($k$-agreement condition). 
Whereas the case $k=1$ (consensus) is well-understood, properly understanding $k$-set agreement for general $k>1$ is notoriously difficult, even for simple computing models.
Besides the inherent difficulty of handling a task that is less constrained than consensus, its analysis is considerably complicated by the fact that ``classic'' proof techniques are inadequate~\cite{AAEG19:SIAM,AttiyaCR23,AttiyaFPR25}.
As a consequence, methods from combinatorial topology must usually be resorted to~\cite{HerlihyKR13}. Such methods are very powerful, but often difficult to apply to concrete scenarios.

Unsurprisingly, these complications affect not only impossibility proofs for $k$-set agreement, but also termination-time lower bounds. 
In particular, in message-passing synchronous systems (where the processes communicate with each other in a sequence of synchronous, communication-closed
\emph{rounds} over some communication network), we are aware of only two substantially new results since the seminal tight $\lfloor t/k\rfloor + 1$ lower-bound established by Chaudhuri, Herlihy, Lynch, and Tuttle~\cite{ChaudhuriHLT00} for complete networks in the $t$-resilient model (where at most $t$ processes may fail by crashing during any run). 

The first one is the lower-bound by Casta{\~n}eda, Fraigniaud, Paz, Rajsbaum,  Roy, and Travers~\cite{castaneda2021topological} (see also~\cite{FraigniaudNP25}) for \emph{failure-free} processes
connected by an arbitrary (connected) bidirectional communication network $G$ that is commonly known to all nodes --- this model is referred to as the \textsf{KNOW-ALL} model. It holds for \emph{oblivious} algorithms only, that is, algorithms which exchange the sets of different input values seen so far using flooding-based communication, and only take decisions by these sets (and not, e.g., the time of a message arrival or the neighbor it arrived from). 
The lower bound essentially states that $r$ rounds are necessary, where $r$ is the smallest integer such that the graph $G_r=(V,E_r)$ obtained from $G=(V,E)$ by connecting by an edge every two nodes at distance at most $r$ in $G$ has domination number at most~$k$.

The second one is the lower bound, established by Fraigniaud, Nguyen, and Paz~\cite{FNP25:STACS}, for \emph{oblivious} algorithms in the $t$-resilient model with an arbitrary undirected communication network $G$.
For $k=1$, it essentially states that consensus requires $r$ rounds, where $r=\mbox{radius}(G,t)$ is the \emph{radius} of the network when up to $t$ nodes may fail by crashing. 
Informally, the radius is defined as the minimum, taken over all nodes of the network, of the worst-case \emph{finite} number of rounds required for broadcasting from a node over all possible failure patterns, hence can be defined via the \emph{eccentricity} of certain nodes in $G$. The lower bound in~\cite{FNP25:STACS} is tight for oblivious algorithms thanks to the algorithm in~\cite{CastanedaFPRRT23}. The consensus lower bound can be extended to $k$-set agreement using the same techniques as~\cite{FraigniaudNP25}, but only if assuming a priori knowledge on the failure pattern.

In the current paper, we generalize the above results by developing a lower bound for the number of rounds for solving $k$-set agreement in the $t$-resilient model for arbitrary directed communication networks. 
It is fomulated via the ``agreement overhead'' caused by the presence of an arbitrary communication network over the mere case of the complete network.

\begin{definition}\label{def:agreementoverhead}
    Let $G$ be a directed graph, and let $k\geq 1$ and $t\geq 0$ be integers. The agreement overhead $\ovh(G,k,t)$ is the smallest integer such that $k$-set agreement in $G$ can be solved in $\lfloor t/k\rfloor + 1 + \ovh(G,k,t)$ rounds in the $t$-resilient model. 
\end{definition}

The agreement overhead  can hence be viewed as the penalty for not using the complete network but solely~$G$. For the $n$-process complete network~$K_n$, for every $k\geq 1$ and $t\geq 0$, $\ovh(K_n,k,t)=0$, thanks to the lower bound established in~\cite{ChaudhuriHLT00}. 

\subsection{Contributions}

Our main lower bound result (\cref{thm:radlowerbound}) relies on two cornerstones: 

\begin{enumerate}
\item[(1)] a proof that the $\lfloor t/k\rfloor + 1$ lower-bound for $t$-resilient systems over the complete communication network
\cite{ChaudhuriHLT00} also holds for every arbitrary network (which motivates the notion of agreement overhead), and 

\item[(2)] a lower bound on the agreement overhead for an arbitrary communication graph~$G=(V,E)$. 
For specifying the latter, recall that, for every  $U\subseteq V$, $G[U]$ denotes the subgraph of~$G$ induced by the vertices in~$U$. 
Given a set $D$ of vertices, we denote by $\ecc(D,G)$ the eccentricity of $D$ in $G$, i.e., the number of rounds $D$ need to collectively broadcast to all the graph's nodes.
We then define the $(t,k)$-radius of $G$ as
\begin{equation}
\rad(G,t,k) = 
\min_{D\subseteq V,|D|=t+k} \;
\max_{D' \subseteq D, |D'|=t} \;
\ecc(D \setminus D',G[V \setminus D'])\label{eq:raddefstatic}
\end{equation}
and show that the agreement overhead satisfies $\ovh(G,k,t)\geq \rad(G,k,t)-1$. 
\end{enumerate}

Consequently, 
any algorithm solving $k$-set agreement in $G$ under the $t$-resilient model must perform at least 
$\lfloor \frac{t}{k} \rfloor + \rad(G,t,k)$ rounds. (\cref{{thm:radlowerbound}}). 
For the special case of $t=0$, our lower bound is equivalent to the one established in \cite{castaneda2021topological} 
for the \textsf{KNOW-ALL} model. For $t>0$, our lower bound on the agreement overhead also gives  a lower bound of~$\rad(G,k,t)$ for 
solving $k$-set agreement in arbitrary networks with $t$ initially dead processes (\cref{thm:lowerboundinitiallydead}). 
\medskip 

Our paper also advances the state of the art of topological modeling as follows:
\begin{enumerate}
\item[(3)] We introduce a novel carrier map that governs the evolution of a shellable protocol complex
in systems connected by an arbitrary but fixed directed communication graph~$G$ (that may even vary from round to round) with $t$ initially dead 
processes, and show that it maintains high connectivity during~$\ovh(G,k,t)$ rounds. For $t=0$, our carrier
map allows a much simpler analysis of the setting studied in \cite{castaneda2021topological}. 
For $t>0$, we also demonstrate how to generalize the scissors cut-based analysis in \cite{castaneda2021topological} for handling the case $t>0$ as well, and show that the resulting lower bound is equivalent to the one $\ovh(G,k,t)+1$ established by our approach. 

\item[(4)] We non-trivially generalize, extend and also correct the topological proof technique
for the $\lfloor t/k\rfloor + 1$ lower bound in complete networks sketched in \cite[Ch.~13]{HerlihyKR13}
to arbitrary directed communication graphs (that may also vary from round to round).
Our approach starts out from a shellable input complex, and utilizes a sequence of shellable carrier maps that crash 
exactly $k$ processes per round for modeling the evolution of the protocol complex. Since these carrier
maps maintain high connectivity during the first $\lfloor t/k\rfloor$ rounds, a Sperner-lemma style argument 
can be used to prove that $k$-set agreement is still impossible. 
Our contribution not only adds details missing
in \cite[Ch.~13]{HerlihyKR13}
(e.g., the strictness proof of the carrier maps, and the Sperner-style argument), but also fixes a non-trivial error by replacing the rigidity requirement for the carrier maps (which does not hold) by a novel, weaker condition.

\item[(5)] We prove that the Kuhn triangulation \cite{ChaudhuriHLT00}, which is exponentially smaller than the standard pseudosphere complex 
used as the input complex for $k$-set agreement in \cite[Ch.~13]{HerlihyKR13}, is shellable. 
We can hence seamlessly replace the pseudosphere
input complex in our analysis by Kuhn triangulations.
\end{enumerate}

Whereas the focus of our results are lower bounds, the question of tightness obviously arises.
So far, we do not know whether and for which choices of $G$, $k$ and $t$ our lower bound in 
\cref{{thm:radlowerbound}} is tight. 
We must hence leave this question to future
research. We nevertheless include the following result:

\begin{enumerate}
\item[(6)]
We present an upper bound on the agreement overhead by generalizing the algorithm for the clique $K_n$ in~\cite{ChaudhuriHLT00} to an arbitrary communication network~$G=(V,E)$, as follows.
For $S\subseteq V$, let $G[V\setminus S]$ denote the subgraph of $G$ induced by the nodes in $V\setminus S$, 
and let $D(G,t)=\max_{S\subseteq V, |S|\leq t}\mbox{diam}(G[V\setminus S])$, where $\mbox{diam}$ denotes the diameter.
By following the arguments in~\cite{ChaudhuriHLT00}, we show (see Section~\ref{sec:upper-bound}) that there exists an algorithm solving $k$-set agreement in $G$ in $\lfloor\frac{t}{k}\rfloor+D(G,t)$ rounds. As a consequence, $\ovh(G,k,t)\leq D(G,t)-1$. 
\end{enumerate}

\noindent
\textbf{Paper organization.}
In \cref{sec:systemmodel}, we introduce our system model,
and the basics of the topological modeling. In \cref{sec:layeredanalysis}, we revisit
the topological round-by-round connectivity analysis of \cite[Ch.~13]{HerlihyKR13},
which we also generalize to arbitrary graphs. \cref{sec:generalizedlayeranalysis} provides our lower bounds 
on the agreement overhead, and \cref{sec:upper-bound} provides our upper-bound result.
Some conclusions in \cref{sec:conclusions} complete our paper.

\subsection{Related work}

Our work is in the intersection of distributed computing in synchronous networks and distributed computing with faults, two topics with a long and rich history.
Distributed computing in synchronous networks has been studied extensively from the perspectives of round and message complexity~\cite{attiya2004distributed,Lynch96}, and gained increasing attention with the introduction of computational models such as \textsc{LOCAL} and \textsc{CONGEST}~\cite{Suomela2020,Peleg2000}.
A wide variety of computational tasks have been investigated, such as the construction of different types of spanning trees, and the computation of graph colorings and maximal independent sets.

In parallel, distributed computing in synchronous fault-prone systems has been studied primarily under the assumption of \emph{all-to-all} communication, where the communication graph is the complete graph $K_n$.
The main type of faults we study in this work are \emph{stop-faults}, where nodes simply cease to communicate from a certain round onward, but may still send some messages in this round.
Another prominent type of fault is Byzantine failures~\cite{Dolev82}, in which nodes may behave maliciously, but these lie beyond the scope of the present work. 
Closer to our scope, yet outside of it, are \emph{omission failures}, where some messages are omitted; important models for these are the \emph{heard-of} model~\cite{Charron-BostM09,Charron-BostS09} and \emph{oblivious message adversaries}~\cite{coulouma2013characterization,nowak2019topological,winkler2024time}.

The computational tasks studied in the fault-prone settings often differ significantly from those considered in general communication graphs, and include the consensus task, its generalization to $k$-set agreement, and related problems such as renaming~\cite{CastanedaMRR17,Raynal02,RaynalT06}.
One of the main tools for studying distributed fault-prone systems is the use of topological techniques~\cite{HerlihyKR13}.
These are primarily applied to asynchronous systems, though some results also exist for the synchronous fault-prone setting~\cite{HerlihyR10,HerlihyRT02}.

Our current work continues two recent and parallel lines of research. The first concerns fault-prone computation in general graphs, though it has been mostly limited to the study of consensus~\cite{CastanedaFPRRT23,ChlebusKOO23,FNP25:STACS}.
The second line investigates consensus and set agreement in general communication networks, but under fault-free assumptions~\cite{castaneda2021topological,FraigniaudP20,FraigniaudNP25}.
Our work is thus the first to go beyond consensus and study $k$-set agreement in general communication graphs that may be subject to faults. 
To this end, we adapt and revise the topological tools developed for the complete graph and fault-free settings to accommodate both faults and general communication topologies.

\section{System Model}
\label{sec:systemmodel}

\subsection{Computational Model}

Our computational  model is similar to the one used by Fraigniaud, Nguyen and Paz~\cite{FNP25:STACS}, albeit we consider full-history protocols, and general (i.e., non-necessarily oblivious) algorithms. We consider a finite set of $n$ processes with names $\Pi = \{ p_1,\ldots, p_{n}\}$, that are ordered according to their index set $[n]=1,\dots,n$. Processes communicate in lock-step synchronous rounds via point-to-point directed links, that is, any message that is sent during a round $r$ will be received in the same round $r$, and we do not consider  the possibility that messages arrive at later rounds. We consider that all processes start simultaneously at round 1. 

We assume that processes are represented by deterministic state machines, and have a well-defined local state that also includes the complete history of received messages. Thus, we consider a protocol to be defined by state transitions as well as a communication function and a decision function. In this paper, we will not focus on the protocol specifications, since it is fairly simple to derive them from the particular protocols that we consider. Instead, for the sake of readability and succinctness, we will sketch the protocols by specifying the messages that a process is able to send at each round, the information captured by the local states, and whether or not a process is ready to decide on an output value. 

In a given round, processes can communicate using a fixed network topology, represented by a directed \emph{communication graph} 
$G = (V, E)$, where $V = \Pi$. That is, a process $p$ can send a message directly to another process $q$ if and only if $(p,q)\in E$. We assume that $E$ contains all self-loops $\{(p,p) \mid p \in V\}$. The processes are aware of the communication graph~$G$. 
A process $p$ is able to send a message to any other process in its set of out-neighbors in $G$, denoted by $\Out_p(G) := \{ q \in V \mid
(p, q) \in E \}$. Symmetrically, $p$ can only receive a message from a process in its set of in-neighbors in $G$, denoted by $\In_p(G) := \{ q \in V \mid (q, p) \in E(G) \}$.

 We consider the $t$-resilient model, where up to $t$ processes may permanently crash in every execution, in any round. Crashes may be unclean, thus a process $p$ may still send a message to a non-empty subset of $\Out_p(G)$ before crashing. The set of faulty processes crashing in a given execution is denoted as $F$
 with $|F|\leq t$, and is arbitrary and unknown to the processes. For any faulty process $p \in F$, we denote by $f_p$ the round at which $p$ crashes, and $F_p \subseteq \Out_p(G)$ the set of processes to which $p$ sends a message before crashing. Following~\cite{FNP25:STACS}, we define a \emph{failure pattern} as a set $\varphi:=\{ (p, F_p, f_p) \mid p \in F \}$. Note that the communication graph $G$, in conjunction with a failure pattern and a protocol, fully describes an execution. 

\medskip

For the sake of completeness, we also provide a formal description of the $k$-set agreement problem, which constitutes the main focus of this paper. Every
process $p_i$ has a local \emph{input value} $x_i$ taken arbitrarily from a finite set $\V$ with $|\V|\geq k+1$, which is often assumed to be
just $\V=[k+1]=\{1,\dots,k+1\}$. Every correct process $p_i$ must irrevocably assign some decision value to a local \emph{output variable} $y_i$ eventually, 
which is initialized to $y_i =\bot$ with $\bot \not\in \V$. In essence, $k$-set agreement is a relaxed instance of consensus, in which the agreement condition is relaxed to accept at most $k$ different decision values. More precisely, $k$-set agreement is defined by the following conditions:

\begin{itemize}
    \item \textbf{Strong Validity:} If a process $p_i$ decides output value $y_i$, then $y_i$ is the input value $x_j$ of a process $p_j$

    \item \textbf{$k$-Agreement:} In every  execution, if $\mathcal{O}$ denotes the set of all decision values of the processes that decide in that execution, then $\vert \mathcal{O}\vert \leq k$.

    \item \textbf{Termination:} Every non-faulty process $p_i \notin F$ must eventually and irrevocably decide on some value $y_i\neq \bot$.
\end{itemize}
    
\subsection{Basics of combinatorial topology}
\label{subsec:simpcomp}

Our analysis of $k$-set agreement relies on combinatorial topology \cite{HerlihyKR13}. Most notably, we develop novel topological techniques that allow us to ensure high-order connectivity, which seamlessly translates to a lower bound for $k$-set agreement. We now provide some basic definitions on simplicial complexes, which will be used heavily in the paper. 

Intuitively, simplicial complexes may be thought of as a ``higher dimensional'' instance of an undirected graph. Indeed, in addition to vertices and edges, a simplicial complex may have faces of higher dimension. 

\begin{definition}[Simplicial Complex]
\label{def:simpcomp}
A pair $\mathcal{K} = (V(\mathcal{K}), F(\mathcal{K}))$, where $V(\mathcal{K})$ is a set, and $F(\mathcal{K})\subseteq 2^{V(\mathcal{K})}\setminus\varnothing$ is a collection of subsets of $V(\mathcal{K})$ is a simplicial complex if,  for any $\sigma \in F(\mathcal{K})$, and any $\sigma'\subseteq \sigma$, $\sigma'\in F(\mathcal{K})$.  $V(\mathcal{K})$ is called the vertex set, and $F(\mathcal{K})$ the set of faces 
called simplices (singular: simplex). For notational simplicity, we will occasionally refer to simplicial complexes as \emph{complexes}. 
\end{definition}

Note that, following the convention in \cite{HerlihyKR13}, we will very rarely (cf.\ \cref{def:faceorder}) also consider the empty ``simplex''~$\varnothing$. 

The maximal faces (by inclusion) of a simplicial complex are called \emph{facets}. Since faces are downward closed, then the facets are sufficient for fully determining a simplicial complex.
The dimension of a face $\sigma $ is defined as $\dim(\sigma) = \vert \sigma \vert -1$. The dimension of a simplicial complex $ \mathcal{K}=(V(\mathcal{K}),F(\mathcal{K}))$ is defined as $\max_{\sigma \in F(\mathcal{K})} \dim (\sigma)$. A complex is \emph{pure} if all of its facets have the same dimension, and a complex is \emph{impure} if it is not pure.
For a face $\sigma$ of a pure complex with facet dimension~$d$, we denote by $\codim(\sigma)=d-\dim(\sigma)$ the \emph{co-dimension} of $\sigma$, and by $\Face_k \sigma=\{\rho \mid \mbox{$\rho \subseteq \sigma$ with $ \dim(\rho)=k$}\}$ the set of all $k$-faces of $\sigma$.

For any two simplicial complexes $\mathcal{K}$ and $\mathcal{L}$, $\mathcal{L}$ is a \emph{subcomplex} of $\mathcal{K}$, denoted by $\mathcal{L} \subseteq \mathcal{K}$, if $V(\mathcal{L}) \subseteq V(\mathcal{K})$ and $F(\mathcal{L}) \subseteq F(\mathcal{K})$.
For any $d\geq 0$, the $d$-skeleton $\skel_d(\mathcal{K})$ is the subcomplex of $\mathcal{K}$ consisting of
all simplices of dimension at most $d$.

The morphisms (i.e., structure-preserving maps) for simplicial complexes are called \emph{simplicial} maps:  

\begin{definition}[Simplicial map]
\label{def:simpmap}
    Let $\mathcal{K}$ and $\mathcal{L}$ be simplicial complexes. A mapping ${\mu: V(\mathcal{K}) \rightarrow V(\mathcal{L})}$ is a simplicial map if, for every $\sigma \in F(\mathcal{K})$, $\mu (\sigma) \in F(\mathcal{L})$.
\end{definition}

For our analysis, we also need to consider other maps beyond simplicial maps. Since we are interested in the evolution of configurations of processes, which are represented via faces of a simplicial complex, we need to consider functions that map individual simplices to sets of simplices.

\begin{definition}[Carrier maps]
\label{def:carrmaps}
    Let $\mathcal{K}$ and $\mathcal{L}$ be simplicial complexes, and ${\Psi: F(\mathcal{K})\rightarrow  2^{F(\mathcal{L})}}$ be a function that maps faces of $\mathcal{K}$ into sets of faces of $\mathcal{L}$ such that, for every simplex $\sigma\in\mathcal{K}$, $\Psi(\sigma)$ is a subcomplex of~$\mathcal{L}$. $\Psi$~is a \emph{carrier map} if,  for every two simplices $\sigma$ and  $\kappa$ in $F(\mathcal{K})$,  $\Psi(\sigma \cap \kappa) \subseteq \Psi (\sigma) \cap \Psi (\kappa)$. Moreover, 
    \begin{itemize}
        \item $\Psi$ is \emph{strict} if $\Psi(\sigma \cap \kappa) = \Psi(\sigma) \cap \Psi(\kappa)$, and 
        \item $\Psi$ is \emph{rigid} if, for every $\sigma \in F(\mathcal{K})$, $\Psi(\sigma)$ is pure and of dimension $\dim (\sigma)$.
    \end{itemize}
\end{definition}
Note that the definition above also allows $\Psi(\sigma)=\emptyset$, the empty complex. 


In addition to the vertices and faces, a simplicial complex $\mathcal{K}$ may be endowed with a vertex coloring $\chi: V(\mathcal{K}) \rightarrow \mathcal{C}$, where $\mathcal{C}$  is the \emph{color set}.  We say that a vertex coloring $\chi$ is proper on $\mathcal{K}$ if for any simplex $\sigma \in F(\mathcal{K})$,   the restriction $\chi_{\vert \sigma}$ of $\chi$ on $\sigma$ is injective. 
We say that a pair  $\mathcal{K}_{\chi}:= (\mathcal{K}, \chi)$ is a \emph{chromatic simplicial complex} if $\mathcal{K}$ is a simplicial complex, and $\chi:V(\mathcal{K})\rightarrow\mathcal{C}$ is a proper vertex coloring. (As we shall see in the next section, in the context of using complexes to model distributed computing, the color of a vertex is merely a process ID.)
Let $\mathcal{K}_{\chi}:= ( \mathcal{K}, \chi )$ and $\mathcal{L}_{\chi'}:= ( \mathcal{L}, \chi' )$ be chromatic simplicial complexes. A simplicial map $\mu :V(\mathcal{K}) \rightarrow V(\mathcal{L})$  is a \emph{chromatic map} if, for every $v \in V(\mathcal{K})$, $\chi (v) = \chi' ( \mu(v))$, i.e., $\mu$ is color-preserving. For notational simplicity, when it is clear from the context, we will omit mentioning the vertex coloring explicitly.

\subsection{Topological modeling of distributed systems}
\label{subsec:confcomp}

Simplicial complexes are particularly useful for representing system configurations, both for inputs and outputs, and for describing mid-run states. Vertices are used for representing local states, while faces represent (partial) configurations. An introduction of the basic terms can be found in \cref{subsec:simpcomp} .

The \emph{input complex} $\incomp$ is used for representing all possible initial configurations. 
Its vertices $(p_i,x_i)$ consist of 
a process name $p_i = \chi((p_i,x_i)) \in \Pi$ that is used as its color, and some input value $x_i \in \V$. 
A facet $\sigma$ of the input complex consists 
of $n$ vertices $v_1,\dots,v_n$, with $v_i=(p_i,x_i)$ for $1 \leq i \leq n$ that represent some initial configuration.

The \emph{output complex} $\outcomp$ is used for representing all possible decision configurations. Its vertices $(p_i,y_i)$ consist of 
a process name $p_i = \chi((p_i,y_i)) \in \Pi$ that is used as its color, and some output value $y_i \in \V$. 

The \emph{protocol complex} $\prot^r$ at the end of round $r\geq 1$ is used for representing all
possible system configurations after $r$ rounds of execution. Its vertices $(p_i,\lambda_i)$ consist of 
a process name $p_i = \chi((p_i,\lambda_i)) \in \Pi$ that is used as its color, and the local state $\lambda_i$ of $p_i$ 
at the end of round $r$. Since processes can crash in the $t$-resilient model, the protocol complex may not be pure. A facet $\sigma$ of the protocol complex consists of $n'\geq n-t$ vertices 
$v_{\pi(1)},\dots,v_{\pi(n')}$, with $v_{\pi(i)}=(p_{\pi(i)},\lambda_{\pi(i)})$ for $1 \leq i \leq n'$ 
that represent some possible system configuration after $r$ rounds. We set $\prot^0=\incomp$. 

For any of the simplicial complexes above, $\names(\sigma)$ denotes the set of process
names corresponding to the vertices of a face $\sigma$, i.e., $\names(\sigma)=\chi(\sigma)$.

\medskip

In topological modeling, problems like $k$-set agreement are specified as a \emph{task} $\T = (\incomp, \outcomp, \Delta)$,
where $\Delta:\incomp \to \outcomp$ is a carrier map that specifies the allowed decision configurations $\Delta(\sigma)$ for
a given face $\sigma \in \incomp$. 

\begin{definition}[Task solvability]
 A task $\T$ is solvable with respect to a protocol complex $\prot$ if there exists a simplicial chromatic map $\delta: \prot \to \outcomp$ that agrees with $\Delta$, that is, for every $\sigma \in \incomp$, and for every $\kappa\in \prot_\sigma$, $\delta(\kappa) \in\Delta(\sigma)$, 
where $\prot_\sigma$ is the collection of faces of $\prot$ reachable when only processes in $\sigma$ run, with inputs taken from $\sigma$.
\end{definition}
    
Note that, for each face~$\sigma\in\mathcal{I}$, $\prot_\sigma$ may or may not be empty depending on the executions allowed by the underlying model. 
For instance, for wait-free computing in the IIS model, $\prot_\sigma\neq \varnothing$ for every $\sigma$ because $|\sigma|\geq 1$ and up to all but one processes can crash. 
Instead,  
for synchronous \emph{failure-free} shared-memory computing, $\prot_\sigma\neq \varnothing$ if and only if $|\sigma|=n$, whereas for $t$-resilient synchronous message-passing,  $\prot_\sigma\neq \varnothing$ if and only if $|\sigma|\geq n-t$. 
	
\section{Connectivity-Based Topological Analysis of Synchronous Systems}
\label{sec:layeredanalysis}

In this section, we re-visit the round-by-round topological analysis of the lower-bound for deterministic $k$-set agreement 
algorithms in the synchronous $t$-resilient model for the complete graph in \cite[Ch.~13]{HerlihyKR13}. 
In a nutshell, this analysis shows that too few rounds of communication lead to a protocol complex that is too highly
connected for allowing the existence of a simplicial chromatic map to the output complex of $k$-set agreement. Since some parts of the original proof are not entirely correct
or have been omitted, we revisit these parts in \cref{sec:layer} after presenting some basic ingredients for the proof. In \cref{sec:ourgeneralizedlayeredanalysis},
we generalize the analysis for the complete graph to arbitrary communication graphs, and prove formally that the lower bound $\lfloor t/k\rfloor +1$ for complete graphs also applies to  
arbitrary communication graphs.

Note  that \cite[Ch.~13]{HerlihyKR13} assumes a system consisting
of $n+1$ processes named $\{P_0,\ldots, P_{n}\}$, with index set $\{0,\dots,n\}$,
whereas our system model considers $n$ processes named $\{p_1,\ldots, p_{n}\}$, with index set $\{1,\dots,n\}$. For uniformity, we decided to stick to the latter notation, which makes it necessary to ``translate'' the original and revised definitions and lemmas 
of \cite[Ch.~13]{HerlihyKR13}. In a nutshell, this primarily requires replacing
$n$ occurring in the dimension of a face by~$n-1$. 

\subsection{Basic ingredients}
\label{sec:basics}

In this subsection, we introduce the key ingredients needed for the topological
analysis in \cite{HerlihyKR13}, which tracks the connectivity properties of the sequence of 
protocol complexes over successive rounds. 
We start out with the definition of pseudosphere, a particular type of simplicial complexes. As we shall see, the input complex $\mathcal{I}$ of $k$-set agreement is a pseudosphere. 

\begin{definition}[Pseudosphere {\cite[Def.~13.3.1]{HerlihyKR13}}]\label{def:pseudosphere}
Let $\varnothing\neq I \subseteq [n]$ be a finite index set. For each $i\in I$, let $p_i$ be a process name, indexed such that if $i\neq j$, then $p_i\neq p_j$, and let $V_i$
be a non-empty set. The \emph{pseudosphere} complex $\Psi(\{(p_i, V_i) \mid i\in I\})$ is defined as follows:
\begin{itemize}
\item Every pair $(p_i,v)$ with $i\in I$ and $v \in V_i$ is a vertex., and 
\item for every index set $J \subseteq I$, any set $\{ (p_j,v_j)\mid j\in J\}$ such that $v_j \in V_j$ for all $j\in J$
is a simplex.
\end{itemize}
\end{definition}

Note that, for a given simplex $\sigma$, and a given set of values $\mathcal{V}$, we sometimes use the shorthand $\Psi\bigl(\sigma, \mathcal{V})$ to denote $\Psi(\{(p, \mathcal{V}) \mid p \in \names(\sigma)\}\bigr)$, where the processes in $\names(\sigma)$ define the
corresponding index set $I$.
An essential feature of the protocol complexes arising in the round-by-round connectivity analysis
is that they are \emph{shellable}. Intuitively, a pure $d$-dimensional complex is shellable if it 
can be built by gluing
together its facets in some specific order, called \emph{shelling order}, such that a newly
added facet intersects the already built subcomplex in $(d-1)$-dimensional faces only.

\begin{definition}[Shellable complex]\label{def:shellablecomplex}
A simplicial complex $\mathcal{K}$  is \emph{shellable} if it is pure, of  dimension~$d$ for some $d\geq 0$, and its facets can be
arranged in a linear order $\phi_0,\dots, \phi_N$, called a \emph{shelling order},
in such a way that, for every $k\in \{1,\dots,N\}$,  the subcomplex $\bigl(\bigcup_{i=0}^ {k-1} \phi_i\bigr) \cap \phi_k$
is the union of $(d-1)$-dimensional faces of $\phi_k$.
\end{definition}

The following alternative definition of shellability is easier to use in proofs.

\begin{definition}[Shellability properties {\cite[Fact~13.1.3]{HerlihyKR13}}]\label{def:shellability}
An order $\phi_0,\dots, \phi_N$ of the facets of a pure complex $\mathcal{K}$ is a shelling order if and only if, for any two facets $\phi_a$ and $\phi_b$ with $a < b$ in that order, there exists $\phi_c$ with $c < b$ such that (i)~$\phi_a \cap \phi_b \subseteq\phi_c \cap \phi_b$, and (ii)~$|\phi_b\setminus \phi_c|=1$.
\end{definition}
Note that the face $\phi_c$ guaranteed by \cref{def:shellability} can depend on $\phi_a$. Moreover, $\phi_c$ is usually not unique, i.e., there might be several choices all satisfying the above properties.

It was shown in \cite[Lem.~13.2.2]{HerlihyKR13} (resp.,  \cite[Thm.~13.3.6]{HerlihyKR13}) that the $d$-skeleton, for any dimension $d\geq 0$, of any simplex 
(resp., of any pseudosphere) is shellable. 
The proofs of these facts are based on the following orderings. 

\begin{definition}[Face order, adapted from {\cite[Def.~13.2.1]{HerlihyKR13}}]\label{def:faceorder}
Let $\sigma=\{v_1,\dots,v_n\}$ be an $(n-1)$-simplex, together with a total ordering on its vertices $v_1,\dots,v_n$ given by index order. Each face $\tau$ of $\sigma$ has an associated \emph{signature}, denoted by $\sig(\tau)$, defined as
the Boolean string $(\sig(\tau)[1],\dots,\sig(\tau)[n])$ of length $n$ whose $i$-th entry is
\begin{equation}
    \sig(\tau)[i] = 
    \begin{cases} 
    \bot & \mbox{if $v_i \in \tau$},\\
    \top & \mbox{if $v_i \not\in \tau$}. 
    \end{cases}\label{eq:facesignature}
\end{equation}
The face order $<_f$ of two faces $\tau_1$, $\tau_2$ of $\sigma$ is defined as $\tau_1 <_f \tau_2$ if $\sig(\tau_1)<_{lex}\sig(\tau_2)$, i.e., $\sig(\tau_1)$ is lexicographically smaller than $\sig(\tau_2)$, where $\bot < \top$.
\end{definition}

Note that the empty ``simplex'' $\varnothing$ is the smallest in the face order of \cref{def:faceorder}, and $\sig(\{v_1,\dots,v_n\})$
is the largest. \cref{def:faceorder} has been slightly adapted from \cite[Sec.~13.2.1]{HerlihyKR13}, by using the more precise
notation $\sig(\tau)$ instead of just $\tau$. Informally, $\sig(\tau)$ just encodes the set of processes involved in (the vertices of) a simplex $\tau$. The face order in \cref{def:faceorder} can be used to show that,  for any dimension $d\geq 0$,  the $d$-skeleton, of any simplex is shellable. 
The proof of shellability for the skeletons  of pseudospheres provided in 
\cite[Thm.~13.3.6]{HerlihyKR13} uses a more elaborate order defined hereafter. 

\begin{definition}[Pseudosphere order {\cite[Def.~13.3.5]{HerlihyKR13}}]\label{def:porder}
Let $\phi_a=\bigl\{(p_i,\lambda_i)\mid i\in [n]\bigr\}$ and $\phi_b=\bigl\{(p_i,\mu_i)\mid i\in [n]\}$ be two facets of a pseudosphere $\Psi(\{(p_i,V_i) \mid i \in [n]\})$ where each $V_i$ is an ordered set. The order relation
$<_p$ orders these facets lexicographically by value, i.e.,  $\phi_a <_p \phi_b$ if there exists $\ell \in[n]$ such that 
$\lambda_i=\mu_i$ for every $1 \leq i < \ell$, and $\lambda_\ell < \mu_\ell$.
\end{definition}

It is known that every skeleton of a shellable complex
is shellable \cite{BW96}. To set the stage for later shellability
proofs, we will show explicitly that, for every $d\in\{0,\dots,n-1\}$, the $d$-skeleton of a pseudosphere
$\Psi(\{p_i,V_i) \mid i \in I\})$ is  shellable, using the
shelling order $<$ of its facets defined by 
\begin{equation}\label{eq:def-shell-order-for-PS}
\phi_a < \phi_b \iff 
(\phi_a <_f \phi_b)
\vee
\big(
(\sig(\phi_a)=\sig(\phi_b)) 
\wedge 
(\phi_a <_p \phi_b)
\big).
\end{equation}
Note that, in the formula above, we slightly
generalized the notation used in \cref{def:faceorder} and \cref{def:porder} since we order faces originating in two possibly \emph{different} $d$-simplices, rather than in a single $(n-1)$-simplex. Specifically, we replace
the original index set $[n]$ by the union of two $d$-subsets of $\{1,\dots,n\}$ appearing as the index set in $\phi_a$ and in~$\phi_b$. Note that
the index set is determined by $\names(\phi_a) \cup \names(\phi_b)$ only, and the
labels of the vertices are ignored.

\begin{theoremrep}[Shellability of skeletons of pseudospheres]\label{thm:shellabilityskelPS}
Let $\Psi=\Psi(\{(p_i,V_i) \mid i \in [n] \})$ be an $(n-1)$-dimensional pseudosphere. For every  $d\in\{0,\dots,n-1\}$, the $d$-skeleton of  $\Psi$ is shellable via the order $<$ defined in Eq.~\eqref{eq:def-shell-order-for-PS}.
\end{theoremrep}

\begin{proof}
Let $\phi_0,\phi_1,\dots$ be the facets of $\skel_d\bigl(\Psi(\{(p_i,V_i) \mid i \in [n]\})\bigr)$ ordered according to $<$. 
For any given pair of facets $\phi_a < \phi_b$ of the $d$-skeleton, we need to distinguish two cases: 

Case (1): There exists a smallest index
$\ell\in [n]$ such that $\sig(\phi_a)[\ell] \neq \sig(\phi_b)[\ell]$, we must have $\sig(\phi_a)[\ell]=\bot$ (i.e.,
$\phi_a$ contains a vertex $v_\ell$) and $\sig(\phi_b)[\ell]=\top$ (i.e., $\phi_b$ does not contain a vertex with index $\ell$). Because $\phi_a$ and $\phi_b$ both have dimension $d$, they contain $d+1$ vertices, so there must be some index
$\ell < m \leq n$ such that $\sig(\phi_a)[m]=\top$ and $\sig(\phi_b)[m]=\bot$ (i.e., contain a vertex $v_m'$). 
We now construct $\phi_c$ from $\phi_b$ by removing $v_m'$ and adding $v_\ell$. The resulting $\phi_c$ hence has the
signature
\[
\sig(\phi_c)[q] = \begin{cases} \sig(\phi_b)[q] & \mbox{if $q\neq \{\ell,m\}$},\\ \bot & \mbox{if $q=\ell$},\\ \top & \mbox{if $q=m$}.\end{cases}
\]
We now need to check the conditions (i) and (ii) stated in \cref{def:shellability}. First, since $\ell$ is the first index
on which the signatures of $\phi_c$ and $\phi_b$ differ and $\sig(\phi_c)[\ell]=\bot < \sig(\phi_b)[\ell]=\top$, we have $\phi_c < \phi_b$.
By construction, $\phi_b \cap \phi_c = \phi_b\setminus \{v_m'\}$ and $v_m' \not\in \phi_a$, so we must have 
$\phi_a \cap \phi_b \subseteq \phi_c \cap \phi_b$ and hence (i). Finally, $v_m'$ is the only vertex in $\phi_b$ not in $\phi_c$, so (ii) holds
as well.

Case (2): If $\sig(\phi_a)[q] = \sig(\phi_b)[q]$ for all $1 \leq q \leq n$, and $\phi_a <_p \phi_b$, then, according to
\cref{def:porder}, there is some smallest index $\ell$ such that $v_\ell=(p_\ell,\lambda_\ell)\in\phi_a$ is different from
$v_\ell'=(p_\ell,\mu_\ell)\in\phi_b$, with $\lambda_\ell < \mu_\ell$. To construct $\phi_c$, we replace $v_\ell' \in \phi_b$ by $v_\ell \in 
\phi_a$. This not only ensures $\phi_c < \phi_b$ and $|\phi_b\setminus \phi_c|=1$, and hence condition (ii), 
but also $\phi_b\cap \phi_a \subseteq \phi_b\cap \phi_c$ and thus condition (i).
\end{proof}

As our last basic ingredient, we provide a proof of the well-known but often quite informally
argued fact that $k$-set agreement is 
impossible if the protocol complex is too highly connected  (see, e.g., \cite[Thm.~10.3.1]{HerlihyKR13}).
Informally, a complex $\mathcal{K}$ is $k$-connected, if it does not contain
a ``hole'' of dimension $k$ or lower. For $k=0$, which captures the consensus
impossibility, for example, $\mathcal{K}$ must not be (path-)connected, i.e., 0-connected. More generally, 1-connectivity refers to the ability to contract 1-dimensional loops, 2-connectivity refers to the ability to contract 2-dimensional spheres, etc.

Let $\T = (\incomp, \outcomp, \Delta)$  be the $k$-set agreement task as specified in 
\cref{subsec:confcomp}. In particular, $\incomp = \Psi(\{(p_i, [k+1]) \mid i \in [n]\})$ is a pseudosphere.

\begin{definition}
	For every $J \subseteq [k+1]$, we define $\prot[J]$ as the minimal complex including  $\prot_\sigma$ for all $\sigma \in \Psi(\{(p_i, J) \mid i \in [n]\})$.
\end{definition}
	
\begin{theoremrep}\label{thm:imposs}
		If  $\prot[J]$ is $(\dim(J)-1)$-connected  for all $J \subseteq [k+1]$, then $k$-set agreement is not solvable with respect to $\prot$.
\end{theoremrep}
	
\begin{proof}
		Let us assume, for the purpose of contradiction, that the $k$-set agreement task $\T$ is solvable with respect to $\prot$. This implies there exists a simplicial map $\delta: \prot \to \outcomp$ that agrees with~$\Delta$. Let $\mathcal{K} = [k+1]$ viewed as a complex, and let $\Theta: \mathcal{K} \to 2^\prot$ be defined as $\Theta(J) = \prot[J]$ for every $J\subseteq [k+1]$ viewed as a simplex.
		
		We claim that $\Theta$ is a carrier map.
					Indeed, if $J' \subseteq J$, then $\Psi(\{(p_i, J') \mid i \in [n]\}) \subseteq \Psi(\{(p_i, J) \mid i \in [n]\})$. That is, every $\sigma\in\Psi(\{(p_i, J') \mid i \in [n]\})$ belongs to  $ \Psi(\{(p_i, J) \mid i \in [n]\})$, which implies $\prot[J'] \subseteq \prot[J]$.
		
		Thanks to Theorem~3.7.7(2) in \cite{HerlihyKR13}, since $\Theta$ is a carrier map, and since, for each $J$, $\prot[J]$ is $(\dim (J) - 1)$-connected, we get that $\Theta$ has a simplicial approximation $(\Div(\mathcal{K}) , g)$. That is:
		\begin{itemize}
			\item $\Div(\mathcal{K}) $ is a chromatic subdivision of $\mathcal{K}$,
			\item $g: \Div(\mathcal{K}) \to \prot$ is simplicial and chromatic, and 
			\item for every $J \subseteq [k+1]$, and every $\rho \in \Div(J)$, $g(\rho) \in \Theta(J)$, where $\Div(J)$ is the subdivision of the face $J$ of $\mathcal{K}$ induced by the global subdivision $\Div(\mathcal{K})$.
		\end{itemize}
		Let $f: \prot \to \partial\mathcal{K}$ be defined as $f = \text{val} \circ \delta$, where $\text{val}$ is the trivial  mapping that discards process IDs, and where $\partial\mathcal{K}$ is the boundary of $\mathcal{K}$. As the combination of two simplicial maps, we get that 
			$f$ is simplicial. 		
		Let $h: \Div(\mathcal{K}) \to \partial\mathcal{K}$ be defined as $h = f \circ g$.

			We claim that $h$ is a Sperner coloring of $\Div(\mathcal{K})$.
					For $J \subseteq [k+1]$ and $\rho \in \Div(J)$, $h(\rho) = f \circ g(\rho)$. Since $g(\rho) \in \Theta(J) = \prot[J]$, there exists $\sigma \in  \Psi(\{p_i, J) \mid i \in [n]\})$ such that $g(\rho) \in \prot_\sigma$. 
			 The validity condition implies when processes in $\sigma$ run alone, each output must be in $\text{val}(\sigma)\subseteq J$, that is, for every $\tau\in \prot_\sigma$, $\text{val}(\delta(\tau))\subseteq J$. Therefore		 
			 $\text{val} \circ \delta(g(\rho)) \subseteq J$.
				
		Since there are no Sperner colorings of $\Div(\mathcal{K})$ that can avoid
        facets that are colored with $k+1$ colors, we get a contradiction to the assumption that $\T$ is solvable w.r.t. $\prot$. 
	\end{proof}

\subsection{The round-by-round connectivity analysis of {\cite{HerlihyKR13}} revisited}
\label{sec:layer}

During our attempts to generalize the round-by-round topological modeling and analysis of \cite[Ch.~13]{HerlihyKR13} for the complete graph to
arbitrary networks, we figured out 
that the original analysis in \cite{HerlihyKR13} is not entirely correct. More specifically,
the analysis there assumes that the involved carrier maps are rigid, which cannot be guaranteed in the
executions considered for synchronous $k$-set agreement where exactly $k$ processes crash per round. 
We hence provide here a
revised analysis for the case of  the complete graph (i.e., the
clique of our $n$ processes), 
where rigidity is replaced by a weaker condition (\cref{def:qconnected} below). 
Unfortunately, this change forces us to re-phrase and re-prove most of the lemmas of the
original analysis. Moreover, we have to add a non-trivial strictness proof 
in \cref{lem:allshellable} below, which was lacking in \cite[Ch.~13]{HerlihyKR13}.
The next definition relaxes Definition 13.4.1 in~\cite{HerlihyKR13} by removing the rigidity condition.

\begin{definition}[$q$-connected carrier map]\label{def:qconnected}
    Let $q\geq 0$ be an integer, and let $\mathcal{L}$ and $\mathcal{K}$ be simplicial complexes, where $\mathcal{K}$ is pure. A carrier map $f: \mathcal{K} \rightarrow 2^{\mathcal{L}}$ is $q$-connected  if it is strict, and, for every $\sigma \in \mathcal{K}$, the simplicial complex $f(\sigma)$ is $(q- \codim(\sigma))$-connected. 
\end{definition}

The following \cref{lem:connected_image} is exactly the same as \cite[Lem.~13.4.2]{HerlihyKR13}. Indeed, thanks to our new \cref{def:qconnected} of a $q$-connected 
carrier map, the original proof holds literally as well, as faces with 
$\codim(\sigma)>q$ are not considered anyway.

\begin{lemma} [{\cite[Lem.~13.4.2]{HerlihyKR13}}] \label{lem:connected_image}
    For every integer $q\geq 0$, if $\mathcal{K}$ is a pure shellable simplicial complex, and $f: \mathcal{K} \rightarrow 2^{\mathcal{L}}$ is a $q$-connected carrier map, then the simplicial complex $f(\mathcal{K})$ is $q$-connected.
\end{lemma}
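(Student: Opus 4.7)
The plan is to prove the lemma by induction on the pair (dimension, number of facets) of $\mathcal{K}$ (equivalently, on its total number of simplices), using the shelling order $\phi_0,\dots,\phi_N$ granted by \cref{def:shellablecomplex} and the notation $\mathcal{K}_i := \bigcup_{j=0}^{i} \phi_j$. The statement to be proved inductively is: for every $q\geq 0$, every pure shellable complex $\mathcal{K}'$, and every $q$-connected carrier map out of $\mathcal{K}'$, the image is $q$-connected. The base case $N=0$ is immediate, since the single facet $\phi_0$ has $\codim(\phi_0)=0$, and so $f(\phi_0)$ is $q$-connected directly from \cref{def:qconnected}.

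For the inductive step I would decompose $f(\mathcal{K}) = f(\mathcal{K}_{N-1}) \cup f(\phi_N)$ and invoke the classical gluing principle for connectivity: if $A$ and $B$ are $q$-connected subcomplexes whose intersection $A\cap B$ is $(q-1)$-connected, then $A\cup B$ is $q$-connected. Here $A = f(\mathcal{K}_{N-1})$ is $q$-connected by the outer induction hypothesis applied to the pure shellable complex $\mathcal{K}_{N-1}$ (the truncated shelling order witnesses shellability and it has strictly fewer facets), while $B = f(\phi_N)$ is $q$-connected directly from \cref{def:qconnected} because $\codim(\phi_N)=0$. To handle the intersection, I would first upgrade the strictness clause of \cref{def:carrmaps} from pairs of simplices to pairs of subcomplexes: extending $f$ to subcomplexes by $f(\mathcal{M}) = \bigcup_{\sigma\in\mathcal{M}}f(\sigma)$, strictness implies $f(\mathcal{M}_1\cap \mathcal{M}_2) = f(\mathcal{M}_1)\cap f(\mathcal{M}_2)$, so that $A\cap B = f(\mathcal{K}_{N-1}\cap \phi_N)$.

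It then remains to show that $f(\mathcal{K}_{N-1}\cap \phi_N)$ is $(q-1)$-connected. Writing $d=\dim(\phi_N)$, \cref{def:shellablecomplex} guarantees that $\mathcal{K}_{N-1}\cap \phi_N$ is a pure subcomplex of $\partial\phi_N$ of dimension $d-1$ formed as a union of $(d-1)$-faces of $\phi_N$, and any such subcomplex of the boundary of a simplex is itself shellable (a standard fact I would quote rather than reprove). Any face $\sigma$ of codimension $c$ in $\mathcal{K}_{N-1}\cap \phi_N$ has codimension $c+1$ in $\mathcal{K}$, so by hypothesis $f(\sigma)$ is $(q-1-c)$-connected; hence the restriction $f|_{\mathcal{K}_{N-1}\cap \phi_N}$ is a $(q-1)$-connected carrier map over a pure shellable complex of strictly smaller dimension, to which the inductive hypothesis applies and delivers the required $(q-1)$-connectivity.

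The main obstacle is the gluing principle for $q$-connectedness itself, whose proof via Mayer--Vietoris or the nerve theorem is classical and available from \cite[Ch.~3]{HerlihyKR13}; I would quote it rather than reprove it. The key conceptual point is that this entire argument survives the replacement of rigidity by the weaker \cref{def:qconnected}: we never need $f(\sigma)$ to be pure or equidimensional to $\sigma$, only that it be sufficiently connected, which is precisely what the restriction step exploits when descending to $\mathcal{K}_{N-1}\cap \phi_N$ with connectivity dropped from $q$ to $q-1$.
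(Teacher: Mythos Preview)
Your proposal is correct and follows the same line as the proof in \cite[Lem.~13.4.2]{HerlihyKR13}: induction along the shelling order, the gluing lemma for connectivity applied to $f(\mathcal{K}_{N-1})$ and $f(\phi_N)$, strictness to identify $f(\mathcal{K}_{N-1})\cap f(\phi_N)$ with $f(\mathcal{K}_{N-1}\cap\phi_N)$, and a dimension drop on the shellable intersection $\mathcal{K}_{N-1}\cap\phi_N$ to invoke the induction hypothesis at level $q-1$. The paper does not supply an independent proof here; it simply observes that the original argument from \cite{HerlihyKR13} goes through verbatim under the weakened \cref{def:qconnected}, precisely because faces with $\codim(\sigma)>q$ never enter the argument --- which is exactly the point you make in your closing paragraph.
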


In Definition~13.4.3 in~\cite{HerlihyKR13},  a shellable carrier map $f: \mathcal{K} \rightarrow 2^{\mathcal{L}}$ was defined as a \emph{rigid} and strict carrier map such that, for each $\sigma \in \mathcal{K}$, the complex $f(\sigma)$ is shellable. Since the carrier maps we study later on are not rigid, we need to weaken this definition as follows.

\begin{definition}[$q$-shellable carrier map]\label{def:qshellable}
    Let $q\geq 0$ be an integer. A  carrier map $f: \mathcal{K} \rightarrow 2^{\mathcal{L}}$ is $q$-shellable if it is strict, and, for each $\sigma \in \mathcal{K}$ satisfying $\codim(\sigma) \leq q+1$, the complex $f(\sigma)$ is 
    shellable (and hence pure).
\end{definition}
Note carefully that a $q$-shellable carrier map only guarantees that the image
$f(\sigma)$ of a given \emph{single} face $\sigma \in \mathcal{K}$ is shellable.
We will use the term \emph{local shellability} if we need to explicitly stress this 
restriction.

The following is an appropriately refined version of Lemma 13.4.4 in \cite{HerlihyKR13}. 

\begin{lemmarep}\label{lem:twochain}
    Let $q\geq 0$ be an integer, and let us consider a sequence of pure complexes, and carrier maps
    $
    \mathcal{K}_0 \xrightarrow{f_0} \mathcal{K}_1 \xrightarrow{f_1} \mathcal{K}_2,
    $
    where $f_0$ is a $q$-shellable carrier map, $f_1$ is a $q$-connected carrier map, and, for every $\sigma \in \mathcal{K}_0$ with $\codim(\sigma)\leq q+1$, $\codim(\sigma)\geq \codim(f_0(\sigma))$. Then,  $f_1 \circ f_0$ is a $q$-connected carrier map.
\end{lemmarep}

\begin{proof}
    First, $g=f_1 \circ f_0$ is a strict carrier map because it is a composition of two strict carrier maps. It remains to check that $g(\sigma)$ is $(q- \codim \sigma)$-connected.
    Considering an arbitrary $\sigma \in \mathcal{K}_0$ satisfying $\codim(\sigma) \leq q+1$, we have:
    \begin{enumerate}
        \item[(i)]  $f_0(\sigma)$ is shellable and pure,
        \item[(ii)] $\codim(f_0(\sigma)) \leq \codim(\sigma)$,
        \item[(iii)] for every simplex $\tau \in f_0(\sigma)$, the co-dimension  of $\tau$ in $f_0(\sigma)$, denoted by $\codim(\tau,f_0(\sigma))$, satisfies   $\codim(\tau,f_0(\sigma))=\dim(f_0(\sigma))-\dim(\tau)$, and 
        \item[(iv)] $\codim(\tau,\mathcal{K}_1) = \codim(\tau,f_0(\sigma)) + \codim(f_0(\sigma),\mathcal{K}_1)$.
    \end{enumerate}
    Let $q' = q - \codim(f_0(\sigma))$. Since $f_1$ is a $q$-connected carrier map, $f_1(\tau)$ is $(q-\codim(\tau,\mathcal{K}_1))$-connected. Equivalently, $f_1(\tau)$ is $(q' - \codim(\tau,f_0(\sigma)))$-connected. By applying Lemma~\ref{lem:connected_image}, $f_1(f_0(\sigma))$ is $q'$ connected. Thus, $g(\sigma)$ is $(q- \codim \sigma)$-connected since $q' \leq q- \codim(\sigma)$ by~(ii). 
 \end{proof}

Similarly, we need a refined version of Lemma 13.4.5 in \cite{HerlihyKR13}. 

\begin{lemmarep}\label{lem:ellchain}
    Let $q\geq 0$ and $\ell\geq 0$ be integers, and let us consider a sequence of pure complexes, and carrier maps
    $
    \mathcal{K}_0 \xrightarrow{f_0} \mathcal{K}_1 \xrightarrow{f_1} \dots \xrightarrow{f_\ell} \mathcal{K}_{\ell+1},
    $
    such that the carrier maps $f_0,\dots,f_{\ell-1}$ are $q$-shellable, the carrier map $f_\ell$ is $q$-connected, and, for every $i\in \{0,\dots,\ell-1\}$, and every $\sigma \in \mathcal{K}_i$ with $\codim(\sigma)\leq q+1$, $\codim(\sigma)\geq \codim(f_i(\sigma))$. Then, $f_\ell \circ \dots \circ f_0$ is a $q$-connected carrier map.
\end{lemmarep}

\begin{proof}
We use induction on $k\geq 0$ to prove that $g_k=f_{\ell} \circ \dots \circ f_{\ell-k}$ is a $q$-connected carrier map. Note that $g_\ell=f_\ell \circ \dots \circ f_0$.
For the base case $k=0$, the claim is immediate from our assumption on $f_\ell$. For the induction step from $k$ to $k+1$, we note that
$g_{k+1}=f_{\ell} \circ \dots \circ f_{\ell-k-1}=g_k \circ f_{\ell-k-1}$. Since $f_{\ell-k-1}$ is $q$-shellable and guarantees $\codim(\sigma)\geq \codim(f_{\ell-k-1}(\sigma))$ for all $\sigma \in \mathcal{K}_{\ell-k}$ with $\codim(\sigma)\leq q+1$ by our assumptions, and since
$g_k$ is $q$-connected by the induction hypothesis, we can apply \cref{lem:twochain}, which ensures that $g_{k+1}=g_k \circ f_{\ell-k-1}$ is $q$-connected 
as needed.
\end{proof}

We want to prove a variant of \cite[Thm.~13.5.7]{HerlihyKR13} adapted to our refined modeling. 
For some $N$ to be determined later, consider a sequence
\begin{equation}
    \mathcal{K}_0 \xrightarrow{f_0} \mathcal{K}_1 \xrightarrow{f_1} \ldots \xrightarrow{f_{N-1}} \mathcal{K}_N \xrightarrow{id} \mathcal{K}_N \label{eq:synchchain}
\end{equation}
where $\mathcal{K}_0$ is the (shellable) input complex for $k$-set agreement, each $\mathcal{K}_i$ is the image of $\mathcal{K}_{i-1}$ under $f_{i-1}$ (i.e., $f_{i-1}$ is surjective), and 
\begin{equation}
    f_i(\sigma) = \bigcup_{\tau \in \Face_{n-1-k(i+1)} \sigma} \Psi(\tau,[\tau,\sigma]),\label{def:fi}
\end{equation}
where $\Psi(\tau,[\tau,\sigma])$ denotes the pseudosphere obtained by independently labeling
the processes in $\names(\tau)$ with one of the simplices in $\{\rho \mid \tau \subseteq \rho \subseteq\sigma\}$. Intuitively, a vertex $(p_j,\rho)$ represents the situation where $p_j$ receives
exactly the information in $\rho$ in round $i+1$.
As in \cite[Sec.~13.5.2]{HerlihyKR13}, the carrier map $f_i$ applied to $\sigma$ is the 
execution map representing round $i+1$ starting from some face 
$\sigma\in\mathcal{I}$, where the communication graph is a clique, and exactly~$k$ 
additional processes crash in round $i+1$ (i.e., a total of $i\cdot k$ processes have already failed during the $i$ previous rounds).

\begin{lemmarep}\label{lem:allshellable}
    For every $i\in \{0,\dots,N-1\}$,  $f_i$ is $(k-1)$-shellable, and, for every $\sigma\in \mathcal{K}_i$ with $\codim(\sigma)\leq k$, $\codim(f_i(\sigma))\leq \codim(\sigma)$. 
    \end{lemmarep}

\begin{proof}
    Lemma 13.5.5 in \cite{HerlihyKR13} shows that one can define a shelling order on the facets
    of $f_i(\sigma)$, which is a pure complex by \cref{def:fi}, for every $\sigma \in \mathcal{K}_i$ that yields $f_i(\sigma)\neq\varnothing$. All
    that remains to be proved is hence strictness, and the additional condition $\codim f_i(\sigma)\leq \codim \sigma$
    for every $\sigma\in \mathcal{K}_i$ with $\codim\sigma\leq k$.

    For the latter, note that, for every $\sigma \in \mathcal{K}_i$, if  $\dim(\sigma) < n-k(i+1)$, then $f_i(\sigma) = \varnothing$. 
    Since \cref{def:fi} implies that $\dim(\mathcal{K}_{i+1}) = \dim(\mathcal{K}_i)-k$, we can indeed guarantee
    $\codim (f_i(\sigma)) \leq \codim (\sigma)$ for every simplex $\sigma$ in $\mathcal{K}_i$ satisfying $\codim (\sigma) \leq k$.
    
    For strictness, let $\phi_1, \phi_2$ be simplices of $\mathcal{K}_i$, and let $\phi=\phi_1 \cap \phi_2$. We prove that $f_i(\phi) = f_i(\phi_1) \cap f_i(\phi_2)$. We have
\[
    f_i(\phi_1) = \bigcup_{\tau \in \Face_{n-1-k(i+1)} \phi_1} \Psi(\tau, [\tau,\phi_1]), 
    \;\;\mbox{and}\; \;
    f_i(\phi_2) = \bigcup_{\tau \in \Face_{n-1-k(i+1)} \phi_2} \Psi(\tau, [\tau,\phi_2]).
\]    
    If $f_i(\phi_1) \cap f_i(\phi_2) = \varnothing$, then $f_i(\phi) = \varnothing$ by the monotonicity of carrier map $f_i$. So let us assume that  $f_i(\phi_1) \cap f_i(\phi_2) \neq \varnothing$. Let us then consider an arbitrary simplex $\sigma \in f_i(\phi_1) \cap f_i(\phi_2)$. There exists $\tau \in \Face_{n-1-k(i+1)} \phi_1$ and $\tau' \in \Face_{n-1-k(i+1)} \phi_2$ such that 
    \[
    \sigma  \in \Psi(\tau, [\tau,\phi_1]) \cap \Psi(\tau', [\tau',\phi_2]).
    \]
    This implies that there exists a simplex $\tau'' \subseteq \tau \cap \tau' \subseteq \phi_1 \cap \phi_2$ such that
     \[
    \sigma  \in \Psi(\tau'', [\tau,\phi_1]) \cap \Psi(\tau'', [\tau',\phi_2]) \subseteq \Psi(\tau'', [\tau,\phi_1] \cap [\tau',\phi_2]).
    \] 
    We claim that $\tau$ and $\tau'$ are faces of $\phi$. Indeed, if $\tau$ (which is a face of $\phi_1$) is not a face of $\phi$, then $\tau$ is not a face of $\phi_2$. Then,
    $[\tau,\phi_1] \cap [\tau',\phi_2] = \varnothing$, which contradicts the fact that $f_i(\phi_1) \cap f_i(\phi_2) \neq \varnothing$. Consequently,
  \begin{align}  
      \sigma \in \Psi(\tau'', [\tau,\phi_1] \cap [\tau',\phi_2]) &\subseteq \Psi(\tau'', [\tau,\phi] \cap [\tau',\phi]) \nonumber\\
      &=  \Psi(\tau'', [\tau \cup \tau',\phi])\label{eq:psequal}\\
     &\subseteq \bigcup_{\rho \in \Face_{n-1-k(i+1)} \phi} \Psi(\tau'', [\rho,\phi])\nonumber\\
    &\subseteq \bigcup_{\rho \in \Face_{n-1-k(i+1)} \phi} \Psi(\rho, [\rho,\phi]) \nonumber\\
    &= f_i(\phi).\label{eq:containment}
    \end{align}
    where \cref{eq:psequal} follows from the fact that $\tau$ (resp.,\ $\tau'$) is a face of every simplex in $[\tau,\phi]$ (resp., $[\tau',\phi]$).
    \cref{eq:containment} implies that $f_i(\phi_1) \cap f_i(\phi_2) \subseteq f_i(\phi)$, from which $f_i(\phi_1) \cap f_i(\phi_2) = f_i(\phi)$
    follows by monotonicity of carrier maps.
\end{proof}

Since the identity map $id$ in \cref{eq:synchchain} is trivially a $q$-connected carrier map satisfying \cref{def:qconnected}, we can apply \cref{lem:ellchain} for $N=\lfloor t/k\rfloor$ equal to the maximum number of rounds where $k$ processes can crash to immediately get: 

\begin{lemma}\label{lem:carrierchain} For $N=\lfloor t/k\rfloor$, 
    $f_{N-1}\circ \dots \circ f_0:\mathcal{K}_0\to \mathcal{K}_N$ is a $(k-1)$-connected carrier map. 
\end{lemma}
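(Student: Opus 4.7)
The plan is to apply \cref{lem:ellchain} directly to the chain in Eq.~\eqref{eq:synchchain}, setting $q=k-1$ and $\ell=N$. So the task boils down to checking that every hypothesis required by \cref{lem:ellchain} is in place.

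First, for the intermediate maps $f_0,\dots,f_{N-1}$ I would simply invoke \cref{lem:allshellable}, which already provides both of the facts needed: each $f_i$ is $(k-1)$-shellable, and for every $\sigma\in\mathcal{K}_i$ with $\codim(\sigma)\leq k=q+1$ one has $\codim(f_i(\sigma))\leq\codim(\sigma)$. These are exactly the two conditions \cref{lem:ellchain} demands from the first $\ell$ maps of the chain.

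Second, I would verify that the trailing identity map $id:\mathcal{K}_N\to\mathcal{K}_N$ is $(k-1)$-connected in the sense of \cref{def:qconnected}. Strictness of $id$ is immediate from $id(\sigma\cap\kappa)=\sigma\cap\kappa=id(\sigma)\cap id(\kappa)$. Moreover, for every $\sigma\in\mathcal{K}_N$, the subcomplex $id(\sigma)$ is the closure of a single simplex, hence contractible and therefore $m$-connected for every $m\geq -1$; in particular it is $((k-1)-\codim(\sigma))$-connected.

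Third, with all hypotheses checked, \cref{lem:ellchain} gives that the composition $id\circ f_{N-1}\circ\cdots\circ f_0$ is $(k-1)$-connected, and since composing with $id$ does not change the map, this is exactly the statement that $f_{N-1}\circ\cdots\circ f_0:\mathcal{K}_0\to\mathcal{K}_N$ is a $(k-1)$-connected carrier map. The only conceptual subtlety worth flagging is that the choice $N=\lfloor t/k\rfloor$ is precisely what allows the construction of $f_i$ in \cref{def:fi} to remain meaningful for every $i\in\{0,\dots,N-1\}$: the cumulative number $kN\leq t$ of crashed processes never exceeds the failure budget, so each $f_i$ genuinely models one round in which exactly $k$ fresh processes crash. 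Beyond this bookkeeping check, the proof is essentially a bookkeeping invocation of the machinery already built in \cref{lem:allshellable} and \cref{lem:ellchain}, so I do not anticipate any hard step.
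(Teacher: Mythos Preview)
Your proposal is correct and is exactly the approach taken in the paper: the lemma is stated there as an immediate consequence of \cref{lem:ellchain} applied to the chain~\eqref{eq:synchchain} with $q=k-1$, using \cref{lem:allshellable} for the $f_i$'s and the observation that the trailing identity map is trivially $(k-1)$-connected. Your write-up simply spells out the hypothesis-checking that the paper leaves implicit.
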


\begin{theoremrep}\label{thm:crashlowerbound}
For integers $t\geq 0$ and $k\geq 1$,
let $\mathcal{P}$ be the protocol complex of $k$-set agreement after $N=\lfloor\frac{t}{k}\rfloor$ rounds in the synchronous $t$-resilient model with the complete communication graph, where exactly $k$ processes crash per round. For every $J\subseteq [k+1]$, $\mathcal{P}[J]$ is a $(\dim(J)-1)$-connected subcomplex. 
\end{theoremrep}

\begin{proof}
    Let $J\subseteq [k+1]$. For every $i=1,\dots,\lfloor\frac{t}{k}\rfloor$, let $\mathcal{P}^{(i)}[J]$ be the protocol complex after $i$ rounds, starting from $\mathcal{P}^{(0)}[J]=\mathcal{I}[J]=\Psi(\{(p_i, J) \mid i \in [n]\})$ and $\mathcal{P}[J]=\mathcal{P}^{(N)}[J]$. By construction, $\mathcal{K}_0=\mathcal{P}^{(0)}[J]=\mathcal{I}[J], \mathcal{K}_1 = \mathcal{P}^{(1)}[J], \dots, \mathcal{K}_N = \mathcal{P}^{(N)}[J]$ 
    are the complexes induced by the carrier map $f_i:\mathcal{K}_{i}\to \mathcal{K}_{i+1}$ given by~\cref{def:fi}, which crashes exactly $k$ processes in
    round $i+1$, $0 \leq i < N$. Consider the sequence 
    \begin{equation}
    \mathcal{K}_{0} \xrightarrow{f_0} \mathcal{K}_{1} \xrightarrow{f_1} \ldots \xrightarrow{f_{N-1}} \mathcal{K}_{N} \xrightarrow{id} \mathcal{K}_{N}. \label{eq:carrierchainclique}
    \end{equation}
    By Lemma~\ref{lem:carrierchain}, we have $f_{N-1}\circ \dots \circ f_0:\mathcal{K}_0\to \mathcal{K}_N$ is a $(k-1)$-connected carrier map. Since the input complex $\mathcal{I}[J]$ is a pseudosphere, and hence pure and shellable, Lemma~\ref{lem:connected_image} implies that $\mathcal{K}_N=\mathcal{P}[J]$ is $(k-1)$-connected. Therefore, $\mathcal{P}[J]$ is also $(\dim(J)-1)$-connected. 
\end{proof}

By combining Theorems~\ref{thm:imposs} and~\ref{thm:crashlowerbound}, we finally get the well-known lower bound: 

\begin{corollary}
    The $k$-set agreement task cannot be solved in less than $\lfloor\frac{t}{k}\rfloor+1$ rounds in the synchronous $t$-resilient model. 
\end{corollary}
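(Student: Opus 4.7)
The plan is to derive the corollary by a direct contradiction argument combining the two theorems just established, so almost all of the work has already been done. Suppose, for the sake of contradiction, that some deterministic algorithm solves $k$-set agreement in only $N=\lfloor t/k\rfloor$ rounds in the synchronous $t$-resilient model over the complete communication graph $K_n$. Since $Nk\leq t$, the failure pattern in which exactly $k$ (distinct) processes crash — cleanly — in each of the first $N$ rounds is admissible; this is precisely the class of executions whose protocol complex evolution is captured by the carrier maps $f_0,\dots,f_{N-1}$ from \cref{def:fi}, culminating in the complex $\mathcal{P}=\mathcal{K}_N$ of \cref{thm:crashlowerbound}.

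Next, I would invoke \cref{thm:crashlowerbound} to conclude that for every $J\subseteq[k+1]$, the subcomplex $\mathcal{P}[J]$ is $(\dim(J)-1)$-connected. Plugging this into \cref{thm:imposs} immediately yields that $k$-set agreement is not solvable with respect to $\mathcal{P}$, i.e., there is no chromatic simplicial map $\delta\colon\mathcal{P}\to\outcomp$ agreeing with the $k$-set agreement carrier map $\Delta$. This contradicts the assumed existence of an $N$-round algorithm, so any solution requires at least $\lfloor t/k\rfloor+1$ rounds.

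The one conceptual subtlety worth spelling out is that the executions analyzed by \cref{thm:crashlowerbound} form only a strict subset of the $t$-resilient executions (exactly $k$ crashes per round, for $N$ rounds), so the full $t$-resilient protocol complex $\mathcal{P}^{\mathrm{full}}$ is a super-complex of $\mathcal{P}$. However, a chromatic simplicial decision map on $\mathcal{P}^{\mathrm{full}}$ restricts to one on $\mathcal{P}$, so impossibility on $\mathcal{P}$ propagates to impossibility on $\mathcal{P}^{\mathrm{full}}$. I do not anticipate any genuine obstacle: the argument is book-keeping, and all of the heavy topological machinery — shellability of the pseudosphere input complex, the $(k-1)$-shellability of each $f_i$ (\cref{lem:allshellable}), and the composition argument (\cref{lem:ellchain,lem:carrierchain}) — has already been carried out.
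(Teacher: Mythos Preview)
Your proposal is correct and matches the paper's approach exactly: the paper simply states that the corollary follows ``by combining Theorems~\ref{thm:imposs} and~\ref{thm:crashlowerbound}'', and you have spelled out precisely that combination (together with the submodel-restriction subtlety, which the paper leaves implicit).
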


\subsection{Round-by-round connectivity analysis for arbitrary networks}
\label{sec:ourgeneralizedlayeredanalysis}

It is possible to adapt the round-by-round topological analysis of \cref{sec:layer}
to arbitrary directed communication graphs $G=(V,E)$, which may known to the processes 
and could even be different in each round.
Making this analysis work requires several twists, however, which we will describe in this section.

First of all, the pseudosphere $\Psi(\tau,[\tau,\sigma])$
appearing in the original definition  of the carrier map $\Psi_i(\sigma)$ stated in Eq.~(13.5.1) in \cite{HerlihyKR13} (i.e., in
$f_i(\sigma)$ in \cref{def:fi}) needs to be redefined. We recall that, starting from state~$\sigma$, the original pseudosphere 
 construction in \cref{def:fi} lets every process pick its values arbitrarily from the set 
$[\tau,\sigma]=\{\rho\mid \tau \subseteq \rho \subseteq \sigma\}$ where $\tau \in \Face_{n-1-k(i+1)} \sigma$.
In the case of an arbitrary communication graph~$G$, for a face $\rho\subseteq \sigma$, we need to restrict our attention to
what a process  hears from the members of $\rho$ at round~$i$. With $\In_q(G_i)=\{p\mid (p,q) \in E(G_i)\}$ 
denoting the in-neighborhood of process $q$ in the graph $G_i$ used in round $i$, 
i.e., the set of processes $q$ could
hear-of in round~$i$ (provided they did not crash), we define it \emph{view} as
\begin{equation}
\view_q(\rho) = \bigl\{(p_p,\lambda_p) \in \rho \mid p \in \In_q(G_i) \bigr\}.\label{def:vqrho}
\end{equation}
That is, $\view_q(\rho)$ is a face of $\rho$ that contains at least the vertex $\{(P_q,\lambda_q)\}$, and $\view_q(\rho)\subseteq \view_q(\rho')$ for any $\rho \subseteq \rho' \subseteq \sigma$. 
Let us then define
\begin{equation}
\Views_q = \bigl\{\view_q(\rho)\mid \tau \subseteq \rho \subseteq \sigma\bigr\}.\label{eq:SetVq}
\end{equation}

Re-using the framework and concepts introduced for the layer-based analysis of 
fully connected communication graphs in \cref{sec:layer},
it is tempting to define a generalization of the carrier map in \cref{def:fi} 
defined as 
\begin{equation}
h_i(\sigma) = \bigcup_{\tau \in \Face_{n-1-k(i+1)} \sigma} \Psi\bigl( p,[\view_p(\tau), \view_p(\sigma)] \mid p \in \names(\tau) \bigr)\label{eq:wrongfi}.
\end{equation}
However, this definition would result in a carrier map that is not strict. We explain the
problem for the intersection of two facets, albeit exactly the same problem can also happen 
for the intersection of (sufficiently large) faces. The problem is that two facets $\phi_1, \phi_2 \in \mathcal{K}_i$
with an intersection $\phi=\phi_1 \cap \phi_2$  might admit two facets $\psi_1 \in h_i(\phi_1)$ and
$\psi_2 \in h_i(\phi_2)$ with the property that 
\begin{itemize}
    \item $R_1=\names(\psi_1)\setminus\names(\phi) \neq \varnothing$ or $R_2=\names(\psi_2)\setminus\names(\phi)\neq \varnothing$, and 
    \item $\psi=\psi_1\cap\psi_2 \neq \varnothing$. (Note that $\psi \in h_i(\phi_1) \cap h_i(\phi_2)$ and $\names(\psi) \subseteq \names(\phi)$.)
\end{itemize}
It follows that the processes in $R_1 \cup R_2$ cannot be connected to any process $q \in \names(\psi)$. 
Note that we will subsequently call a pair of faces $(\sigma_1,\sigma)$ a \emph{dangerous pair}, if there might
exist another pair of faces $(\sigma_2,\sigma)$ with $\sigma_2\neq \sigma_1$ that admits two facets $\psi_1 \in 
h_i(\sigma_1)$ and $\psi_2 \in h_i(\sigma_2)$ that satisfy the two properties above; if $\sigma$ is clear
from the context, we call the face $\sigma_1$ of a dangerous pair $(\sigma_1,\sigma)$ a \emph{dangerous face}.

In the case where $\dim(\phi)\geq n-1-k(i+1)$, the simplex $\psi_1$ (resp., $\psi_2$) must contain $S_1 \subseteq \names(\phi)$  alive processes from $\phi$ with $|S|=n-1-(i+1)k-|R_1|$ (resp., 
$S_2\subseteq \names(\phi)$ alive processes from $\phi$ with $|S_2|=n-1-(i+1)k-|R_2|$). Note that any process $p \in S_1 \cup S_2$
that is connected to any process $q \in \names(\psi)$ must satisfy $p \in S_1 \cap S_2$. 
The remaining processes $C_1=\names(\phi)\setminus S_1$
(resp., $C_2=\names(\phi)\setminus S_2$) must have crashed in $\psi_1$ (resp., $\psi_2$).
If a process $p' \in C_1 \cup C_2$ is connected to any process $q \in \names(\psi)$, and does not send a message to
$q$ when it crashes, then $p' \in C_1 \cap C_2$ must hold, and $p'$ must behave identically w.r.t. $q$ both in $\psi_1$ and $\psi_2$. In that case, it might be impossible to find a facet $\psi' \in h_i(\phi)$ that contains $\psi$.
For example, if $\dim(\phi)=n-1-k(i+1)$ holds, then $p'$ would need to be alive in $\psi'$, and hence heard-of
by $q$. For the case $\dim(\phi) < n-1-k(i+1)$, we even have $\psi\neq \varnothing$ but $h_i(\phi)=\varnothing$.
Consequently, in both cases, $h_i(\phi_1) \cap h_i(\phi_2) \subseteq h_i(\phi)$ cannot hold, so that monotonicity
of the carrier map $h_i$ cannot ensure $h_i(\phi_1) \cap h_i(\phi_2) = h_i(\phi)$. 

\medskip

We thus need a refined definition for $f_i(\sigma)$ for proper faces that can be proven to be strict.
Just throwing away every face $\tau$ in \cref{eq:wrongfi} resulting from a dangerous pair
that causes strictness to fail does not work, because this would result in $f_i(\phi)=\varnothing$ for every facet 
$\phi \in \mathcal{K}_i$. Consequently, we need to add the missing smaller faces to \cref{eq:wrongfi}, by
suitably defining $f_i(\sigma)$ also for proper faces $\sigma \in \mathcal{K}_i$. This is accomplished
by \cref{eq:generalizedfi} below. Informally, our new $f_i(\sigma)$ adds pseudospheres for additional facets of $\mathcal{K}_{i+1}$ to the original $h_i(\sigma)$ in \cref{eq:wrongfi}, which cover all dangerous faces.

\begin{definition}[A carrier map for arbitrary graphs]\label{eq:generalizedfi}
For any pair of faces $\sigma, \xi \in \mathcal{K}_i$ with $\sigma \subseteq \xi$, let $T_\sigma$ be the set of facets $\phi' \in \mathcal{K}_i$ satisfying $\sigma \subseteq \phi'$, and 
\begin{align}
P_{\sigma}(\xi) &= \bigl\{\tau\in \Face_{n-1-k(i+1)} \xi \mid \mbox{$\tau=\sigma\cup\theta$, where $\theta \subseteq \xi\setminus\sigma$, $\theta\neq\varnothing$ with} \nonumber\\ 
&\qquad \exists q \in \names(\rho), \; \forall p \in \names(\theta): \; p\not\in \In_q(G_i) \bigr\}. \label{eq:setPsigma}
\end{align}
We call $(\xi,\sigma)$ a \emph{dangerous pair} (and $\xi$ a \emph{dangerous face}, if $\sigma$ is clear from the context) if $P_\sigma(\xi)\neq\varnothing$.

The generalized carrier map $f_i: \mathcal{K}_{i} \rightarrow \mathcal{K}_{i+1}$ is defined as
\begin{equation}
f_i(\sigma)  = h_i(\sigma) \cup \bigcup_{\substack{\sigma' \subseteq \sigma\\\sigma'\neq\varnothing}} \bigcup_{\phi \in T_{\sigma'}} \bigcup_{\tau \in P_{\sigma'}(\phi)} \Psi\bigl( p,[\view_p(\tau), \view_p(\phi)] \mid p \in \names(\tau) \bigr).\label{eq:secondline}
\end{equation}
\end{definition}

The following properties of the set of dangerous faces $P_{\sigma}(\xi)$ 
follow immediately from the definition:

\begin{claim}\label{claim:dangerousfaces}
The set of dangerous faces $P_{\sigma}(\xi)$ has the following properties:
\begin{itemize}
    \item $P_{\sigma}(\xi)$ is monotonic in $\xi$, i.e., 
$P_{\sigma}(\xi')\subseteq P_{\sigma}(\xi)$ for every 
$ \sigma \subseteq \xi' \subseteq \xi $. 
    \item $P_{\sigma}(\sigma)=\varnothing$ for every $\sigma$.
\end{itemize}
\end{claim}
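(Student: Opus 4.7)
The plan is to verify each of the two items by directly unpacking \cref{eq:setPsigma}, noticing that every condition in the definition of $P_{\sigma}(\xi)$ depends on $\xi$ only through the containment $\theta \subseteq \xi \setminus \sigma$ and the requirement $\tau \in \Face_{n-1-k(i+1)}\xi$; the graph-theoretic condition ``$\exists q \in \names(\sigma),\ \forall p \in \names(\theta):\ p\notin \In_q(G_i)$'' is entirely a statement about $\sigma$, $\theta$ and the (fixed) communication graph $G_i$, and does not mention~$\xi$.

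For the first item (monotonicity), I would start from an arbitrary $\tau \in P_{\sigma}(\xi')$ and show $\tau \in P_{\sigma}(\xi)$. Since $\xi' \subseteq \xi$, any $\tau \in \Face_{n-1-k(i+1)} \xi'$ is automatically in $\Face_{n-1-k(i+1)} \xi$. The decomposition $\tau = \sigma \cup \theta$ with $\theta\neq \varnothing$ still witnesses membership in $P_{\sigma}(\xi)$, because $\theta \subseteq \xi'\setminus\sigma \subseteq \xi\setminus\sigma$ by transitivity of containment, and the remaining existential condition depends only on $\sigma$, $\theta$ and $G_i$ as observed above. Hence $\tau \in P_{\sigma}(\xi)$, which gives $P_{\sigma}(\xi') \subseteq P_{\sigma}(\xi)$.

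For the second item, suppose toward a contradiction that $P_{\sigma}(\sigma)\neq\varnothing$ and pick $\tau \in P_{\sigma}(\sigma)$. The definition forces $\tau = \sigma \cup \theta$ with $\theta \subseteq \sigma \setminus \sigma = \varnothing$ and $\theta \neq \varnothing$ simultaneously, which is impossible. Therefore $P_{\sigma}(\sigma) = \varnothing$.

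Neither step requires any real work beyond careful bookkeeping of what the definition \eqref{eq:setPsigma} actually says; there is no genuine obstacle. The only subtlety worth highlighting explicitly in the write-up is that the existential quantifier ``$\exists q \in \names(\sigma)$'' (which, taking the visible symbol in the statement at face value, should be read as referring to $\sigma$ rather than to an otherwise undefined $\rho$) is invariant under enlargement of the ambient face $\xi$, so there is no hidden dependence that could break monotonicity.
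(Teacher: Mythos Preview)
Your proposal is correct and matches the paper's treatment: the paper does not spell out a proof at all, merely stating that both properties ``follow immediately from the definition,'' and your direct unpacking of \eqref{eq:setPsigma} is precisely the argument implicitly intended. Your remark about the stray $\rho$ (which should indeed be read as $\sigma$) is a fair observation and does not affect the validity of either item.
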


Note carefully that the second term in \cref{eq:secondline} is empty if $G_i$ is the complete graph,
since there are no dangerous facets $\phi \in T_{\sigma'}$ for any $\sigma'\subseteq\sigma$ 
in this case: $P_{\sigma'}(\phi)=\varnothing$, as \cref{eq:setPsigma} requires $\theta \neq \varnothing$. For the complete graph, \cref{def:fi} and \cref{eq:generalizedfi} are hence the same.

Referring to the problematic scenario of dangerous pairs $(\phi_1,\sigma)$ and $(\phi_2,\sigma)$ introduced earlier, it is apparent that \cref{eq:generalizedfi} just adds the pseudospheres $\Psi\bigl( p,[\view_p(\tau_1), \view_p(\phi_1)] \mid p \in \names(\tau_1) \bigr)$ for every face $\tau_1\in \Face_{n-1-k(i+1)} \phi_1$ and $\Psi\bigl( p,[\view_p(\tau_2), \view_p(\phi_2)] \mid p \in \names(\tau_2) \bigr)$ for $\tau_2\in \Face_{n-1-k(i+1)} \phi_2$ with $\varnothing\neq\sigma' \subseteq\tau_1 \cap \tau_2\subseteq \sigma$ that might
contain $\psi_1$ and $\psi_2$ with $\psi_1\cap\psi_2\neq \varnothing$ (and hence implicitly also all the intersections of the latter) to $h_i(\phi)$. This ensures that all facet intersections $\psi_1\cap\psi_2$ that do occur in our model but are not covered by $h_i(\phi_1)\cap h_i(\phi_2)$ are included in the image of our carrier map $f_i(\phi)$, which prevents any strictness violation by construction.

We now prove formally that $f_i(\sigma)$ is indeed a strict carrier map that satisfies the
codimension requirement of \cref{lem:twochain}. First, since $f_i(\sigma)$ is the union
of $h_i(\sigma)$ and pseudospheres generated from the $(n-1-k(i+1))$-dimensional faces
in some $P_{\sigma}(\phi')$, \cref{eq:wrongfi} and \cref{eq:setPsigma} guarantee that 
all facets of $f_i(\sigma)$ have the same dimension $n-1-k(i+1)$, for any $\sigma \in \mathcal{K}_{i}$. 
Consequently, $\dim(\mathcal{K}_{i+1}) = \dim(\mathcal{K}_i)-k$, and thus, for 
every simplex $\sigma$ in $\mathcal{K}_i$ satisfying $\codim (\sigma) \leq k$, we have $\codim (f_i(\sigma)) \leq \codim (\sigma)$ as desired.  

\begin{claimrep}\label{claim:monotonicitygeneralizedfi}
    $f_i$ is a carrier map.
\end{claimrep}

\begin{proof}
    Let $\varnothing \neq \sigma' \subseteq \sigma$ be arbitrary. We show that every face 
$\tau$ that occurs in $f_i(\sigma')$, i.e., is used for a pseudosphere $\Psi\bigl( p,[\view_p(\tau), \view_p(\sigma')] \mid p \in \names(\tau) \bigr)$ in $f_i(\sigma')$, also occurs in $f_i(\sigma)$. Due to the monotonicity of $h_i(\sigma)$, we only need to argue this for every $\tau$ contributed by the second part of \cref{eq:secondline} in $f_i(\sigma)$. However, this is trivially ensured by the fact that the union in \cref{eq:secondline} 
runs over all non-empty subsets of $\sigma$, 
and hence also incorporates~$\sigma'$.
\end{proof}

Strictness is guaranteed by construction:

\begin{claimrep}\label{claim:strictnessgeneralizedfi}
    $f_i$ is a strict carrier map.
\end{claimrep}
\begin{proof}
Let $\sigma_1, \sigma_2$ be two faces of $\mathcal{K}_i$ with $\sigma=\sigma_1\cap \sigma_2$. 
We need to prove that $f_i(\sigma_1) \cap f_i(\sigma_2) \subseteq f_i(\sigma)$, since monotonicity of $f_i$ established in \cref{claim:monotonicitygeneralizedfi} then guarantees equality.

Consider any facets $\psi_1 \in f_i(\sigma_1)$ and $\psi_2 \in f_i(\sigma_2)$ with $\psi=\psi_1 \cap \psi_2 \neq\varnothing$. 
Note that they cannot originate from faces $\tau_1 \in \Face_{n-1-k(i+1)}\phi_1$ and $\tau_2 \in \Face_{n-1-k(i+1)}\phi_2$ of
some $\phi_1$ and $\phi_2$ with $\tau=\tau_1 \cap \tau_2=\varnothing$, since the pseudospheres $\Psi\bigl( p,[\view_p(\tau_1), \view_p(\phi_1)] \mid p \in \names(\tau_1) \bigr)$ and
$\Psi\bigl( p,[\view_p(\tau_2), \view_p(\phi_2)] \mid p \in \names(\tau_2) \bigr)$ would be disjoint then.
So let $\tau=\tau_1 \cap \tau_2\neq\varnothing$ and note that necessarily $\tau \subseteq \sigma$.

We may face one of the following three different cases here:
\begin{itemize}
\item $\psi_1 \in h_i(\sigma_1)$ (originating in $\tau_1$ occurring in $h_i(\sigma_1)$) and $\psi_2\in h_i(\sigma_2)$ (originating in $\tau_2$ occurring in $h_i(\sigma_2)$), respectively. If $\psi \in h_i(\sigma)$, we are done since
$h_i(\sigma) \subseteq f_i(\sigma)$. Otherwise, both $(\sigma_1,\sigma)$ and $(\sigma_2,\sigma)$ must be dangerous pairs
that lead to a non-empty $\psi$. Consequently, there must be some $\varnothing \neq \sigma'\subseteq \sigma$ satisfying 
$\names(\psi) \subseteq \names(\tau) \subseteq \names(\sigma')$ and two (not necessarily distinct) facets $\phi_1 \in T_{\sigma'}$ and $\phi_2 \in T_{\sigma'}$, which lead to the faces $\tau_1 \in P_{\sigma'}(\phi_1)$ and $\tau_2 \in P_{\sigma'}(\phi_2)$.
According to \cref{eq:secondline}, $\tau_1$ and $\tau_2$ also occur in the second term of $f_i(\sigma)$, so that
$\psi_1, \psi_2 \in f_i(\sigma)$ and hence also $\psi \in f_i(\sigma)$, which guarantees strictness also in this case.

\item W.l.o.g.\ $\psi_1 \in h_i(\sigma_1)$ (originating in $\tau_1$ occurring in $h_i(\sigma_1)$) but $\psi_2 \in \Psi\bigl( p,[\view_p(\tau_2), \view_p(\phi_2)] \mid p \in \names(\tau_2) \bigr)$ for some facet $\phi_2 \in T_{\sigma_2'}$, $\sigma_2'\subseteq\sigma_2$ (originating in $\tau_2$ occurring in $P_{\sigma_2'}(\phi_2)$). Both $(\sigma_1,\sigma_2')$ and $(\phi_2,\sigma_2')$ must be dangerous pairs that
lead to a non-empty $\psi$ here. The same reasoning as before (except that $\phi_2$ is given explicitly 
here) proves strictness also for this case.

\item $\psi_1 \in \Psi\bigl( p,[\view_p(\tau_1), \view_p(\phi_1)] \mid p \in \names(\tau_1) \bigr)$ for some facet $\phi_1 \in T_{\sigma'}$, $\sigma'\subseteq\sigma$ (originating in $\tau_1$ occurring in $P_{\sigma_1'}(\phi_1)$) and 
$\psi_2 \in \Psi\bigl( p,[\view_p(\tau_2), \view_p(\phi_2)] \mid p \in \names(\tau_2) \bigr)$ for some facet $\phi_2 \in T_{\sigma'}$
(originating in $\tau_2$ occurring in $P_{\sigma'}(\phi_2)$).
Both $(\phi_1,\sigma')$ and $(\phi_2,\sigma')$ must be dangerous pairs
that lead to a non-empty $\psi$ here. 
The same reasoning as before (except that both $\phi_1$ and $\phi_2$ is given explicitly) proves strictness also for
this case.
\end{itemize}
\end{proof}

So all that remains to be proved for literally carrying over 
\cref{lem:allshellable} for
arbitrary graphs is that \cref{eq:generalizedfi} defines a shellable carrier 
map. Since this is more complicated than it was in the case of the 
complete graph in \cref{sec:layer}, however, we need to introduce some additional
notation.
 
Recall that \cref{def:qshellable} requires a $q$-shellable carrier map like 
$f_i:\mathcal{K}_i \to \mathcal{K}_{i+1}$ to generate a shellable image 
only for a given \emph{single} face, i.e., for $f_i(\sigma)$ for a given $\sigma \in 
\mathcal{K}_i$. Indeed, this restriction is inevitable, as we will show
in \cref{sec:nonshellabilityKi} that the entire complex $\mathcal{K}_{i+1}$ 
cannot be shellable in general. 

Due to the simplicity of the carrier map $f_i(\sigma)$ given
in \cref{def:fi}, which only depends on the particular face $\sigma \in \mathcal{K}_i$, there was no need to explicitly consider its actual domain when analyzing shellability there.
The situation is different for arbitrary graphs, however, as \cref{eq:generalizedfi}
of $f_i(\sigma)$ involves an implicit restriction of its domain: Along with $\sigma$ itself, we also need to know the domain $\mathcalover{K}_i \subseteq \mathcal{K}_i$ the face $\sigma$ is taken from. More concretely, $\mathcalover{K}_i$ must be
such that it not only contains $\sigma$ but also all facets in $T_\sigma$ that
are relevant (i.e., lead to a dangerous pair) in \cref{eq:generalizedfi}. When
investigating its shellability, we
hence need to choose some face $\kappa_{i-1} \in \mathcal{K}_{i-1}$ that ensures this, and to restrict our attention to the domain $\mathcalover{K}_i:=f_{i-1}(\kappa_{i-1}) 
\subseteq \mathcal{K}_i$. By the strictness of $f_{i-1}$, we
can safely choose the unique smallest face $\kappa_{i-1} \in \mathcal{K}_{i-1}$ 
ensuring $\sigma \in f_{i-1}(\kappa_{i-1})$ here: After all, a dangerous pair
in \cref{eq:generalizedfi}, involving some facet $\phi \in T_\sigma$, can
only occur if there exists already a ``preceding'' dangerous pair involving 
some facet $\phi' \in T_{\kappa_{i-1}}$ with $\phi \in \Face_{n-1-k(i+1)} \phi'$
in $f_{i-1}(\kappa_{i-1})$. 

By a trivial downwards induction, for a given $\sigma \in \mathcal{K}_{i}$, this yields a sequence 
$\mathcalover{K}_j = f_{j-1}(\kappa_{j-1})$, $j \in \{i,i-1,\dots,1\}$, with $\kappa_j \in f_{j}(\kappa_{j-1})$.
This inductive definition is well-defined, since (i)
$\mathcalover{K}_{0}=\mathcal{K}_{0}=\mathcal{P}^{(0)}$ is the (shellable) input complex for $k$-set agreement, and (ii) we will prove below that the complex $f_{i}(\sigma)$ is pure (with dimension $n-1-k(i+1)$) and shellable for every (sufficiently large) face $\sigma \in \mathcalover{K}_i$, provided that $\mathcalover{K}_i=f_{i-1}(\kappa)$ for every (sufficiently large) face $\kappa \in \mathcalover{K}_{i-1}$ is shellable.
The latter accomplished by \cref{lem:revised13.5.5} below, which actually constitutes the induction step in a trivial
induction proof that takes the shellability of $\mathcalover{K}_i$ as the induction hypothesis.

\medskip

As our basis for defining a shelling order for the facets of the 
complex $f_i(\sigma)$,
we use the same signature definition as in \cite[Sec.~13.5.2]{HerlihyKR13}:
For an arbitrary but fixed facet $\sigma \in \mathcalover{K}_i$ with $\dim(\sigma) = n-1-ki$, the \emph{facet signature}
$\phi[\cdot]$ 
 of a facet $\phi \in f_i(\sigma)$ is a string of elements
ordered by the index of the vertices (i.e., the process names) of $\phi$. The
element $\phi[q]$ in the facet signature, which tells from which processes the
process $q$ has heard-of in round~$i$, is defined as
\begin{equation}
    \phi[q] = 
    \begin{cases} 
    \sig(\view_q(\sigma)) & \mbox{if $(P_q,\view_q(\sigma)) \in \phi$},\\
    \top & \mbox{if $q$ is absent (crashed) in $\phi$}.
    \end{cases}
\end{equation}
In the above, $\sig(\view_q(\sigma))$ is the face signature (cf. \cref{eq:facesignature})
of the $\view_q(\sigma) \in \Views_q$ of $P_q$ in $\phi$ as defined in \cref{eq:SetVq}.
Note carefully
that this implies that the minimal element in the corresponding face order (recall \cref{def:faceorder})
is now not $\sigma$ but rather 
$\sig(\view_q(\sigma))$; the maximal element is represented by $\top$ here.

To finally define a facet order $<$ on the facets of $f_i(\sigma)$, we need to 
extend the lexical ordering of the face orders of all the elements of the facet 
signature $\phi[\cdot]$ employed in \cite[Sec.~13.5.2]{HerlihyKR13}. The
need arises because,  for proper faces $\sigma \in \mathcalover{K}_i$,  $f_i(\sigma)$  
usually contains facets from multiple $f_i(\sigma')$, i.e., different facets $\sigma' \in 
\mathcalover{K}_i$, cf.\ \cref{eq:secondline}.
Consequently, for two facets $\phi_1, \phi_2 \in f_i(\sigma)$, we define $\phi_1 < \phi_2$ if the 
the following condition holds: 
\begin{enumerate}
    \item there is an index $\ell$ such that $\phi_1[q] = \phi_2[q]$ for all $q < \ell$, and $\phi_1[\ell] <_f \phi_2[\ell]$, or, if no such $\ell$ exists, 
    \item  $\phi_1 \in f_i(\sigma_1)$ and $\phi_2 \in f_i(\sigma_2)$ for facets $\sigma_1, \sigma_2 \in \mathcalover{K}_i$ that satisfy $\sigma_1 < \sigma_2$ in the shelling order of $\mathcalover{K}_i$.
\end{enumerate}

\begin{lemmarep}[Local shellability for arbitrary graphs]\label{lem:revised13.5.5}
For every face $\sigma \in \mathcalover{K}_i$, the order
$<$ is a shelling order for the facets of $f_i(\sigma)$.
\end{lemmarep}
\begin{proof}
Let $\phi_a$ and $\phi_b$ be two arbitrary facets of $f_i(\sigma)$ with $\phi_a < \phi_b$, where $\phi_a \in f_i(\sigma_a)$ and 
$\phi_b \in f_i(\sigma_b)$ for some not necessarily distinct facets $\sigma_a, \sigma_b \in \mathcalover{K}_i$. 
For proving shellability, according to \cref{def:shellability},
we need to find $\phi_c$ with $c=c(a,b) < b$, such that (i) $\phi_a \cap \phi_b \subseteq\phi_c \cap \phi_b$
and (ii) $|\phi_b\setminus \phi_c|=1$.

First, we assume (a) that there is a smallest index $\ell$ such that $\phi_a[\ell]<_f \phi_b[\ell]$ with
$\phi_b[\ell] \not\in \{\top, \sig(\view_{\ell}(\sigma_b))\}$. 
Since $\top$ resp.\ $\sig(\view_{\ell}(\sigma_b))$ is maximal resp.\ minimal in our face ordering for $f_i(\sigma_b)$, 
we must have $\phi_a[\ell]\neq \top$ as well. We construct $\phi_c$ by replacing $P_\ell$'s label $\phi_b[\ell]$ in $\phi_b$ with $\sig(\view_\ell(\sigma_b))$:
\[
\phi_c[q] = \begin{cases} \sig(\view_\ell(\sigma_b)) & \mbox{if $q=\ell$},\\ \phi_b[q] & \mbox{otherwise}.\end{cases}
\]
Clearly, the resulting facet $\phi_c$ is contained in $f_i(\sigma_b)$, and hence in  $f_i(\sigma)$ according to \cref{eq:generalizedfi},
since it corresponds to the situation where some of the processes that crashed in $\phi_b$ 
and did not send to $P_\ell$ in $\phi_b$ do so in $\phi_c$, so that 
the latter indeed receives $\view_\ell(\sigma_b)$.

Since $\phi_b$ and $\phi_c$ only differ at element $\ell$ and $\phi_c[\ell] <_f \phi_b[\ell]$,
we obtain $\phi_c < \phi_b$. Moreover, as the vertex $\view_\ell$ in the facet $\phi_b$ representing process $\ell$
is different from the respective vertex $\view_\ell'$ in $\phi_c$ by construction, $\view_\ell \not\in \phi_b \cap \phi_c$.
Since $\view_\ell$ is the only such vertex in $\phi_b$, we find $|\phi_b\setminus \phi_c|=1$. Finally,
since $\phi_a[\ell]<_f \phi_b[\ell]$, we cannot have $\view_\ell \in \phi_a$, which also secures
$\phi_a \cap \phi_b \subseteq\phi_c \cap \phi_b$. Hence, both shellability conditions hold.

\medskip

If (a) does not apply, all elements $\phi_b[x]\in \{\top, \sig(\view_{x}(\sigma_b))\}$. Moreover, all processes
with indices in $\names(\sigma_b)\setminus \names(\phi_b)$ have crashed. Hence, if we consider $\phi_b$'s
``augmentation'' $\ophi_b$, the set of indices of which is equal to the set of indices of $\sigma_b$, we have 
$\ophi_b[x]\in \{\top, \sig(\view_{x}(\sigma_b))\}$ as well.

We now consider the set of all
indices of the processes in $\ophi_a$ and $\ophi_b$, i.e., of $\names(\ophi_a) \cup \names(\ophi_b) = \names(\sigma_a) \cup 
\names(\sigma_b)$ in the global index order of all the $n$ processes in our system (recall \cref{sec:basics}). Let $b_0,\dots,
b_{n-1-ki}$ be the sequence of the $n-ki$ indices of the processes in $\names(\ophi_b)$. For any pair
$\ophi_a[b_j]$ and $\ophi_b[b_j]$, our assumption $\phi_a < \phi_b$ in conjunction
with $\ophi_b[q]\in \{\top, \sig(\view_{q}(\sigma_b))\}$ allows only two possibilities: (i) $\ophi_a[b_j]=\top$,
in which case $\ophi_b[b_j]=\top$ must also hold since $\top$ is maximal in the face order, or (ii) 
$\ophi_b[\ell]=\top$ but $\ophi_a[\ell]\neq \top$. 

We proceed with a sweep over all the indices, starting from $x=1$: For all indices $1 \leq x < b_0$, it must
hold that if $\ophi_a[x]=\top$, then $\ophi_b[x]=\top$ as well. Now consider $\ophi_a[x]$ and $\ophi_b[x]$ for $x=b_j$,
initially $x=b_0$:

\begin{itemize}
\item If (i) applies, $\ophi_b[b_j]=\top$ is caused by the fact that the process with index
$b_j$ has crashed in $\ophi_b$ by definition. On the other hand, $\ophi_a[b_j]=\top$ can be caused either (1) by the process with 
index $b_j$ not being contained in $\names(\ophi_a)=\names(\sigma_a)$ in the first place, or (2) by having crashed in $\ophi_a$ as
well. In case (2), we set $x=b_{j+1}$ and repeat from above (provided $j+1 \leq n-1-ki$, otherwise we unsuccessfully
terminate our sweep, see case (c) below). We can jump over the intermediate indices, since, for all indices $b_j < x < b_{j+1}$, 
it must again hold that $\ophi_a[x]=\top$ since $\ophi_b[x]=\top$ by definition. 
In case (1), we successfully terminate our sweep and set $\ell=b_j$.

\item If (ii) applies, we just successfully terminate our sweep and set $\ell=b_j$. 
\end{itemize}

Now, if our sweep has successfully terminated with some index $\ell$, $\ophi_b[\ell]=\top$ (since
process $p_\ell$ has crashed in $\ophi_b$) and either
$\ophi_a[\ell]\neq\top$ (since $p_\ell$ is alive in $\ophi_a$) or else $\ophi_a[\ell]=\top$ (with $p_\ell$
\emph{not} having crashed in $\ophi_a$, but rather not participated in $\sigma_a$). 
Moreover, $\ell=b_j$ is the smallest index
with this property, which implies that for all indices $x\in\{b_0,b_1,\dots,b_{j-1}\}$ it holds
that $\ophi_a[x]=\ophi_a[x]=\top$ due to the fact that process $p_x$ has crashed
\emph{both} in $\ophi_a$ and $\ophi_b$. Since $\phi_a$ and $\phi_b$ have the same dimension, 
this implies that there must hence be an index $m>\ell$ where $\ophi_b[m]\neq\top$ and 
$\ophi_a[m]=\top$.

We construct $\phi_c$ from $\phi_b$ by replacing the entry $\ell$ with $\sig(\view_\ell(\sigma_b))\neq\top$ and the entry $m$ with $\top$:
\[
\ophi_c[x] = \begin{cases} \sig(\view_x(\sigma_b)) & \mbox{if $x=\ell$},\\ \top & \mbox{if $x=m$},\\ \ophi_b[x] & \mbox{otherwise}.\end{cases}
\]
Note carefully that this assignment is only feasible, since, except for $x=\ell$ and $x=m$, all other elements 
$\phi_b[x]\in \{\top, \sig(\view_{x}(\sigma_b))\}$. Indeed, any $\phi_b[x]$ that did not receive information
from the (crashed) $p_\ell$ in $\phi_b$ (albeit it could) would not be allowed in $\phi_c$ where $p_\ell$ is alive!

Again, the resulting facet $\ophi_c$ is contained in $f_i(\sigma)$ according to \cref{eq:generalizedfi}, 
since both $\ell$ and $m$ belong to the index set of $\sigma_b$. 
As $\ophi_b[\ell]=\top$, the first element at which $\ophi_b$ and $\ophi_c$ differ is $\ell$, and $\phi_c[\ell]<_f \phi_b[\ell]$, so $\phi_c < \phi_b$. Since $m$ is the only entry in $\phi_b$ that is not in $\phi_c$, we find $|\phi_b\setminus \phi_c|=1$. Because that entry $m$ is also not in $\phi_a$, we finally get $\phi_a \cap \phi_b \subseteq\phi_c \cap \phi_b$ as needed by the shellability conditions.

\medskip

The only remaining case (c) is caused by an unsuccessful sweep, where no $\ell$ that matches either case (i) or (ii) above 
could be found. Obviously, this can only happen if $\ophi_a[x] = \ophi_b[x]$, with $\ophi_b[x]\in \{\top, \sig(\view_{x}(\sigma_b))\}$, for all possible
indices $x$. Moreover, we are guaranteed that $\names(\ophi_b)=\names(\ophi_a)$, as otherwise case (i) would have held for some $\ell=b_j$.
Since $\phi_a < \phi_b$, however, $\sigma_a$ and $\sigma_b$ must be different, albeit they involve the same set of processes. 
According to the definition of our shelling order $<$, we must hence have $\sigma_a < \sigma_b$, where $<$ denotes the shelling order 
of $\mathcalover{K}_i$ here. 
\cref{def:shellability} guarantees that there is $\sigma_c \in \mathcalover{K}_i$ with $\sigma_c<\sigma_b$,
which satisfies (1) $\sigma_a \cap \sigma_b \subseteq\sigma_c \cap \sigma_b$ and (2) $|\sigma_b\setminus \sigma_c|=1$.
Moreover,
due to (2), $\sigma_b$ and $\sigma_c$ differ only in a single vertex $v_b=(p_b,\lambda_b)\in \sigma_b$ 
and $v_c=(p_c,\lambda_c)\in \sigma_c$. Clearly, there are only two principal possibilities here, either (1) $p_b \neq p_c$
or (2) $p_b=p_c$ but $\lambda_b \neq \lambda_c$. 

Now, since $\phi_a[x] = \phi_b[x]$ for all indices $x$ corresponding to processes in 
$\names(\phi_b)$, any choice $\phi_c \in f_i(\sigma_c)$ with 
$\phi_c[x]=\phi_a[x] = \phi_b[x]$ for all those indices guarantees
$\phi_c < \phi_b$ since $\sigma_a < \sigma_b$. As $\sigma_a \cap \sigma_b 
\subseteq\sigma_c \cap \sigma_b$ holds, any such choice also ensures $\phi_a \cap \phi_b \subseteq\phi_c \cap \phi_b$.
Since $\phi_b[x]\in \{\top, \sig(\view_{x}(\sigma_b))\}$ for every such index $x$, every process in $\names(\sigma_b)$, whether
crashed or alive, sends to every process in $\phi_b$, and hence appears in $\view_{q}(\sigma_b)$ of every $p_q \in 
\names(\phi_b)$ it is connected to. The same happens in $\phi_c$ for every process in $\names(\sigma_b) 
\cap \names(\sigma_c)$. 

Hence, in the case (1) $p_b \neq p_c$, we can specialize our choice of $\phi_c$ such that $p_c \in \names(\phi_c)$
but $p_b \not\in \names(\phi_c)$, which ensures $|\phi_b\setminus \phi_c|=1$ since $|\sigma_b\setminus \sigma_c|=1$.
In the case (2) $p_b=p_c$, i.e., $\names(\sigma_b)=\names(\sigma_c)$, also $\names(\phi_b)=\names(\phi_c)$, but
the vertex of every (alive) process $p\in \names(\phi_c)$ that is connected to $p_b=p_c$ would be different
in $\phi_b$ and $\phi_c$. Now, if such a $p$ appears in a vertex $v \in \phi_a \cap \phi_b \subseteq \phi_c \cap \phi_b$,
it follows that $\lambda_b=\lambda_c$ and hence $\sigma_b=\sigma_c$ must hold, which is impossible. 
If no such $p$ exists, we can safely define
another $\phi_c'$ with $p_b=p_c \not\in \names(\phi_c')$ (i.e., we crash $p_b=p_c$) and revive some $p' \in \names(\phi_c')$ with
$p' \not\in \phi_c$ but $p' \in \phi_b$ instead. This new choice of $\phi_c'$ also guarantees $|\phi_b\setminus \phi_c'|=1$.

\cref{def:shellability} thus confirms shellability of $f_i(\sigma)$ in any case, which completes our proof.
\end{proof}

Thanks to \cref{lem:revised13.5.5}, the rest of the lower bound proof in \cref{sec:layer} remains
applicable and shows that the lower bound $\lfloor t/k\rfloor +1$ for complete graphs also applies to  
arbitrary communication graphs:

\begin{theorem}\label{thm:generalcrashlowerbound}
Let $t\geq 0$, and $k\geq 1$ be integers. Solving $k$-set agreement with arbitrary communication graphs 
requires at least $\lfloor\frac{t}{k}\rfloor+1$ rounds in the synchronous $t$-resilient model.
\end{theorem}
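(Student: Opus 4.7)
The plan is to mirror the proof of \cref{thm:crashlowerbound} for complete networks, but using the generalized carrier map from \cref{eq:generalizedfi} (which handles arbitrary, possibly round-varying, directed communication graphs) in place of the simpler one from \cref{def:fi}. The heavy lifting has already been done: strictness of $f_i$ is guaranteed by \cref{claim:strictnessgeneralizedfi}, shellability of each image $f_i(\sigma)$ by \cref{lem:revised13.5.5}, and the input pseudosphere is shellable by \cref{thm:shellabilityskelPS}.

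Concretely, for each $J\subseteq[k+1]$ I set $\mathcal{K}_0=\mathcal{I}[J]=\Psi(\{(p_i,J)\mid i\in[n]\})$, and for $N=\lfloor t/k\rfloor$ I consider the sequence
\[
\mathcal{K}_0\xrightarrow{f_0}\mathcal{K}_1\xrightarrow{f_1}\cdots\xrightarrow{f_{N-1}}\mathcal{K}_N\xrightarrow{id}\mathcal{K}_N,
\]
where each $f_i$ is the generalized carrier map modeling one round over the (possibly time-varying) communication graph $G_{i+1}$ in which exactly $k$ additional processes crash; this is admissible since the total number of crashes reaches $kN\leq t$. By construction, every facet of $f_i(\sigma)$ has dimension $n-1-k(i+1)$, so $\dim(\mathcal{K}_{i+1})=\dim(\mathcal{K}_i)-k$ and the codimension condition $\codim(f_i(\sigma))\leq\codim(\sigma)$ required by \cref{lem:ellchain} holds for every $\sigma$ with $\codim(\sigma)\leq k$. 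Combined with strictness and shellability, this makes each $f_i$ a $(k-1)$-shellable carrier map; the identity at the end is trivially $(k-1)$-connected.

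Applying \cref{lem:ellchain} with $q=k-1$ then shows that $f_{N-1}\circ\cdots\circ f_0$ is a $(k-1)$-connected carrier map, and \cref{lem:connected_image} yields that the protocol complex $\mathcal{K}_N=\mathcal{P}^{(N)}[J]$ after $N$ rounds is $(k-1)$-connected, hence in particular $(\dim(J)-1)$-connected. Since this holds for every $J\subseteq[k+1]$, \cref{thm:imposs} concludes that $k$-set agreement cannot be solved in $N=\lfloor t/k\rfloor$ rounds over any communication graph, yielding the desired $\lfloor t/k\rfloor+1$ lower bound. The only genuine obstacle in this scheme has already been overcome in the preceding subsection, namely establishing strictness of the generalized carrier map despite the presence of dangerous pairs induced by missing edges, and verifying that the augmented image in \cref{eq:secondline} admits a shelling order compatible with that of the underlying shellable subcomplex $\mathcalover{K}_i$.
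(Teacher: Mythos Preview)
Your proposal is correct and follows essentially the same approach as the paper: the paper's proof of \cref{thm:generalcrashlowerbound} is just a one-line remark that, thanks to \cref{lem:revised13.5.5}, the proof of \cref{thm:crashlowerbound} for the clique carries over verbatim with the generalized carrier map of \cref{eq:generalizedfi} replacing the one from \cref{def:fi}. You have simply spelled out this transfer explicitly, invoking \cref{claim:strictnessgeneralizedfi}, \cref{lem:revised13.5.5}, \cref{lem:ellchain}, \cref{lem:connected_image}, and \cref{thm:imposs} in the expected order.
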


In \cref{sec:generalizedlayeranalysis}, however, we will show that this lower
bound is usually not tight in networks different from the static clique, i.e., that additional rounds are
mandatory for solving $k$-set agreement in such networks.

\subsection{A note on (non-)shellability of the per-round protocol complexes}
\label{sec:nonshellabilityKi}

Whereas \cref{lem:revised13.5.5} ensures that $f_i(\sigma)$ is shellable for every $\sigma \in \mathcal{K}_i$ with $\dim(\sigma) \geq n-ki$ (whereas \cref{def:fi} ensures $f_i(\sigma)=\varnothing$ for every smaller
$\sigma$), this does not necessarily imply that $\mathcal{K}_{i+1}=f_i(\mathcal{K}_i)$ is also shellable. 
Unfortunately, unless a very restricted graph $G_i$ (like a unidirectional ring) is used in round~$i$, there is no way 
to extend the individual shelling orders of $f_i(\sigma)$, for every $\sigma \in \mathcal{K}_i$, to form a
global one that is valid for $\mathcal{K}_{i+1}$. Actually, it is even possible to show that no global shelling
order can exist.

To understand why, assume that there was a shelling order $<_i$ of the facets of $\mathcal{K}_{i}$
that gives the order $\sigma_0 <_i \sigma_1 <_i \dots$. The obvious idea that comes to mind for possibly
defining $<_{i+1}$ is by first ordering the facets in $f_i(\sigma_0)$ according to its shelling order $<$ 
defined in \cref{lem:revised13.5.5}, followed by the facets in $f_i(\sigma_1)$ in their shelling order, 
etc. This does not result in a total order, however, since $f_i(\sigma_x)$ and $f_i(\sigma_y)$ need not 
be disjoint. Moreover, we will show below that the shellability properties listed in 
\cref{def:shellability} cannot be satisfied for the respective smallest facet $\phi_b$ 
in any $f_i(\sigma)$ for non-trivial graphs $G_i$ (including cliques) for the shelling
order underlying \cref{lem:revised13.5.5}.

Indeed, recalling the definition of the facet order $<$ stated immediately before \cref{lem:revised13.5.5}, the
smallest facet $\phi_b$ is defined by $\phi_b[q]=\sig(\view_q(\sigma))$ for the first $n-k(i+1)$
processes involved in $\sigma$ in the lexical order, where $\sig(\view_q(\sigma))$ 
(defined in \cref{def:vqrho}) specifies all the processes from which the alive process 
$q$ can hear-of in round $i$ when starting from $\sigma$. The remaining $k$ processes involved in $\sigma$ 
are dead in $\phi_b$, hence $\phi_b[q]=\top$. According to (i), we would need to find another
facet $\phi_c < \phi_b$, which differs from $\phi_b$ in exactly one vertex. Since 
$\phi_b$ is minimal in $f_i(\sigma)$, it follows that $\phi_c \not\in f_i(\sigma)$,
i.e., $\phi_c \in f_i(\sigma')$ for some $\sigma' \neq \sigma$. However, for graphs 
where more than one other processes hear-of the process where $\phi_b$ and $\phi_c$ differ,
(i) is impossible to satisfy.

Now one could argue that it might be possible to define other shelling orders for $f_i(\sigma)$
that circumvent this problem. It is not hard to see, however, that the facet $\phi_b$ 
defined above must also be the smallest facet in any other shelling order. We recall that
such a shelling order guarantees a linear order of all facets $\phi_0,\phi_1,\dots$ of
$f_i(\sigma)$, such that the subcomplex $\bigl(\bigcup_{j=0}^t \phi_j\bigr)\cap\phi_t$
is the union of $(\dim(\phi_t)-1)$-faces of $\phi_t$, for every $t$. If $\phi_b\neq\phi_0$,
it follows that $\bigl(\bigcup_{j=0}^{b-1} \phi_j\bigr)$ is non-empty, but
$\bigl(\bigcup_{j=0}^t \phi_{b-1}\bigr)\cap\phi_b$ cannot contain any face of $\dim(\phi_k)-1$
according to the reasoning in the previous paragraph. Hence, $\phi_b=\phi_0$ must hold.

\section{A Lower Bound for the Agreement Overhead for Arbitrary Graphs}
\label{sec:generalizedlayeranalysis}

In this section, we will derive a lower bound for the agreement overhead (\cref{def:agreementoverhead}), i.e.,
the number of additional rounds necessary for solving $k$-set agreement in $t$-resilient systems after
the first $N=\lfloor t/k\rfloor$ crashing rounds.
Interestingly, this can be done via two substantially
different approaches,
which will be presented below
and in \cref{sec:alternativeproof}.
Moreover, as a byproduct of our analyses, we will also establish a lower bound for
$k$-set agreement in systems with $t$ initially dead processes connected by arbitrary communication graphs.
For simplicity, we henceforth assume that $k$ evenly divides $t$, and no further crashes occur after round $N=t/k$. 
If this is not the case, the missing $t-k\lfloor t/k\rfloor$ crashes could only increase our lower bound, which would further complicate our analysis, and so we discard this option in this paper.

In \cref{def:aocarriermap} below, we will introduce a novel carrier map $g$ that captures the
agreement overhead caused by arbitrary communication graphs, beyond the mere case of the clique. This carrier map
has been inspired by the \emph{scissors cuts} introduced in \cite{castaneda2021topological}, which were used to prove a 
lower bound for solving $k$-set agreement with oblivious algorithms in the KNOW-ALL model (which is failure-free). 
Our approach however differs from the original scissors cuts in several important ways.
First, we admit directed graphs and general full-information algorithms, and replace the original pseudosphere input complex by the \emph{source complex} $\mathcal{P}_N$, which is one of the following two cases:

\begin{itemize}
\item
$\mathcal{P}_N$ is the (locally shellable) complex $\mathcal{K}_N = \mathcal{P}^{(N)}$ for arbitrary graphs (formally introduced in \cref{sec:ourgeneralizedlayeredanalysis}). 
This will allow us to paste together the agreement overhead lower bound
determined in this section with the $\lfloor t/k\rfloor$ lower bound caused by process crashes.

\item $\mathcal{P}_N$ is the (shellable) $(n-t-1)$-skeleton  $\skel_{n-t-1}\bigl(\Psi(\{(p_i,[k+1])\mid i\in [n]\})\bigr)$ of the pseudosphere given in \cref{thm:shellabilityskelPS}, which we subsequently abbreviate as $\Psi(n,k+1)$ for conciseness, which contains all the faces of the full pseudosphere $\Psi(\{(p_i,[k+1])\mid i\in [n]\})$ 
with at most $n-t$ vertices.
Note that actually $N=0$ in this case, albeit
we will not make this explicit later on, but just stick to $\mathcal{P}_N$
to denote the source complex for uniformity.
This will result in a lower bound (\cref{thm:lowerboundinitiallydead}) for $k$-set agreement with $t$ 
initially dead processes. 
Note that the total number of processes that appear in \emph{all} the facets of $\Psi(n,k+1)$ together is $n$ here. 
\end{itemize}

The second main difference w.r.t.\ \cite{castaneda2021topological} is that $g$ will be a proper carrier map, which also specifies the images of arbitrary faces of the source complex, and not only images of facets.
It is particularly simple, however, since it resembles a (non-rigid) simplicial map in that $g(\sigma)$ returns the subcomplex corresponding to a \emph{single} simplex $\rho$, or else $g(\sigma)=\varnothing$. In order not to unnecessarily clutter our notation, we will hence subsequently pretend as if $g(\sigma)$ only consisted of a single simplex $\rho$ or $\varnothing$ only. 
One of the particularly appealing consequences of $g$'s simple image is that it allows us to replace the
very complex topological analysis in~\cite{castaneda2021topological} by a strikingly simple connectivity argument. 

\medskip

Generally,
we assume that $g:\mathcal{P}_N \to \mathcal{P}_M$ given in \cref{def:aocarriermap}
models the failure-free execution in rounds
$N+1,N+2,\dots,M$ for some $M>N$, where $M-N$ will finally determine our desired agreement overhead
lower bound. 
Consider the execution starting in some facet 
$\phi \in \mathcal{P}_N$, which involves exactly $n-t$ processes with index
set $I_\phi\subseteq [n]$ and consists
of vertices of the form $(p_i,\lambda_i)$, $i \in I_\phi$, each with a process name 
$p_i = \names((p_i,\lambda_i)) \in \Pi$ used as its color, 
and a label $\lambda_i$ (which denotes $p_i$'s local view at the end of
round $N$), consisting of 
\begin{itemize}
    \item  the vertices of the processes that managed to successfully send to $p_i$ up to round $N$, if $N>0$ (cf. \cref{def:vqrho}), or
    \item $p_i$'s initial value $x_i$ if $N=0$, i.e., when the source complex
is $\skel_{n-t-1}\bigl(\Psi(n,k+1)\bigr)$.
\end{itemize}
In either case, $\lambda_i$ encodes the complete heard-of 
history of $p_i$ up to round $N$, due to the fact that we are assuming 
full-information protocols.

We assume that $G_{N+1},\dots,G_M$ is the sequence of communication
graphs governing rounds $N+1,\dots,M$. These graphs may be different
and known to the processes; clearly, assuming a static graph $G=G_{N+1}= \dots G_M$ as in \cref{sec:introduction} can make our lower bound only stronger.
We will abbreviate this sequence by $\G$ for brevity, and define
$G_\phi$ to be the product $G_{N+1,\phi} \circ G_{N+2,\phi} \circ \dots \circ G_{M,\phi}$, 
where $G_{N+1,\phi}, \dots, G_{M,\phi}$ are the graphs induced 
by the nodes $\names(\phi)$ on the graphs in the sequence $\G$.

It is worth mentioning
that our carrier map $g$ actually focuses on a subset of all the possible executions, as it is sufficient for a lower bound. 
Namely, $g$ considers the case where processes that crashed in round $N$ did crash \emph{cleanly} only (they failed to send messages to all their neighbors).
Note that this somehow resembles the situation of the carrier maps $f_i$ used in
\cref{sec:layeredanalysis}, which also only covered a submodel of all possible executions, namely, the one where exactly $k$ processes crash per round. The map
$g$ accomplishes this by ``discarding'' executions starting from facets 
in $\mathcal{P}_N$ that involve unclean crashes in round $N$, in the sense
that it (non-rigidly) maps such an ``unclean'' facet $\phi'$ to some face in the image $g(\phi)$ of some  ``clean'' facet $\phi\in \mathcal{P}_N$, where the processes that 
crashed uncleanly in $\phi'$ crashed cleanly in $\phi$ or not at all. 

In more detail, $g$ maps a facet $\phi'$ to the maximal face $\rho$ contained in the ``full'', i.e., unconstrained, image of $\phi'$ (that would be used without the discarding of ``dirty'' source vertices), 
where \emph{no} vertex hears from a ``dirty'' witness of an unclean crash in $\phi'$. 
Note that any such $\rho\neq \varnothing$ is also present as part of $g(\phi)$ for some facet $\phi$ where the crashing processes are not participating at all or are correct, so no
``new'' face needs to be included for mapping $g(\phi')$ here. On the other hand, 
there is no a priori guarantee that such a face $\rho$ exists, as $g(\phi')=\varnothing$ is also possible; we will show in our non-emptyness proofs below (see \cref{lem:radiusconnectivity} and \cref{lem:radiusconnectivityrefined}) 
that this cannot happen under the conditions of our impossibility proof, however.
In fact, this very intuitive property of $g$ is what enables $g$ to completely replace the complicated analysis of \cite{castaneda2021topological} by a simple connectivity argument based on its non-emptyness.

To formally define $g$, we need the following notation.
\begin{itemize}
    \item 
For a facet
$
\phi = \{(p_i,\lambda_i^N) \mid i \in I_\phi\}
$
of $\mathcal{P}_N$, where $I_\phi \subseteq [n]$ and $|I_\phi|=n-t$,
let 
\[
\dirty{\phi} = \{(p_i,\lambda_i^N) \mid i \in I_\phi \land  \names(\lambda_i^N) \setminus \names(\phi) \neq \varnothing\}
\]
be the set of ``dirty vertices'' in $\phi$.
That is, $\dirty{\phi} \subseteq \phi$ is the set of vertices in $\phi$
where the corresponding processes received a message from a process that has crashed uncleanly in round $N$. 

\item 
For a vertex $(p_i,\lambda_i^M) \in \mathcal{P}_M$, let $\mathrm{hist}(\lambda_i^M)$ be the set of all vertices $(p_j,\lambda_j^N)$ contained in the heard-of history $\lambda_i^M$, i.e., the ones have been received by $p_i$ directly or indirectly in any of the rounds
$N+1,\dots,M$.    
\end{itemize}

\begin{definition}[Agreement overhead carrier map]\label{def:aocarriermap}
We define the carrier map $g$ as follows:
\begin{itemize}
    \item For a facet $\phi\in \mathcal{P}_N$, 
\begin{equation}
g(\phi) = \bigl\{\{ (p_i,\lambda_i^M) \mid i \in I_\phi \land \mathrm{hist}(\lambda_i^M) \cap \dirty{\phi} = \varnothing\}\bigr\}.  \label{eq:gfacet}
\end{equation}
That is, $g(\phi)$ is the (possibly empty) face of $\mathcal{P}_M$ consisting of the vertices  that do not have any vertex in $\dirty{\phi}$ in their heard of history. 

\item For a face $\sigma \in \mathcal{P}_N$, 
\begin{equation}
g(\sigma) = \{\rho\}
\; \mbox{with} \;
\rho=\mbox{maximal simplex in $\bigcap_{\phi \in T_{\sigma} } g(\phi)$ s.th. 
$\names(\rho)\subseteq \names(\sigma)$},
\label{eq:gface}
\end{equation}
where $T_\sigma$ denotes the set of all facets $\phi \in \mathcal{P}_N$ satisfying $\sigma \subseteq \phi$.
\end{itemize}
\end{definition}

\noindent Note that \cref{eq:gfacet} and \cref{eq:gface} are consistent, in the sense
that the image $g(\sigma)$ of a facet $\sigma\in \mathcal{P}_N$ is the same for both
definitions, since $T_\sigma=\{\sigma\}$ here.

The facets of our source complex $\mathcal{P}_N$ need to satisfy
the following additional conditions:

\begin{definition}\label{def:additionalconditions} 
We define conditions \textbf{C1} and \textbf{C2} as follows: 
\begin{description}
\item[C1:] There is a facet $\phi \in \mathcal{P}_N$ with $\names(\phi)=\Pi\setminus S$, for every subset $S$ of $t$ processes.

\item[C2:] For every $\sigma \in \mathcal{P}_N$,  $\sigma=\bigcap_{\phi\in T_\sigma} \phi$.
\end{description}
\end{definition}

Condition \textbf{C1} is satisfied for our source complexes, since 
both $\mathcal{P}_N=\skel_{n-t-1}\bigl(\Psi(n,k+1)\bigr)$ 
and $\mathcal{P}_N=\mathcal{K}_N$ contain
every possible $(n-t-1)$-face by definition/construction.

The special context in which strictness of $g$ is
actually utilized, namely, \cite[Lem.~13.4.2]{HerlihyKR13} (see 
\cref{sec:layer} for details) is restricted to faces which are
solely facet intersections, which actually makes \textbf{C2}
in \cref{def:additionalconditions} superfluous. However, since
it is guaranteed for our source complexes, we can safely require it.
Indeed, the regularity of the source complex 
$\mathcal{P}_N=\skel_{n-t-1}\bigl(\Psi(n,k+1)\bigr)$ 
trivially guarantees (C2), and 
for the source complex $\mathcal{P}_N=\mathcal{K}_N$,
condition (C2) is easy to prove since every $k$-subset of the still alive processes in $\mathcal{K}_{N-1}$ is crashed 
in round $N$ in order to produce some facet $\phi\in \mathcal{P}_N$. So if
$v=(p,\lambda_q) \in \bigcap_{\phi\in T_\sigma} \phi \setminus \sigma$ would exist, consider any facet 
$\phi\in T_\sigma$ where some process $q \not\in \names(\phi)$
has crashed uncleanly after successfully sending his $\lambda_q$ to everybody 
in round $N$ in $\phi$. There must also be a facet $\phi' \in \mathcal{P}_N$,
which is the same as $\phi$, except that $(q,\lambda_q)\in \phi'$ but 
$v\not\in \phi'$ since $p$ has crashed uncleanly after successfully sending 
$\lambda_p$ to everybody in round $N$ in $\phi'$. Since $\sigma \subseteq \phi$
and hence $\phi'\in T_\sigma$ as well, we get the desired contradiction.

\begin{claim}
\label{claim: g is carrier}
    $g$ is a carrier map.
\end{claim}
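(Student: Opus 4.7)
The plan is to verify the two defining properties of a carrier map from \cref{def:carrmaps}: for every simplex $\sigma\in\mathcal{P}_N$, the image $g(\sigma)$ is a subcomplex of $\mathcal{P}_M$, and $g$ is monotonic on intersections, i.e., $g(\sigma\cap\kappa)\subseteq g(\sigma)\cap g(\kappa)$. Following the convention stated immediately before \cref{def:aocarriermap}, each $g(\sigma)=\{\rho\}$ is to be read as the simplicial closure of $\rho$ (or the empty complex if $\rho=\varnothing$), which is automatically a subcomplex; so the only nontrivial property is the intersection inclusion. I would derive the latter from an even stronger monotonicity statement: if $\sigma_1\subseteq\sigma_2$, then $g(\sigma_1)\subseteq g(\sigma_2)$.

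First I would check that the defining object in \cref{eq:gface} is well-defined. By \cref{eq:gfacet}, for every facet $\phi\in\mathcal{P}_N$, $g(\phi)$ is (the closure of) a single face of $\mathcal{P}_M$; hence $\bigcap_{\phi\in T_\sigma} g(\phi)$ is a subcomplex whose vertex set is the intersection of the vertex sets of the individual $g(\phi)$, and this intersection is itself (the closure of) a single face. Restricting to vertices with names in $\names(\sigma)$ leaves a unique maximal face $\rho_\sigma$, so $g(\sigma)=\{\rho_\sigma\}$ is unambiguously defined. Note also that $T_\sigma\neq\varnothing$ for every $\sigma\in\mathcal{P}_N$, since both source complexes we consider (the skeleton of a pseudosphere and $f_{N-1}(\kappa)$) are pure.

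Second, I would prove monotonicity. Suppose $\sigma_1\subseteq\sigma_2$. Then every facet containing $\sigma_2$ contains $\sigma_1$, so $T_{\sigma_2}\subseteq T_{\sigma_1}$, and consequently
\[
\bigcap_{\phi\in T_{\sigma_1}} g(\phi)\;\subseteq\;\bigcap_{\phi\in T_{\sigma_2}} g(\phi).
\]
Moreover $\names(\sigma_1)\subseteq\names(\sigma_2)$. Therefore $\rho_{\sigma_1}$, which lies in the smaller intersection and has vertex names in $\names(\sigma_1)\subseteq\names(\sigma_2)$, is a candidate for the maximality condition defining $\rho_{\sigma_2}$. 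Maximality then gives $\rho_{\sigma_1}\subseteq\rho_{\sigma_2}$, i.e.\ $g(\sigma_1)\subseteq g(\sigma_2)$ as subcomplexes. Applying this twice, to $\sigma\cap\kappa\subseteq\sigma$ and $\sigma\cap\kappa\subseteq\kappa$, yields the desired containment $g(\sigma\cap\kappa)\subseteq g(\sigma)\cap g(\kappa)$.

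The one delicate point, which I expect to be the main obstacle, is ensuring that the ``maximal simplex'' in \cref{eq:gface} is really unique. This would fail in general for arbitrary subcomplexes, but here it is saved by the fact that each $g(\phi)$ for a facet $\phi$ is a single face, so finite intersections remain single faces. Note that condition \textbf{C2} of \cref{def:additionalconditions} plays no role in this claim; it will be needed later, when strictness of $g$ is established.
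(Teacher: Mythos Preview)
Your proof is correct and follows essentially the same approach as the paper: both establish monotonicity of $g$ by noting that $\sigma_1\subseteq\sigma_2$ implies $T_{\sigma_2}\subseteq T_{\sigma_1}$ and $\names(\sigma_1)\subseteq\names(\sigma_2)$, from which $g(\sigma_1)\subseteq g(\sigma_2)$ follows. Your version is simply more detailed, explicitly verifying well-definedness of the maximal simplex in \cref{eq:gface} and spelling out how the carrier-map intersection condition is derived from monotonicity, whereas the paper's proof is a terse two-line argument.
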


\begin{proof}
    Let $\sigma_1$ and $\sigma_2$ be two faces of $\mathcal{P}_N$ with $\sigma_1 \subseteq \sigma_2$. We have $T_{\sigma_2} \subseteq T_{\sigma_1}$, and $\names(\sigma_1) \subseteq \names(\sigma_2)$, from which it follows that $g(\sigma_1) \subseteq g(\sigma_2)$.
\end{proof}

In \cref{claim:strictness} below, we will prove that $g$ is also strict.
Our proof will rely on an essential property of $g$, namely, that
if two facets $\phi_1$ and $\phi_2$ share a vertex $y=(p,\lambda_p)\in 
\phi_1 \cap \phi_2$, then the information $p$ propagates to 
other common vertices in $\phi_1 \cap \phi_2$ in rounds $N+1,\dots,M$ is the same
in  $G_{\phi_1}$ and $G_{\phi_2}$. Recall that the graph $G_\phi$ is the product $G_{N+1,\phi} \circ G_{N+2,\phi} \circ \dots \circ G_{M,\phi}$, 
where $G_{N+1,\phi}, \dots, G_{M,\phi}$ are the graphs induced 
by $\names(\phi)$ in the sequence of graphs $G_{N+1},G_{N+2}, \dots , G_{M}$ that govern rounds $N+1,\dots,M$.
Whereas this is trivially
satisfied when the source complex is $\skel_{n-t-1}\bigl(\Psi(n,k+1)\bigr)$,
as there are no unclean crashes since all absent
processes are initially dead, it needs to be secured by ``discarding'' vertices in $\dirty{\phi}_1$ (resp., $\dirty{\phi}_2$) in \cref{eq:gfacet} when unclean crashes may have happened.  Indeed, $p$ could have a predecessor $q$ that sent its value $\lambda_q$
to $p$ in round $N$ because $q\in \names(\phi_1)$ in $\phi_1$, whereas it
did not so in $\phi_2$ because it crashed uncleanly, so that $q \not\in 
\names(\phi_2)$.

\begin{claim}
\label{claim:strictness}
    $g$ is a strict carrier map.
\end{claim}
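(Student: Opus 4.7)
The plan is to establish strictness by proving the non-trivial inclusion $g(\sigma_1) \cap g(\sigma_2) \subseteq g(\sigma_1 \cap \sigma_2)$; the reverse inclusion is an immediate consequence of monotonicity, which was already shown in \cref{claim: g is carrier}. Writing $\sigma = \sigma_1 \cap \sigma_2$, I would pick an arbitrary vertex $v = (p_i, \lambda_i^M) \in g(\sigma_1) \cap g(\sigma_2)$ and check the two requirements implicit in \cref{eq:gface}, namely (a)~$p_i \in \names(\sigma)$, and (b)~$v \in g(\phi^*)$ for every facet $\phi^* \in T_\sigma$. Extending from a vertex to an arbitrary face of $g(\sigma_1)\cap g(\sigma_2)$ is then routine because $g(\sigma)$ is downward closed.

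Condition (a) is essentially bookkeeping: from $v \in g(\sigma_j)$ and \cref{eq:gface}, $p_i \in \names(\sigma_j)$ for $j=1,2$, and chromaticity gives $\names(\sigma_1)\cap\names(\sigma_2) = \names(\sigma)$ by \textbf{C2}. For (b) I split $T_\sigma$ into the ``easy'' facets in $T_{\sigma_1}\cup T_{\sigma_2}$, handled directly by our assumption on $v$, and the ``new'' facets $\phi^* \in T_\sigma \setminus (T_{\sigma_1}\cup T_{\sigma_2})$, which contain $\sigma$ but neither $\sigma_1$ nor $\sigma_2$. These new facets are the only source of difficulty.

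For a new $\phi^*$, I would fix any $\phi_1\in T_{\sigma_1}$ and $\phi_2\in T_{\sigma_2}$; note that $\sigma \subseteq \phi_1\cap\phi^*$ and $\sigma\subseteq\phi_2\cap\phi^*$, and in particular $p_i$ and all vertices in $\names(\sigma)$ are shared. The essential property of $g$ stated in the paragraph preceding the claim then tells us that, for every shared vertex $(p,\lambda_p^N)\in \phi_j\cap \phi^*$, the information that $p$ propagates in rounds $N+1,\dots,M$ to the other shared vertices is identical in $G_{\phi_j}$ and in $G_{\phi^*}$. Since $v\in g(\phi_1)\cap g(\phi_2)$, the label $\lambda_i^M$ is consistent with two executions having different active-process sets $\names(\phi_1)$ and $\names(\phi_2)$; hence every vertex appearing in $\mathrm{hist}(\lambda_i^M)$ must lie in $\names(\phi_1)\cap \names(\phi_2)\subseteq \names(\sigma)\subseteq \names(\phi^*)$. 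Applying the essential property vertex by vertex along this history then shows that $p_i$ computes exactly the same view $\lambda_i^M$ in the execution from $\phi^*$, i.e., $(p_i,\lambda_i^M)$ is a vertex of the ``full'' image of $\phi^*$. Finally, $\mathrm{hist}(\lambda_i^M)\cap \dirty{\phi^*}=\varnothing$ because all vertices in $\mathrm{hist}(\lambda_i^M)$ are carried by processes in $\names(\sigma)\subseteq \names(\phi^*)$, and these cannot be ``dirty'' witnesses of unclean crashes in $\phi^*$; thus $v\in g(\phi^*)$, completing (b) and the proof.

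The main obstacle is the ``new facet'' case: formalizing what the essential property buys us at the level of the full heard-of history $\lambda_i^M$ (rather than at the level of a single round of direct messages), and showing that every vertex appearing in that history is necessarily shared by $\phi^*$ so that the per-round identical propagation lifts to identical views after $M-N$ rounds. The remaining administrative items (the names check, the dirty-vertex check, and the trivial direction via monotonicity) are short.
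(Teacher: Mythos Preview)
Your overall plan is reasonable and broadly parallel to the paper's, but the argument for the ``new facet'' case contains a genuine gap. You want to conclude that in the execution from $\phi^*$ the process $p_i$ computes \emph{exactly} the view $\lambda_i^M$; for this you invoke the ``essential property'' and the fact that $\mathrm{hist}(\lambda_i^M)\subseteq\sigma\subseteq\phi^*$. But that direction of containment is not enough: nothing you have said rules out the existence of a process $r\in\names(\phi^*)\setminus\names(\sigma)$ that, in some round $N{+}1,\dots,M$, sends to a node on the path to $p_i$ in $G_{\phi^*}$, thereby enlarging $p_i$'s heard-of history and making its round-$M$ view in $\phi^*$ different from $\lambda_i^M$. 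The ``essential property'' only says that a \emph{shared} vertex propagates the same information along shared vertices; it does not prevent additional, non-shared sources of information in $\phi^*$. This is precisely the obstruction the paper's proof works to eliminate: it traces a minimal path from such an $r$ to $\names(x)$, isolates the boundary edge $z'\to y$ with $y\in\sigma$ and $z'\notin\sigma$, argues (using the path-suffix-in-$G_{\phi_1}$ trick) that $z'\notin\phi_1\cup\phi_2$, and then---crucially invoking \textbf{C1}---concludes that $\names(z')$ crashed uncleanly in round~$N$ after sending to $\names(y)$, so $y\in\dirty{\phi_1}\cap\dirty{\phi_2}$, contradicting $x\in g(\phi_1)\cap g(\phi_2)$. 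Your sketch never reaches this path-tracing/dirty-vertex step, and without it the claim that $p_i$'s view in $\phi^*$ equals $\lambda_i^M$ is unsupported.

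There are two smaller issues as well. First, from ``fix any $\phi_1\in T_{\sigma_1}$, $\phi_2\in T_{\sigma_2}$'' you write $\names(\phi_1)\cap\names(\phi_2)\subseteq\names(\sigma)$; this is false for a fixed pair (take $\phi_1=\phi_2$ containing $\sigma$ properly). What is true, and what you probably intend, is to intersect over \emph{all} $\phi\in T_{\sigma_1}$ and use \textbf{C2} to get $\mathrm{hist}(\lambda_i^M)\subseteq\sigma_1$, and symmetrically $\subseteq\sigma_2$. Second, your dirty-vertex check ``vertices in $\names(\sigma)\subseteq\names(\phi^*)$ cannot be dirty in $\phi^*$'' is incorrect: a vertex $(q,\lambda_q^N)\in\sigma$ is dirty in $\phi^*$ exactly when $\names(\lambda_q^N)\not\subseteq\names(\phi^*)$, and membership of $q$ in $\names(\phi^*)$ has no bearing on that. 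These are fixable, but the missing path/dirty argument in the previous paragraph is the real gap.
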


\begin{proof}
    Let $\phi_1, \phi_2$ be facets of $\mathcal{P}_{N}$, and let $\sigma = \phi_1 \cap \phi_2$. We show that $g(\phi_1) \cap g(\phi_2) = g(\sigma)$ by proving that $g(\phi_1) \cap g(\phi_2) \subseteq g(\sigma) $. 
    Assume by contradiction that there exists a vertex $x \in g(\phi_1) \cap g(\phi_2)$, but $x \notin g(\sigma)$. 
    Since $\sigma = \phi_1 \cap \phi_2$,  and thanks to the definition of $T_{\sigma}$,
    $ \sigma =\cap_{\phi \in T_{\sigma}} \phi$ must obviously hold.
    By the definition of~$g$, we then get $g(\sigma) =\cap_{\phi \in T_{\sigma}} g(\phi)$. 
    Hence, there must be some $\phi_3 \in T_{\sigma}$ such that $x \notin g(\phi_3)$, as well as a unique $x' \in \sigma$ with $\names(x') = \names(x)$.
 
    \begin{figure}
        \centering
        \begin{tikzpicture}[scale = 0.35]

\tikzstyle{whitenode}=[circle,minimum size=0pt,inner sep=0pt,font=\large]
\tikzstyle{thicknode}=[circle,minimum size=0pt,inner sep=0pt]

\begin{scope}[shift={(0,0)}]

\draw[color=blue] (0,0) circle [radius=5];
\draw[color=blue] (8,0) circle [radius=5];
\draw[color=red] (4,-2) circle [radius=5];
\draw (-4.5,4.5) node[whitenode] ()   [] {$\phi_1$};
\draw (12.5,4.5) node[whitenode] ()   [] {$\phi_2$};
\draw (1,-7) node[whitenode] ()   [] {$\phi_i$};

\draw (4,2) node[whitenode] (x)   [] {$x'$};
\draw (4,-2) node[whitenode] (y)   [] {$y$};
\draw (4,-4) node[whitenode] (z')   [] {$z'$};
\draw (5,-6) node[whitenode] (z)   [] {$z$};

\path[dashed,draw] (x) edge (y);
\path[dashed,draw] (z') edge [out=-80,in=125] (z);
\path[draw] (z') edge (y);

\end{scope}

\end{tikzpicture}
        \caption{Illustration of the strictness proof in \cref{claim:strictness}.}
        \label{fig:illustrationstrictness}
    \end{figure}
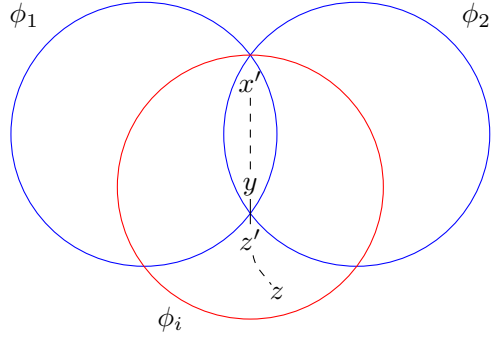
    
    Since $x \in g(\phi_1) \cap g(\phi_2)$, node $\names(x)$ can only hear from nodes in $\names(\sigma)$ in the subgraphs  $G_{\phi_1}$ and $G_{\phi_2}$
    of $G$ induced by $\names(\phi_1)$ and $\names(\phi_2)$, respectively. 
    On the other hand, $\names(x)$ must hear from a process outside $\names(\sigma)$ in $G_{\phi_3}$. Indeed, assume for a contradiction that this is not the case. Then,  in any of the graphs $G_{\phi_1},G_{\phi_2},G_{\phi_3}$, node $\names(x)$ hears only from nodes in $\names(\sigma)$, and all nodes in $\names(\sigma)$ are of course included in $G_{\phi_1},G_{\phi_2},G_{\phi_3}$. Node $\names(x)$ hence has the same heard-of history in $G_{\phi_1},G_{\phi_2},G_{\phi_3}$, contradicting the assumption $x \notin g(\phi_3)$. Therefore, there must exist some $z \in \phi_3$ with $z \notin \phi_1$ and $z \notin \phi_2$ such that $\names(x)$ hears from $\names(z)$ in $G_{\phi_3}$ via some path $P$ in round $N+1,\dots,M$ (see \cref{fig:illustrationstrictness} for an illustration). Let $y \in \sigma$ be such that 
    \begin{itemize}
        \item $\names(y)$ is a node in $P$ that is the closest to $\names(x')=\names(x)$ in $G_{\phi_3}$, and 
        \item $\names(y)$ has a neighbor $\names(z') \in P$ and $z' \notin \sigma$. 
    \end{itemize}
    We must have $z' \notin \phi_1$. Indeed, if $z' \in \phi_1$, then the path suffix $P' \subseteq P$ leading from $\names(z') \rightarrow \names(y) \rightarrow \names(x)$ would be in $G_{\phi_1}$, so $\names(x)$ would hear from $\names(z') \not\in \names(\sigma)$ in $G_{\phi_1}$ in rounds $N+1,\dots,M$, which contradicts that it can only hear from nodes in $\names(\sigma)$ as established above. Analogously, $z' \notin \phi_2$ must hold. Note carefully, however, that the
    path suffix $P'' \subseteq P' \subseteq P$ leading from $\names(y) \rightarrow \names(x)$ is contained in any of $G_{\phi_1},G_{\phi_2},G_{\phi_3}$.

    Since $z'\not\in \phi_1 \cup \phi_2$ but $y \in \phi_1 \cap \phi_2 \cap 
    \phi_3$, process $\names(z')$ must have crashed uncleanly in round $N$ after
    sending to $\names(y)$ in both $\phi_1$ and $\phi_2$ (note that it is here
    where we need condition \textbf{C1} in \cref{def:additionalconditions}). However, in that case, \cref{eq:gfacet} would guarantee that $\names(x) \not\in g(\phi_1) \cup g(\phi_2)$,
    which contradicts our initial assumption $x \in g(\phi_1) \cap g(\phi_2)$.
        
    By monotonicity of $g$, it follows that $g(\phi_1) \cap g(\phi_2) = g(\phi_1 \cap \phi_2)$ for every two facets $\phi_1,\phi_2$ of $\mathcal{P}_N$.

    We also need to prove strictness for faces, so let $\sigma_1,\sigma_2$ be two faces of $\mathcal{P}_{N}$ with $\sigma = \sigma_1 \cap \sigma_2$, and assume for a 
    contradiction that there is some $x\in g(\sigma_1)\cap g(\sigma_2)$ but $x\not\in g(\sigma)$. According to \cref{eq:gface}, this implies that $x \in \phi_1 \cap \phi_2$ for any two facets $\phi_1\in T_{\sigma_1}$ and $\phi_2\in T_{\sigma_2}$,
    but that there is some facet $\phi_3 \in T_{\sigma}$ with $x \not\in g(\phi_3)$.
    If we pick any such $\phi_1$ and $\phi_2$, we might observe $\phi'=\phi_1 \cap \phi_2
    \supset \phi$, but still $x\in g(\phi_1)\cap g(\phi_2)$ but $x\not\in g(\phi_3)$.
    It is easy to see, in particular, from \cref{fig:illustrationstrictness}, that the above contradiction proof applies also here, since its arguments are not affected 
    by assuming $\phi' \supset \phi$.
\end{proof}

We will now utilize our carrier map $g$ for establishing our desired agreement overhead
lower bound. As an appetizer, we will first provide a simple-to-prove eccentricity-based definition of a graph radius, which requires a static communication graph, i.e., 
$G=G_{N+1}=G_{G+2}=\dots=G_{M}$ in rounds $N+1,\dots,M$. 
Note that here, $G_\phi$
is equal to the $(M-N)$-th power of the subgraph of $G$ induced by the processes present in $\phi$.

\begin{definition}
    For a node set $D \subseteq V$, the \emph{eccentricity} of $D$ in the graph $G=(V,E)$, denoted $\ecc(D,G)$, is the smallest integer $d$ such that for every node $v\in V$ there is a path from some node $u\in D$ to $v$ in $G$ consisting of at most $d$ hops.\\
    If $G$ is a dynamic graph, $\ecc(D,G)$ is similarly defined, but with a temporal path from $u$ to~$v$; that is, if all nodes in $D$ broadcast the same message in $G$ by flooding, then all nodes in $V$ receive the message in at most $d$ rounds.
\end{definition}

By this definition, if all nodes in $D$ broadcast for $\ecc(D,G)$ rounds in the distributed message-passing model, every node in $V(G)$ hears from at least one node in $D$. 
By considering the set $D$ that minimizes the eccentricity, we define a corresponding \emph{Radius} $\Rad(G,t+k) = \min_{D \subseteq V, |D| =t+k} \ecc(D,G)$. This allows us to state the following essential property of the corresponding carrier map $g$:
\begin{lemmarep}\label{lem:radiusconnectivity}
    Let $R = M-N$. For a static communication graph $G$, if $R < \Rad(G,t+k)$, then $g: \mathcal{P}_{N} \rightarrow \mathcal{P}_M$ is a $(k-1)$-connected carrier map.
\end{lemmarep}

\begin{proof}
    We show that for every face $\sigma$ of $\mathcal{P}_{N}$ with $codim(\sigma) \leq k$, $g(\sigma) \neq \varnothing$. Then, since $g(\sigma)$ is a face of $\mathcal{P}_M$, $g(\sigma)$ is $(k-1)$-connected. 
    Recall that $g(\sigma) = \{\rho\}$, where $\rho$ is the maximal simplex in $\bigcap_{\phi \in T_{\sigma} } g(\phi)$ satisfying $\names(\rho)\subseteq \names(\sigma)$. 

    Let $S = V(G) \setminus \names(\sigma)$, $|S| \leq t+k$. Since $R < \Rad(G,t+k)$, there is a node $p \in \names(\sigma)$ such that $p$ does not hear from any node in $S$ after $R$ rounds in $G$. 
    It implies that, for every $\phi \in T_{\sigma}$, node $p$ does not hear from $S$ in $G_{\phi}$ in rounds $N+1,\dots,M$. 
    There is hence a vertex $x$ with $\names(x) =p \in \names(\sigma)$ and $x \in \bigcap_{\phi \in T_{\sigma} } g(\phi)$. Thus, $x \in g(\sigma) \neq \varnothing$ as claimed.
\end{proof}

\begin{theoremrep}\label{thm:newlowerbound}
For every graph $G$, $t\geq 0$ and $k\geq 1$, there are no algorithms solving $k$-set agreement in the $t$-resilient model in $G$ in less than $R = \lfloor \frac{t}{k} \rfloor + \Rad(G,t+k)$ rounds.
\end{theoremrep}

\begin{proof} 
The proof is literally the same as the one for \cref{thm:crashlowerbound}, except that it replaces \cref{eq:carrierchainclique} by the following chain
of carrier maps, with $N=\lfloor \frac{t}{k} \rfloor$ and $M=N+R$:
    \begin{equation}
    \mathcal{K}_{0} \xrightarrow{f_0} \mathcal{K}_{1} \xrightarrow{f_1} \ldots \xrightarrow{f_{N-1}} \mathcal{K}_{N}=\mathcal{P}_N \xrightarrow{g} \mathcal{P}_{M} \label{eq:newchain}
    \end{equation}
Herein, $\mathcal{K}_{0}=\mathcal{P}^{(0)}[J]=\mathcal{I}[J]=\Psi(P_i, J \mid i \in \{0,\dots,n\})$ is again the input complex, $\mathcal{K}_{1}=\mathcal{P}^{(1)}[J], \dots, \mathcal{K}_{N}=\mathcal{P}^{(N)}[J]$ are the protocol complexes resulting from the $N$ crashing rounds, $\mathcal{P}_N=\mathcal{K}_{N}$ is the source complex for our carrier map $g$, and finally $\mathcal{P}[J]=\mathcal{P}_{M}$ is the protocol complex reached from $\mathcal{P}_N$ after $M-N=\Rad(G,t+k)$ rounds.
\end{proof}

Now we will finally turn to our ultimately desired lower bound, which essentially follows from the lower bound given in \cref{thm:newlowerbound}, by replacing $\Rad(G,t+k)$ with the $(t,k)$-radius $\rad(G,t,k)$ defined as follows:

\begin{definition}[$(t,k)$-radius of a graph sequence $\G$]\label{def:refinedradius}
For an $n$-node graph sequence $\G=G_{N+1},\dots,G_M$ and any two integers $t,k$ with $t\geq 0$ and $k\geq 1$, we define the \emph{$(t,k)$-radius} $\rad(G,t,k)$ as follows:
\begin{equation}
\rad(G,t,k) = \min_{D,|D|=t+k} \max_{D' \subseteq D, |D'|=t} \ecc(D \setminus D',G \setminus D'). 
\end{equation}
\end{definition}
Recall that $\ecc(D \setminus D',G \setminus D')$ is the number of rounds needed for $D \setminus D'$ to collectively broadcast in the subgraph sequence of $G$ induced by $\Pi \setminus D'$.

Note carefully that this definition generalizes the definition of the $(t,k)$-radius of a static graph already given in \cref{eq:raddefstatic} to our graph sequences.

\begin{lemmarep}\label{lem:radiusconnectivityrefined}
     Let $R = M-N$. If $R < \rad(G,t,k)$, then $g: \mathcal{P}_{N} \rightarrow \mathcal{P}_M$ is a $(k-1)$-connected carrier map.
\end{lemmarep}

\begin{proof}

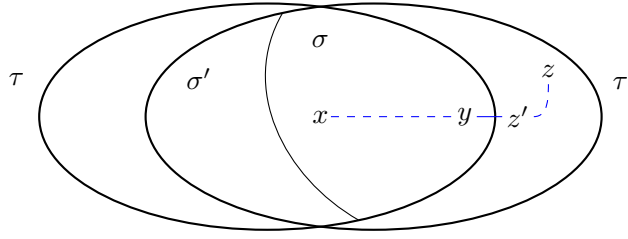
\begin{figure}[ht]
    \centering
    \centering
\begin{tikzpicture}

\tikzstyle{whitenode}=[circle,minimum size=0pt,inner sep=0pt,font=\large]

\begin{scope}[shift={(0,0)}]

\draw[thick] (0.3,0) ellipse (3cm and 1.5cm);
\draw[thick] (1.7,0) ellipse (3cm and 1.5cm);
\path (0.5,1.37) [out=-120,in=150] edge (1.5,-1.37);

\draw (1,1) node[whitenode] ()   [] {$\sigma$};
\draw (-0.6,0.5) node[whitenode] ()   [] {$\sigma'$};
\draw (-3,0.5) node[whitenode] ()   [] {$\tau$};
\draw (5,0.5) node[whitenode] ()   [] {$\tau'$};

\draw (1,0) node[whitenode] (x)   [] {$x$};
\draw (2.9,0) node[whitenode] (w)   [] {$y$};
\draw (3.6,0) node[whitenode] (z)   [] {$z'$};
\draw (4,0.6) node[whitenode] (y)   [] {$z$};

\path[dashed,draw,blue] (x) edge (w);
\path[draw,blue] (z) edge (w);
\path[dashed,draw,blue] (z)  [out=0,in=-90] edge (y);

\end{scope}

\end{tikzpicture}
    \caption{Two facets $\tau,\tau'$ in $\mathcal{P}_N$ with $\tau \cap \tau' = \sigma'$. The vertices $x$ resp.\ $y$ resp.\ $\{z',z\}$ belong to $\sigma$ resp.\ $\sigma \subseteq \sigma'$ resp.\ $\tau'$ as shown.}
    \label{fig:enter-label_ne}
\end{figure}

    Let $\sigma$ be a face of $\mathcal{P}_{N}$ with $codim(\sigma) \leq k$.
    It suffices to show that $g(\sigma) \neq \varnothing$: since $g(\sigma)$ is a face of $\mathcal{P}_M$, $g(\sigma)$ must be $(k-1)$-connected.
    
    For every facet $\tau$ of $\mathcal{P}_{N}$, recall that $G_\tau$ denotes the subgraph sequence of $G$ induced by $\names(\tau)$.
    Choose $D = \Pi \setminus \names(\sigma)$, which must satisfy $t \leq |D| \leq t+k$. Due to condition \textbf{C1} in \cref{def:additionalconditions}, there is indeed a facet~$\tau$ containing $\sigma$ in $\mathcal{P}_N$ such that $\names(\tau)=\Pi\setminus D'$. Since $R < \rad(G,t,k)$, there is hence a process $p \in \names(\sigma)$, and $D' \subseteq D, |D'|=t$ such that $p$ does not hear from any process in $D \setminus D'$ in $G_\tau$ in rounds $N+1,\dots,M$.

     Let $x=(p,\lambda_p)\in\sigma$ be the
    vertex corresponding to $p$.
    So even if every node broadcasts in $G_\tau$ during rounds $N+1,\dots,M$, node $p \in \names(\sigma)$ does not hear from any process in $\names(\tau) \setminus \names(\sigma)$.

    Now assume that there is a facet $\tau' \supseteq \sigma$ in $\mathcal{P}_{N}$ such that process $p=\names(x)$ hears from a process $\names(z) \in \names(\tau') \setminus \names(\sigma)$ in $G_{\tau'}$ in rounds $N+1,\dots,M$, see \cref{fig:enter-label_ne}
    for an illustration. Define $\sigma' = \tau \cap \tau' \supseteq \sigma$.
    Let $P$ be a path in $G_{\tau'}$ (of course of length less or equal to $R$) 
    leading from $\names(z) \rightarrow \names(x)$, and 
    let $\names(z') \in P \setminus \names(\sigma)$ be the node closest to $\names(x)$ in $P$ outside $\sigma$. If $\names(z')$ belonged to $\names(\sigma')$, then $\names(x)$ would hear from $\names(z')$ also in $G_\tau$ within $R$ rounds, through the path suffix $P'\subseteq P$ going from
    $\names(z') \rightarrow \names(x)$, which contradicts our assumption. Thus, $\names(z') \in \names(\tau') \setminus \names(\tau)$.
    Consequently, the path $P'$ from $\names(z')$ to $\names(x)$ only contains $\names(z')$ and processes from $\names(\sigma)$. 
    Let $\names(y) \in P \cap \names(\sigma)$ be the neighbor of $z'$ in $P$ contained in $\sigma$,
    i.e., $\names(z')\in \In_{\names(y)}(G_N)$ in round $N$:
    Indeed, in the scenario corresponding to the facet $\tau$, $\names(z')$ is dead in $\mathcal{P}_N$, but alive in the scenario corresponding to $\tau'$.  Therefore, in round $N$, process $\names(y)$ hears from $\names(z')$ in $G_{\tau'}$, but does not hear from $\names(z')$ in  $G_\tau$. 
    But then, according to $g$'s ``discarding'' of ``dirty'' source vertices in \cref{eq:gfacet}, $y \not\in \sigma'$, contradicting our assumption.
    
    Thus, for every facet  $\tau' \supseteq \sigma$ in $\mathcal{P}_{N}$, $\names(x)$ does not hear from any process in $\names(\tau') \setminus \names(\sigma)$ in $G_{\tau'}$ in rounds $N+1,\dots,M$.
    Consequently, there must be a vertex $x$ in $\sigma$ with $\names(x) = p$, and $x \in g(\sigma)$. So, $g(\sigma) \neq \varnothing$.
\end{proof}

Exactly the same proof as for \cref{thm:newlowerbound} thus yields the refined lower bound stated in
the following theorem:

\begin{theorem}\label{thm:radlowerbound}
    For every graph $G$, $t\geq 0$ and $k\geq 1$, there are no algorithms solving $k$-set agreement in $G$ in the $t$-resilient model in strictly less than $R = \lfloor \frac{t}{k} \rfloor + \rad(G,t,k)$ rounds.
\end{theorem}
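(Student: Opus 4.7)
My plan is to mimic the proof of \cref{thm:newlowerbound} almost verbatim, substituting the refined quantity $\rad(G,t,k)$ from \cref{def:refinedradius} for the coarser $\Rad(G,t+k)$ throughout. The high-level strategy is the standard connectivity-based impossibility argument: I will suppose, for contradiction, that some algorithm solves $k$-set agreement in strictly fewer than $R = \lfloor t/k \rfloor + \rad(G,t,k)$ rounds, and then show that the resulting protocol complex is too highly connected to admit a simplicial chromatic decision map agreeing with the $k$-set agreement carrier map $\Delta$.

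Concretely, fix any $J \subseteq [k+1]$ and assemble, starting from the (shellable, pseudosphere) input complex $\mathcal{K}_0 = \mathcal{I}[J] = \Psi(\{(p_i, J) \mid i \in [n]\})$, the chain
\begin{equation*}
\mathcal{K}_0 \xrightarrow{f_0} \mathcal{K}_1 \xrightarrow{f_1} \cdots \xrightarrow{f_{N-1}} \mathcal{K}_N = \mathcal{P}_N \xrightarrow{g} \mathcal{P}_M,
\end{equation*}
where $N = \lfloor t/k \rfloor$, $M - N < \rad(G,t,k)$ by the contradiction hypothesis, the $f_i$'s are the shellable generalized-graph carrier maps from \cref{eq:generalizedfi}, and $g$ is the agreement-overhead carrier map from \cref{def:aocarriermap}. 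Each $f_i$ is $(k-1)$-shellable and codimension-preserving by \cref{lem:allshellable}, and \cref{lem:radiusconnectivityrefined} applied with $R = M - N < \rad(G,t,k)$ yields that $g$ is $(k-1)$-connected. Invoking \cref{lem:ellchain} on this chain, the composition $g \circ f_{N-1} \circ \cdots \circ f_0$ is a $(k-1)$-connected carrier map; then \cref{lem:connected_image}, applied to the pure shellable input complex $\mathcal{I}[J]$, implies that $\mathcal{P}_M[J]$ is $(k-1)$-connected, and hence $(\dim(J)-1)$-connected since $\dim(J) \leq k$. As this holds for every $J \subseteq [k+1]$, \cref{thm:imposs} delivers the desired contradiction.

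The only non-routine step is to ensure that \cref{lem:radiusconnectivityrefined} is genuinely applicable to the source complex $\mathcal{P}_N = f_{N-1}(\kappa)$ appearing in this chain, i.e., that conditions \textbf{C1} and \textbf{C2} of \cref{def:additionalconditions} hold at level $N$; this was already verified in the discussion immediately following that definition, since every $k$-subset of still-alive processes is crashed in round $N$ and every facet of $\mathcal{P}_N$ is recovered as an intersection of facets involving such crashes. Beyond this bookkeeping, the argument is a direct transcription of the proof of \cref{thm:newlowerbound}: the shellability-plus-connectivity machinery of \cref{sec:layer,sec:ourgeneralizedlayeredanalysis} does all the work, and the refinement from $\Rad$ to $\rad$ is absorbed entirely inside \cref{lem:radiusconnectivityrefined}.
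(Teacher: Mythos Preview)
Your proposal is correct and follows essentially the same approach as the paper, which explicitly states that the proof of \cref{thm:radlowerbound} is identical to that of \cref{thm:newlowerbound} with \cref{lem:radiusconnectivityrefined} substituted for \cref{lem:radiusconnectivity}. The only minor imprecision is that you cite \cref{lem:allshellable} (which is for the complete graph) for the $(k-1)$-shellability and codimension properties of the generalized $f_i$ from \cref{eq:generalizedfi}; in the paper these properties are established by \cref{claim:strictnessgeneralizedfi}, \cref{lem:revised13.5.5}, and the surrounding discussion in \cref{sec:ourgeneralizedlayeredanalysis}.
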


Our analysis also provides a lower bound for systems with $t$ initially dead processes, by starting from the source complex $\mathcal{P}_N=\skel_{n-t-1}\bigl(\Psi(\{(p_i,[k+1]) \mid i 
\in [n]\})\bigr)$, which is shellable according to \cref{thm:shellabilityskelPS}.
Analogous to \cref{thm:radlowerbound}, this concludes in the following theorem. 

\begin{theorem}\label{thm:lowerboundinitiallydead}
For every graph $G$, $t\geq 0$ and $k\geq 1$, there are no algorithms solving $k$-set agreement in $G$  with $t$ initially dead processes in less than $\rad(G,t,k)$ rounds.
\end{theorem}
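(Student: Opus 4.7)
The plan is to mimic the proof of \cref{thm:radlowerbound}, specialized to the degenerate case $N=0$: when the $t$ failures are all initial, no crashing rounds precede the agreement-overhead phase, so the whole analysis collapses to a single application of the carrier map $g$ of \cref{def:aocarriermap}. First, I would suppose for contradiction that some algorithm solves $k$-set agreement with $t$ initially dead processes in $M<\rad(G,t,k)$ rounds, and take as source complex $\mathcal{P}_N=\skel_{n-t-1}\bigl(\Psi(\{(p_i,[k+1])\mid i\in [n]\})\bigr)$. This complex exactly enumerates the admissible initial configurations -- some $t$ of the $n$ processes are absent and the remaining carry arbitrary inputs from $[k+1]$ -- and it is pure and shellable by \cref{thm:shellabilityskelPS}.

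Next I would apply $g:\mathcal{P}_N\to \mathcal{P}_M$, after checking that \cref{def:additionalconditions} is satisfied for the skeleton source. Condition \textbf{C1} is immediate because the $(n-t-1)$-skeleton of a pseudosphere contains every $(n-t-1)$-face, so every admissible name set of size $n-t$ is realized by some facet. Condition \textbf{C2} is also immediate: in the initially-dead model the label $\lambda_i^N$ attached to a vertex is simply an input value $x_i\in\V$ that carries no process names, so $\dirty{\phi}=\varnothing$ for every facet $\phi$ and no intersection subtleties arise. Thus $g$ is a well-defined strict carrier map by \cref{claim: g is carrier} and \cref{claim:strictness}, and since $M<\rad(G,t,k)$, \cref{lem:radiusconnectivityrefined} certifies that $g$ is $(k-1)$-connected.

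To feed this into \cref{thm:imposs}, I would need, for every $J\subseteq [k+1]$, that the subcomplex $\mathcal{P}_M[J]$ is $(\dim(J)-1)$-connected. For each such $J$, I would rerun the identical construction from the restricted source $\mathcal{P}_N[J]=\skel_{n-t-1}\bigl(\Psi(\{(p_i,J)\mid i\in [n]\})\bigr)$, which is again pure and shellable by \cref{thm:shellabilityskelPS}. The non-emptiness/eccentricity argument inside \cref{lem:radiusconnectivityrefined} depends only on process names and the graph sequence $G$, not on the label alphabet, so $g$ restricted to $\mathcal{P}_N[J]$ is still $(k-1)$-connected, hence a fortiori $(\dim(J)-1)$-connected since $\dim(J)\leq k$. \cref{lem:connected_image} then yields the required $(\dim(J)-1)$-connectivity of $g(\mathcal{P}_N[J])=\mathcal{P}_M[J]$, and \cref{thm:imposs} contradicts solvability.

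The main obstacle I anticipate is mostly bookkeeping: one must verify that \textbf{C1} still holds after restricting labels to $J$ (trivial for the skeleton), and that $g$ returns coherent images when restricted (automatic since $g$ is defined via in-history rather than via values). Once these points are in place, the argument is a direct splicing of \cref{lem:radiusconnectivityrefined}, \cref{lem:connected_image}, and \cref{thm:imposs}, exactly as in the proof template used for \cref{thm:newlowerbound} and \cref{thm:radlowerbound}, but with the trivial leading chain (no $f_i$'s at all).
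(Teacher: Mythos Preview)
Your proposal is correct and matches the paper's approach exactly: the paper simply remarks that the result follows ``analogous to \cref{thm:radlowerbound}'' by taking the source complex $\mathcal{P}_N=\skel_{n-t-1}\bigl(\Psi(\{(p_i,[k+1])\mid i\in[n]\})\bigr)$ (shellable by \cref{thm:shellabilityskelPS}) and applying $g$, and you have spelled out precisely those details. One minor slip worth fixing: your justification of \textbf{C2} via $\dirty{\phi}=\varnothing$ is not the relevant reason---\textbf{C2} is the purely structural statement that every face equals the intersection of the facets containing it, which holds for the pseudosphere skeleton by its regularity (as the paper notes); the observation that $\dirty{\phi}=\varnothing$ is instead what makes \cref{eq:gfacet} collapse to the full image and what simplifies the strictness argument.
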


Note that we will establish in \cref{sec:alternativeproof} (see \cref{thm:newlowerbound2}) that \cref{thm:lowerboundinitiallydead} for $t=0$ 
(almost) coincides with the lower bound for $k$-set agreement in the KNOW-ALL model established in \cite{castaneda2021topological} (see \cref{thm:impossdominance}).

\subsection{An alternative proof: Generalizing scissors cuts}
\label{sec:alternativeproof}

We re-use/adapt the main results of \cite{castaneda2021topological} for an alternative proof of the 
lower bound on the agreement overhead $\ovh(G,k,t)$ and, hence, 
on the overall $k$-set agreement lower bound, which we re-state in \cref{thm:newlowerbound2} below.
Due to the remarkably generic analysis introduced in \cite{castaneda2021topological}, we just need to develop 
a version of \cite[Lemma~4.2]{castaneda2021topological} that applies to our setting. 
We do so in \cref{lem:new-Lemma-4-2-ulrich} below.

Our main purpose is to start from some $\mathcalover{K}_N =f_N(\kappa_{N-1}) \subseteq \mathcal{K}_N$, for
some face $\kappa_{N-1} \in f_{N-1}(\mathcal{K}_{N-1})$, generated by the first $N=\lfloor t/k \rfloor$ 
rounds of \cref{lem:carrierchain} in the generalized setting of \cref{sec:ourgeneralizedlayeredanalysis}.
For generality, however, we will analyze a more general case below, by starting from an arbitrary shellable input complex $\mathcal{I}_t$ with facet dimension $n-t-1$ (we again assume
for simplicity that $k$ evenly divides $t$). 
By instantiating
$\mathcal{I}_t=\mathcalover{K}_N$, we will obtain \cref{thm:newlowerbound2}. Note carefully that the total 
number of processes that could appear in the faces of $\mathcal{I}_t=\mathcalover{K}_N$ is only  
$n-(N-1)k=n-t+k$ here, since the $(N-1)k$ processes that crashed already in rounds $< N$ cannot 
occur in any face (or facet) of $f_{N-1}(\mathcal{K}_{N-1})$. 

An interesting consequence of our choice to start the analysis from an arbitrary shellable complex, is that it also yields lower bounds for networks with initially dead processes.
Indeed, by choosing
$\mathcal{I}_t=\skel_{n-t-1}\bigl(\Psi(n,k+1)\bigr)$, which contains all the faces of $\Psi(n,k+1)$ with at most $n-t$ 
vertices, our result also provides
a lower bound for $k$-set agreement with $t$ 
initially dead processes, \cref{thm:lowerboundinitiallydead2}. The total number of processes that could
appear in the faces of $\mathcal{I}_t$ is of course $n$ here. 

In the same vein, and for the sake of generality and consistency with \cite{castaneda2021topological},
we will henceforth consider an arbitrary 
number $r$ of rounds, denoted $1,\ldots, r$.
This is albeit the fact that we will finally apply our results primarily to rounds $N+1,\ldots,M$, for some to-be-determined $M$, 
following the first
$N$ crashing rounds.
Let $G^1,\dots, G^r$ be the directed communication graphs used in rounds $1,\dots,r$, which may be known to the processes and could (but, of course, need not) be different in different rounds. 
Note that we will assume  that $V[G^1]=\dots V[G^r]=\Pi$ with $|\Pi|=n$ are all the $n$ processes, while only $n-t$ participate in each execution.
For every facet $\sigma$ of  $\mathcal{I}_t$, let $G^1_{\sigma},\dots, G^r_{\sigma}$
be the corresponding sequence of subgraphs induced by the $n-t$ processes in 
$\names(\sigma)$, and denote by $G_\sigma = G^1_{\sigma} \circ 
\dots \circ G^r_{\sigma}$ their subgraph product:
$(p,q)\in E[G_\sigma]$ implies that $q$ hears from $p$ within rounds $1,\dots,r$, 
i.e., that the view of process $q$ at the end of
round $r$ contains $p$'s view $(p,x) \in \sigma$ when starting out from $\sigma\in \mathcal{I}_t$.
To prepare for the later need to discard an edge $(p,q)\in E[G_\sigma]$ that is caused solely
by directed paths routed over a given process $p_\sigma$, we will call $(p,q)$ a \emph{$p_\sigma$-caused edge} in this case.
Note that our analysis does not assume anything about the connectivity of any $G_\sigma$, 
which can hence be connected or disconnected.

Let $\Xi:\mathcal{I}_t\to \mathcal{P}$ denote 
the carrier map corresponding to $r$ rounds of communication governed by the induced 
subgraph $G_\sigma$ that starts from the facet $\sigma\in\mathcal{I}_t$. More formally, for every facet
$\sigma = \{(p_i,x_i) \mid i \in I_\sigma \subseteq [n], |I|=n-t\} \in \mathcal{I}_t$,
where $I_\sigma$ denotes the set of indices of the processes in $\sigma$,
let $\Xi(\sigma)$ be the unique facet of $\mathcal{P}$ defined as
\[
\Xi(\sigma) = \{ (p_i,\lambda_i) \mid i \in I_\sigma, \mbox{$\lambda_i$ is the heard-of history of $p_i$ in $G_\sigma$ (after $r$ rounds)}\}.
\]

Given a sequence $\sigma_0,\sigma_1,\dots,\sigma_\ell$ of different facets 
of  $\mathcal{I}_t$, each containing $n-t$ vertices, we abbreviate by
$G_0=G_{\sigma_0},G_1=G_{\sigma_1},\dots,G_\ell=G_{\sigma_\ell}$ the corresponding 
sequence of induced subgraph products (note that here, ``sequence'' carries no temporal meaning). 
In \cref{lem:new-Lemma-4-2-ulrich} below, we will assume that, for every $1\leq i \leq \ell$, 
$\sigma_i\cap\sigma_0$ is a face of~$\mathcal{I}_t$ with dimension $n-t-2$, i.e., consists of $n-t-1$ vertices (the facets~$\sigma_i$
are called \emph{petals} in \cite{castaneda2021topological}).
Consequently, $\sigma_0$ and $\sigma_i$ must differ in exactly two vertices $v_i=(p_i,x_i) 
\in \sigma_i \setminus \sigma_0$ and $v'_i=(p'_i,x'_i) \in \sigma_0\setminus \sigma_i$. 
The process $p_i$ will be called $\sigma_i$'s \emph{present process},
whereas $p_i'$ will be called $\sigma_i$'s \emph{absent process} (albeit these
terms should be interpreted with some caution, see (i) below).
For two facets $\sigma_i\neq \sigma_j$, we may have $v_i=v_j$ or $v_i'=v_j'$, while the requirement
$\sigma_i \neq \sigma_j$ ensures that both $v_i=v_j$ and $v_i'=v_j'$ is impossible.
For the present and absent processes $p_i$ and $p_i'$ of $\sigma_i$, there are hence 
two possibilities:
\begin{itemize}
    \item[(i)] $p_i=p_i'$ and hence $G_0=G_i$ but $x_i\neq x_i'$, i.e., the absent and the present
process is actually the same, but has different inputs in $\sigma_i$ and $\sigma_0$,
    \item[(ii)] $p_i\neq p_i'$ with $p_i \in V[G_i]\setminus V[G_0]$ and $p_i' \in V[G_0]\setminus V[G_i]$, 
i.e., the present and the absent processes are different.
\end{itemize}
Let 
\[
J=\{j \mid 1 \leq j \leq \ell \;\mbox{and}\; p_j\neq p_j'\}
\]
be the set of 
indices where case (ii) holds. 
Note that for two different indices $i,j$ in $J$, it is possible that $p_i=p_j$, i.e., $\sigma_i$ and $\sigma_j$ may have the same present process (albeit the absent processes must be different in this case, i.e., $p_i'\neq p_j'$).
For any two distinct indices $i, j$ in $J$ satisfying 
$p_i\neq p_j$, however, the respective present processes must satisfy $p_i \not\in V[G_j]$ and $p_j \not\in V[G_i]$.
To see this assume, e.g., that $p_i \in V[G_j]$, i.e., $(p_i,\tilde x_i)\in\sigma_j$ for some values $\tilde x_i$.
Case~(ii) applies to $\sigma_i$ so
$p_i \not\in V[G_0]$, and hence 
$(p_i,\tilde x_i)\in\sigma_j\setminus\sigma_0$;
we also have $(p_j,x_j)\in\sigma_j\setminus\sigma_0$;
as $p_i\neq p_j$,
we get
$|\sigma_j\setminus\sigma_0|\leq n-t-2$, a contradiction.

Abbreviating the subset of \emph{different} processes among the present processes 
$p_1,\dots,p_\ell$ indexed by $J$ as $\A=\{p_j \mid j \in J\}$, 
we have $1 \leq |\A|\leq \min\{t,\ell\}$ 
since the processes with indices in $\A$ are missing in $\sigma_0$ by definition.  
It follows from the above considerations that, for every graph $G_j$ with $j\in J$, there is 
a \emph{unique} present process $p_j \in \A$, i.e., $p_j\in V[G_j]$ but $p_i \not\in V[G_j]$ for any other $p_j\neq p_i \in \A$;
moreover, $p_j \not\in G_0$ and hence also $p_j\not\in G_i=G_0$ for every $i\not\in J$.
On the other hand, $G_j=G_0$ for every graph $G_j$ with $j\not\in J$, and
its present (= absent) process $p_j=p_j'$ might also appear in other graphs $G_i$, 
$0\leq i \leq \ell$. However, at most the vertex $(p_j',x_j') \in \sigma_0$ of the 
absent process $p_j'$ of $G_j$ could appear in $\sigma_i$, since $\sigma_i$ and $\sigma_0$
differ in at most one vertex: For $i\not\in J$,
$v'=(p_j,x_j') \in \sigma_i$ but $v=(p_j,x_j) \not\in \sigma_i$ if $p_i=p_i' \neq p_j=p_j'$, otherwise neither $v'$ nor 
$v$ but rather $v''=(p_j,x_j'')\in \sigma_i$ with $x_j''\neq x_j'$ and $x_j'' \neq x_j$. 
For $i\in J$, $v'=(p_j,x_j') \in \sigma_i$ but $v=(p_j,x_j) \not\in \sigma_i$ if $p_j=p_j'\neq p_i'$, otherwise neither $v'\in\sigma_i$
nor $v\in\sigma_i$, i.e., $p_j=p_j' \not\in\names(\sigma_i)$.

For $i\in J$, let $\oG_i$ be the graph obtained from $G_i$ by removing (i) all the incoming edges to the present 
process $p_i$, as well as removing (ii) every $p_i$-caused edge $(p,q) \in E[G_i]$, i.e., edges that are solely 
caused by directed paths containing $p_i$ in rounds $1,\dots,r$:
\begin{align}
V[\oG_i]&=V[G_i] \nonumber\\ 
E[\oG_i]&=E[G_i]\setminus \bigl(\{(p,p_i)\mid p\in V[G_i]\} \; \cup \; \{(p,q) \in E[G_i] \mid \mbox{$(p,q)$ is a $p_i$-caused edge}\}\bigr).\label{eq:excludecausededges}
\end{align}
For $i\not\in J$, we just define $\oG_i=G_i$.
Let $\UG$ be the union graph defined by
\[
V[\UG]=\bigcup_{i=0}^\ell V[G_i] \; \mbox{and} \; E[\UG]=\bigcup_{i=0}^\ell E[\oG_i].
\]

Like for undirected graphs, one can define an (outgoing) dominating set $S$ of the directed
graph $\UG$ (see, e.g., \cite{PZZW10}) as a set of nodes $S\subseteq V[\UG]$ such that every node $u\in V[\UG]\setminus S$ has an incoming edge $(s,u)$ with $s\in S$. A minimum dominating set of $\UG$ is a dominating set of minimal cardinality, and the (outgoing) domination number $\gamma^+(\UG)$ is the cardinality of a minimum dominating set. Note carefully that the properties of the graphs $G_i$ for $i\in J$ together
with the removal of incoming edges leading to the present process $p_i$ in $\oG_i$ guarantees that every process $p_j \in \A$ has in-degree $d^-(p_j)=0$ in $\UG$. Consequently, every (minimum) dominating set of $\UG$ must contain all processes in $\A$, which implies $\gamma^+(\UG)\geq |\A|$.

\begin{lemma}\label{lem:new-Lemma-4-2-ulrich}
Let $\sigma_0,\sigma_1,\dots,\sigma_\ell$ be $\ell+1\geq 2$ facets of $\mathcal{I}_t$ such that the following two conditions hold: 
\begin{enumerate}
\item for every $i\in\{1,\dots,\ell\}$, $\sigma_i\cap\sigma_0$ is an $(n-t-2)$-dimensional face of~$\mathcal{I}_t$, and
\item for every $i\neq j$, $\sigma_i \neq \sigma_j$.  
\end{enumerate} 
With the set of unique processes $\A$ among the present processes $p_1,\dots,p_\ell$ as defined before,
for every $m\geq 0$, if $\gamma(\UG)>m+|\A|$, then $\bigcap_{i=0}^\ell\Xi(\sigma_i)$ is of dimension at least $m-\ell$. 
\end{lemma}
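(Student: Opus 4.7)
The approach is to adapt the scissors-cut counting technique of Lemma 4.2 in \cite{castaneda2021topological} to this directed, failure-prone setting, in three steps: (a) characterize $\bigcap_{i=0}^\ell \Xi(\sigma_i)$ as a set of \emph{consistent} processes, (b) reformulate consistency as a non-reachability condition in $\UG$, and (c) invoke the hypothesis $\gamma(\UG) > m + |\A|$ to extract enough consistent processes. A vertex $(q, \lambda)$ lies in $\bigcap_{i=0}^\ell \Xi(\sigma_i)$ iff $q \in \bigcap_i \names(\sigma_i)$ (equivalently, $q \in \names(\sigma_0)$ and $q \notin \A$, since every $p_j \in \A$ is absent from $\sigma_0$) and the heard-of history of $q$ after $r$ rounds is identical under each of the graph products $G_0, G_1, \ldots, G_\ell$. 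Call such a $q$ \emph{consistent}. The dimension of the intersection is one less than the number of consistent processes, so it suffices to exhibit $m - \ell + 1$ of them.

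Next, I would translate consistency into a reachability condition in $\UG$. The view of $q$ in $G_i$ differs from its view in $G_0$ iff $q$ receives in $G_i$ distinguishing information about the ``changed'' element between $\sigma_i$ and $\sigma_0$: for $i \in J$, this is the present process $p_i \in \A$, reaching $q$ through an edge of $\oG_i$; for $i \notin J$, this is the input-changed process $p_i = p_i' \in \names(\sigma_0)$, reaching $q$ through an edge of $\oG_i = G_0$. The removal of incoming edges to $p_i$ and of $p_i$-caused edges in $\oG_i$ for $i \in J$ is designed precisely so that in-edges of $q$ in $\UG$ record the communication that truly transports distinguishing information across scenarios: a $p_i$-caused edge $(p, q)$ is not a genuine witness of distinction, since the underlying source $p_i$ is already accounted for by the direct edge $(p_i, q)$. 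Consequently, a process $q \in \names(\sigma_0)$ is consistent with every case-(ii) petal iff it has no in-edge from $\A$ in $\UG$.

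Finally, I would extract consistent processes via a greedy use of the domination hypothesis. Starting from $S_0 = \A$, which has size $|\A| \leq m + |\A|$ and is therefore not dominating, pick $q_1 \in V[\UG] \setminus S_0 = \names(\sigma_0)$ with no in-edge from $S_0$ in $\UG$; set $S_1 = S_0 \cup \{q_1\}$, still of size at most $m + |\A|$, and iterate. Since $|S_j| = |\A| + j \leq m + |\A|$ for $j \leq m$, the hypothesis guarantees that $S_j$ is non-dominating at each step, yielding a new $q_{j+1}$ with no in-edge from $S_j$. After $m+1$ picks, the resulting processes $q_1, \ldots, q_{m+1} \in \names(\sigma_0)$ each have no in-edge from $\A$ in $\UG$, hence each is consistent with every case-(ii) petal. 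Consistency for case-(i) petals is enforced by augmenting the forbidden set at each greedy step to also exclude in-edges from the at most $\ell - |J|$ processes $\{p_i : i \notin J\}$, while the at most $|J|$ absent processes $\{p_i' : i \in J\}$ lost from $\names(\sigma_0) \setminus \bigcap_i \names(\sigma_i)$ are simply discarded from the greedy output; together these losses total at most $\ell$ processes, yielding the required $m - \ell + 1$ vertices in the intersection.

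The main obstacle is the second step: precisely verifying that view-inconsistency is equivalent to the presence of an in-edge from $\A$ in $\UG$, and confirming that the removal of $p_i$-caused edges in $\oG_i$ is both necessary (to avoid overcounting distinguishing paths) and sufficient (to preserve every genuinely distinguishing one). A secondary difficulty lies in refining the greedy argument so that both case-(i) and case-(ii) petals are handled within the same domination budget, ensuring that the accounting of the $-\ell$ correction tightly matches the at most $|J| + (\ell - |J|) = \ell$ ``lost'' or ``additionally forbidden'' processes.
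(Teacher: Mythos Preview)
Your step~(b) characterization has a genuine gap. You claim that a process $q\in\names(\sigma_0)$ is consistent with every case-(ii) petal $\sigma_i$ (i.e., $i\in J$) iff $q$ has no in-edge from $\A$ in $\UG$. This is false: for $i\in J$, the views of $q$ in $\Xi(\sigma_0)$ and $\Xi(\sigma_i)$ also differ whenever $q$ hears from the \emph{absent} process $p_i'$ in $G_0$ (the paper's case~(b)), even if $q$ does not hear from the present process $p_i$ in $G_i$. In that situation $q$ has no in-edge from $\A$ in $\UG$, yet is inconsistent. Your greedy picks can therefore include such inconsistent processes, and neither of your two corrections in the last paragraph addresses this: augmenting $S_0$ by $\{p_i:i\notin J\}$ only handles case-(i) petals, and ``discarding'' the processes $\{p_i':i\in J\}$ from the output only removes picks that \emph{are} absent processes, not picks that \emph{hear from} absent processes.

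The paper's proof avoids this by going in the reverse direction: rather than greedily extracting non-dominated processes, it shows that $D = S \cup \{\hat p_1,\dots,\hat p_\ell\} \cup \A$ is a dominating set of $\UG$, where $S=\names\bigl(\bigcap_{i=0}^\ell\Xi(\sigma_i)\bigr)$ and $\hat p_i\in\{p_i,p_i'\}$. The key observation is that since $\A\subseteq D$ already covers every $q$ in cases~(c) and~(d), one may always take $\hat p_i=p_i'$; then any $q\notin S$ in case~(b) is dominated by the corresponding $p_i'$. This gives $|S|+\ell+|\A|\geq |D|\geq \gamma^+(\UG)>m+|\A|$, hence $|S|\geq m-\ell+1$. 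Your greedy argument can be repaired along the same lines by enlarging the initial forbidden set to $S_0=\A\cup\{p_i':i\in J\}\cup\{p_i:i\notin J\}$, which has size at most $|\A|+\ell$; the greedy then yields at least $m-\ell+1$ picks, each automatically avoiding all absent processes and all distinguishing in-edges.
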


\begin{proof}
Recalling the definition of the present and absent processes $p_i$ and $p_i'$, $1 \leq i \leq \ell$, stated above, 
there are only 4 ways for a process $q \in \names\bigl(\Xi(\sigma_0)\bigr)\cap \names\bigl(\Xi(\sigma_i)\bigr)$ 
for distinguishing whether it is in $\Xi(\sigma_0)$ or else in $\Xi(\sigma_i)$ via its local view $\lambda_q$, i.e.,
whether $(q,\lambda_q) \in \Xi(\sigma_0)$ or $(q,\lambda_q) \in \Xi(\sigma_i)$:
\begin{enumerate}
\item[(a)] $p_i=p_i'$:

For $\sigma_0$: In $G_0$, process $q$ can have heard from the absent process $p_i'$ only, so $q$ knows $(p_i',x_i')$, where $x_i'$
is the label of process $p_i'=p_i$ in $\sigma_0$.

For $\sigma_i$: In $G_i$, process $q$ can have heard from the present process $p_i$ only, so $q$ knows $(p_i,x_i)\neq (p_i',x_i')$,
where $x_i$ is the label of process $p_i=p_i'$ in $\sigma_i$.

\item[(b)] $p_i\neq p_i'$:

For $\sigma_0$: In $G_0$, process $q$ has heard from the absent process $p_i'$ (but cannot have heard from 
the present process $p_i$), so $q$ knows $(p_i',x_i')$.

For $\sigma_i$: In $G_i$, process $q$ has neither heard from the present process $p_i$ nor from the
absent process $p_i'$.

\item[(c)] $p_i\neq p_i'$:

For $\sigma_0$: In $G_0$, process $q$ has neither heard from the absent process $p_i'$ nor from the present process 
$p_i$.

For  $\sigma_i$: In $G_i$, process $q$ has heard from the present process $p_i$ (but cannot have heard from the absent
process $p_i'$), i.e., $q$ knows $(p_i,x_i)$.

\item[(d)] $p_i\neq p_i'$:

For $\sigma_0$: In $G_0$, process $q$ has heard from the absent process $p_i'$ (but cannot have heard from the 
present process $p_i$), so $q$ knows $(p_i',x_i')$, 

For  $\sigma_i$: In $G_i$, process $q$ has heard from the present process $p_i$ (but cannot have heard 
from the absent process $p_i'$), so $q$ knows $(p_i,x_i)$. 
\end{enumerate}

Cases (a)--(d) reveal why the union graph $\UG$ and its dominating sets play a crucial role here: Indeed, for any process $q$,  
inspecting Case (a) reveals that $p_i=p_i'$ is the only process that allows $q$ to distinguish $\Xi(\sigma_0)$ and $\Xi(\sigma_i)$, whereas in Case~(b) resp.\ Case~(c), only process $p_i'$ resp.\ $p_i$ can be used for this purpose. Finally, in Case~(d), either $p_i$ or $p_i'$ can be picked arbitrarily. Consequently, for every process $q$ that can distinguish $\Xi(\sigma_0)$ and $\Xi(\sigma_i)$ a \emph{single} incoming
edge in $\oG_0 \cup \oG_i$ is sufficient: Either from $p_i\in V[G_i]$ to $q$ or else from $p_i'\in V[G_0]$ to $q$. 
Note that using $\oG_i$ instead of $G_i$ in $\UG$ in the case of $i\in J$ does
not impair Cases~(a)--(d), since the omitted edges cannot lie on any path that would allow some process
$q$ to distinguish $\Xi(\sigma_i)$ from $\Xi(\sigma_0)$: After all, all such distinguishing paths must \emph{start} in $p_i$ in $G_i$
and are hence present in $\oG_i$. (And since $p_i\not\in V[\sigma_0]$, no such path can exist in $G_0$, of course).

Since, for every 
process $q$ that is \emph{not} in the common intersection $S= \names\bigl(\bigcap_{i=0}^\ell \Xi(\sigma_i)\bigr)$,
there must be some index $i$ where it can distinguish $\Xi(\sigma_0)$ and $\Xi(\sigma_i)$, it follows that 
the set $D= S \cup \{\hat{p}_1,\dots,\hat{p}_\ell\} \cup \A$ for some 
$\hat{p}_i \in \{p_i,p_i'\}$, $1 \leq i \leq \ell$, 
is a dominating set for $\UG$: If $D=V[\UG]$, it is
trivially a dominating set; 
otherwise, assume $V[\UG]\setminus D \neq\varnothing$ and take any $q \in V[\UG]\setminus D$.
Since $q\not\in D$ implies $q\not\in S$, there must be some index $i>0$ such that $q$ has different views in $\Xi(\sigma_0)$ and 
$\Xi(\sigma_i)$. Depending on the particular case, $(p_i,q) \in E[\oG_i] \subseteq E[\UG]$ or $(p_i',q) \in E[\oG_0]=E[G_0] \subseteq E[\UG]$ (or both) must hold for the appropriate processes $p_i \in V[G_i]$ and $p_i'\in V[G_0]$.  
Picking the appropriate edge and including it via $\hat{p}_i$ in $D$ ensures that $D$ is indeed a dominating set. Note that we could restrict our attention to those dominating sets $D$ where, in 
Case~(d), the edge $(p_i',q) \in E[G_0] \subseteq E[\UG]$ and hence $\hat{p}_i=p_i'$ is always chosen.

Since our construction ensures that every $p_j \in \A$ has in-degree $d^-(p_j)=0$ in $\UG$, as mentioned earlier already, 
every (minimum) dominating set of $\UG$ must 
contain all processes in $\A$. We can thus conclude that $D$ satisfies $|S|+|\{p_1',\dots,p_\ell'\}|+|\A| \geq |D| \geq \gamma^+(\UG) \geq m+|\A|+1$, which implies $|S| \geq m+1-\ell$. 
This proves our lemma.
\end{proof}

Since \cite[Cor.~1]{PZZW10} reveals that $\gamma^+(\UG)$ increases
exactly by 1 when adding process $p_i$ but cannot increase when adding the (additional) outgoing 
edges from $p_i$, we obtain the following relation of the sought dominance number $\gamma^ +(\UG)$ and the dominance number~$\gamma^+(G_0)$:

\begin{lemma} \label{lem:domsetrel}
If $\sigma_0,\dots,\sigma_\ell$ is a sequence of facets satisfying the conditions of
\cref{lem:new-Lemma-4-2-ulrich}, with corresponding graphs $G_0,\dots,G_\ell$, then $\gamma^+(G_0) \leq \gamma^+(\UG)\leq\gamma^+(G_0)+|\A|$.
\end{lemma}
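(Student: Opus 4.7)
The plan is to build $\UG$ from $G_0$ incrementally, adjoining the distinct present processes $\A=\{q_1,\dots,q_{|\A|}\}$ one at a time together with their outgoing edges, and to invoke \cite[Cor.~1]{PZZW10} at each step. Let $H_s$ denote the subgraph of $\UG$ induced on $V[G_0]\cup\{q_1,\dots,q_s\}$, so that $H_{|\A|}=\UG$; the first sub-task is to verify that $H_0$ is actually $G_0$. Clearly $E[G_0]\subseteq E[\UG]$ since $\oG_0=G_0$; conversely, any edge $(u,v)\in E[\oG_i]$ with $u,v\in V[G_0]$ is, by construction, not $p_i$-caused, so it arises from a directed path $u\leadsto v$ in $G_i^1\circ\cdots\circ G_i^r$ avoiding $p_i$, and since $p_i'\notin V[G_i]$ that path also avoids $p_i'$, hence lies in $V[G_0]\setminus\{p_i'\}$ and exists in $G_0^1\circ\cdots\circ G_0^r$ too. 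This yields $(u,v)\in E[G_0]$, so $E[\UG]\cap(V[G_0])^2=E[G_0]$ and $H_0=G_0$.

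For each increment $H_{s-1}\to H_s$ I attach $q_s$ together with its outgoing edges in $\UG$. By \eqref{eq:excludecausededges}, any $q_s=p_j$ has no incoming edges in $\oG_j$ for $j\in J$ with $p_j=q_s$, and $q_s$ does not appear in any other $\oG_i$, so $q_s$ has in-degree $0$ in $H_s$. The cited \cite[Cor.~1]{PZZW10} then states that attaching $q_s$ as an isolated vertex raises $\gamma^+$ by exactly one, while the subsequent addition of its outgoing edges cannot increase $\gamma^+$. Chaining these two facts gives the per-step upper bound $\gamma^+(H_s)\leq\gamma^+(H_{s-1})+1$, and a straightforward induction on $s$ yields the upper bound $\gamma^+(\UG)\leq\gamma^+(G_0)+|\A|$ claimed by the lemma.

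The reverse inequality $\gamma^+(G_0)\leq\gamma^+(\UG)$ is the more delicate part, since the cited result permits the addition of outgoing edges from $q_s$ to reduce $\gamma^+$ after its isolated attachment. To handle it I would start from a minimum dominating set $D$ of $\UG$, observe that $\A\subseteq D$ (in-degree zero), and transform $D$ into a dominating set of $G_0$ of size at most $|D|$ by replacing each $q=p_j\in D\cap\A$ with the corresponding absent process $p_j'\in V[G_0]$. Verifying that the resulting set $(D\cap V[G_0])\cup\{p_j':q=p_j\in D\cap\A\}$ actually dominates $G_0$ is the main obstacle I anticipate: it requires relating paths realising $(q,v)\in E[\oG_j]$ to paths in $G_0$ going through $p_j'$ in place of $p_j$, exploiting that $V[G_j]$ and $V[G_0]$ differ only in the swap $p_j\leftrightarrow p_j'$.
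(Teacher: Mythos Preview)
Your incremental construction and your derivation of the upper bound $\gamma^+(\UG)\leq\gamma^+(G_0)+|\A|$ are essentially the paper's argument: the paper also builds $\UG$ from $G_0$ step by step, adding each new present process first as an isolated vertex (raising $\gamma^+$ by exactly~$1$) and then attaching its outgoing edges (which, by \cite[Cor.~1]{PZZW10}, cannot raise $\gamma^+$). Your verification that $H_0=G_0$, i.e.\ that no $\oG_i$ contributes new edges among $V[G_0]$, is a useful sanity check that the paper leaves implicit; your argument for it via the ``not $p_i$-caused'' clause is correct.

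Where you genuinely diverge is in the lower bound. The paper does \emph{not} switch to a dominating-set transformation; it stays inside the same inductive step and appeals once more to \cite[Cor.~1]{PZZW10} (specifically item~(8)), arguing that adding the outgoing edges after the isolated attachment of $p_{i+1}$ can lower $\gamma^+$ by at most~$1$, so $\gamma^+(\UG_{i+1})\geq\gamma^+(\UG_{i+1}')-1=\gamma^+(\UG_i)\geq\gamma^+(G_0)$ by induction. In other words, both inequalities come out of the \emph{same} incremental analysis in the paper; you are proposing to prove the harder direction by an entirely separate route.

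Your proposed route has a concrete gap that you partly anticipate. A single process $q\in\A$ may arise as the present process $p_j$ for \emph{several} indices $j\in J$, each with its own absent process $p_j'$; so ``the corresponding absent process $p_j'$'' is not well-defined, and you would need to choose one. More seriously, even after fixing a choice, an edge $(q,v)\in E[\oG_j]$ witnesses a temporal path from $q$ to $v$ inside $V[G_j]$, and there is no reason this path should translate into a path from $p_j'$ to $v$ inside $V[G_0]$: the two graphs differ precisely by swapping $q$ for $p_j'$, and a path \emph{starting} at $q$ tells you nothing about what $p_j'$ can reach. So the replacement $q\mapsto p_j'$ need not yield a dominating set of $G_0$, and your sketch does not yet supply the missing ingredient. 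If you want an argument independent of \cite{PZZW10}, you would have to exploit more structure of the $\oG_i$'s than just the $p_j\leftrightarrow p_j'$ swap.
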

\begin{proof}
We inductively construct the union of the first $i+1$ graphs $\UG_{i}$, starting from $\UG_0=G_0$,
and show that $\gamma^+(G_0) \leq \gamma^+(\UG_i) \leq \gamma^+(G_0)+|\{p_1,\dots,p_i\}|$. The induction basis is immediate,
so assume that our statement holds for $i\geq 0$ and show it for $i+1$: Since we drop $p_{i+1}$-caused edges from $\oG_{i+1}$ according to \cref{eq:excludecausededges}, 
there are only two possibilities when adding
$\oG_{i+1}$ to $\UG_{i}$ in order to obtain $\UG_{i+1}$: 
If $p_{i+1} \in \{p_1,\dots,p_i\}$ already holds, then $\UG_{i+1}=\UG_i$ 
since no new edges are added. 
Otherwise, we first add the isolated node $p_{i+1}$ to $\UG_{i}$. Clearly, for
the resulting graph $\UG_{i+1}'$, this results in $\gamma^+(\UG_{i+1}') = \gamma^+(\UG_i) + 1$. Since 
$V[\UG_{i+1}']=V[\UG_{i+1}]$, we are in the regime of the results of \cite{PZZW10}: Cor.~1.(8) there asserts that adding the very first edge $(p_{i+1},p)$, $p\in G_i$, leading to a graph $\UG_{i+1}''$, causes
$\gamma^+(\UG_{i+1}'') \geq \gamma^+(\UG_{i+1}')-1 = \gamma^+(\UG_i) \geq \gamma^+(G_0)$ where the last inequality follows by the induction hypothesis.
On the other hand, Cor.~1.(7) there guarantees
that adding further edges $(p_{i+1},p)$, $p\in G_i$, cannot further increase the domination number, so
$\gamma^+(\UG_{i+1}) \leq \gamma^+(\UG_{i+1}') = \gamma^+(\UG_i) + 1 \leq \gamma^+(G_0)+|\{p_1,\dots,p_{i+1}\}|$ as asserted. For $i+1=\ell$, we
get the statement of our lemma.
\end{proof}

\cref{lem:new-Lemma-4-2-ulrich} is all that we need to make the core Lemma~4.6 and hence Theorem~4.1 of \cite{castaneda2021topological} 
applicable in our setting:\footnote{Albeit all results of \cite{castaneda2021topological} 
have been developed in the context of bidirectional communication graphs
and oblivious algorithms, neither the statement nor the proof of the cornerstones
Lemma~3.2, Lemma~4.6 and Theorem~4.1 require these assumptions anywhere. Adapting 
Theorem~4.1 to our setting is hence immediate.} The latter says that if $\gamma^+(\UG)>k+|\A|$ for every possible sequence $\sigma_0,\dots,\sigma_\ell$, 
then the complex $\Xi(\mathcal{I}_t)$ is $(k-1)$-connected, so that $k$-set agreement is impossible to 
solve (in the available $r$ rounds).
According to \cref{lem:domsetrel}, the smallest $m$ satisfying $\gamma^+(\UG)>m+|\A|$ is determined by
graphs $G_0$ with the smallest dominance number, i.e., $\gamma^+(G_0)>m$: After all, 
a sequence $\sigma_0,\dots,\sigma_\ell$ with $G_0$ satisfying $\gamma^+(G_0)\leq m$ could only
lead to $\gamma^+(\UG)\leq m+|\A|$. Bear in mind here that $1 \leq |\A|\leq \min\{t,\ell\}$ holds.
As a consequence, if, for every sequence $\sigma_0,\dots,\sigma_\ell$, it holds that 
$\gamma^+(G_0) > k$ for the graph $G_0$  corresponding to $\sigma_0$, $k$-set agreement
is impossible. We can hence state the following impossibility result, which is slightly
more demanding than actually needed since it requires $\gamma^+(G_0) > k$ for \emph{every} induced graph $G_0$, not just for the ones that give raise to a sequence $\sigma_0,\dots,\sigma_\ell$, $\ell \geq 1$:

\begin{theorem}[Dominance-based $k$-set agreement impossibility]\label{thm:impossdominance}
Let $t\geq 0$, and $k\geq 1$ be integers. There is no algorithm solving $k$-set agreement with 
arbitrary directed communication graphs with $t$ initially dead processes, which start from an
arbitrary shellable input complex $\mathcal{I}_t$, in less than $r+1$ rounds, if $\gamma^+(G_0) > k$ for 
every graph $G_0$ induced by the processes participating in the facet $\sigma_0 \in \mathcal{I}_t$.
\end{theorem}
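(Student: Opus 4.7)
The plan is to combine the bridging lemmas developed in this subsection with the generic framework of Castañeda, Fraigniaud, Paz, Rajsbaum, Roy, and Travers~\cite{castaneda2021topological}, and then conclude via \cref{thm:imposs}. Assume for contradiction that some algorithm solves $k$-set agreement in $r$ rounds over the communication graphs $G^1,\ldots,G^r$ starting from the shellable input complex $\mathcal{I}_t$; the target is to establish that the protocol complex $\Xi(\mathcal{I}_t)$ is $(k-1)$-connected (and, by the same argument applied to each shellable restriction $\mathcal{I}_t[J]$, also that $\Xi(\mathcal{I}_t)[J]$ is $(\dim(J)-1)$-connected for every $J\subseteq [k+1]$, as required to invoke \cref{thm:imposs}).

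I would first fix an arbitrary sequence of distinct facets $\sigma_0,\sigma_1,\dots,\sigma_\ell$ of $\mathcal{I}_t$ meeting the two hypotheses of \cref{lem:new-Lemma-4-2-ulrich}, together with its associated set $\A$ of distinct present processes. The assumption $\gamma^+(G_0) > k$ of the theorem, applied to $G_0 = G_{\sigma_0}$, combined with the bounds of \cref{lem:domsetrel} as spelled out in the paragraph preceding the theorem (namely that the smallest $m$ for which $\gamma^+(\UG) > m + |\A|$ can be enforced for every sequence is characterized by the smallest $m$ for which $\gamma^+(G_0) > m$ holds uniformly over induced graphs), yields $\gamma^+(\UG) > k + |\A|$ for every admissible sequence. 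Applying \cref{lem:new-Lemma-4-2-ulrich} with $m = k$ then produces the dimension bound $\dim\bigl(\bigcap_{i=0}^{\ell}\Xi(\sigma_i)\bigr)\geq k-\ell$ for every such sequence.

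With this dimension bound at hand, I would invoke Lemma~4.6 and Theorem~4.1 of~\cite{castaneda2021topological}---which, as already observed, rely on neither bidirectional communication nor obliviousness and therefore transfer to our setting verbatim---to conclude that $\Xi(\mathcal{I}_t)$ is $(k-1)$-connected, and similarly for its restrictions to $J\subseteq[k+1]$. The contradiction with solvability then follows from \cref{thm:imposs}. The main obstacle I anticipate is precisely the translation step from the sequence-based dominance condition to the intersection-dimension hypothesis demanded by the generic Castañeda et al.\ framework, i.e., ensuring that the ``off-by-$|\A|$'' slack in \cref{lem:domsetrel} does not spoil the application of Lemma~4.6. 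This is exactly the reason \cref{lem:new-Lemma-4-2-ulrich} was formulated with a free parameter $m$ rather than with $m=k$ hard-coded, and why the theorem's hypothesis is phrased uniformly over all induced $G_0$ (which is slightly stronger than strictly necessary but makes the translation immediate).
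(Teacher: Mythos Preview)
Your proposal is correct and follows essentially the same route as the paper: the theorem is stated in the paper as an immediate consequence of the paragraph preceding it, which combines \cref{lem:new-Lemma-4-2-ulrich} (applied with $m=k$), \cref{lem:domsetrel}, and Lemma~4.6/Theorem~4.1 of~\cite{castaneda2021topological} to obtain $(k-1)$-connectivity of $\Xi(\mathcal{I}_t)$, exactly as you outline. Your write-up is in fact slightly more explicit than the paper's, in that you spell out the need to also handle the restrictions $\mathcal{I}_t[J]$ for \cref{thm:imposs}, and you correctly identify the ``off-by-$|\A|$'' slack as the only delicate point (which the paper dispatches in the same sentence you quote).
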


We will now show that the above dominance-based impossibility condition also implies
an impossibility condition based on a different property, namely, the $(t,k)$-radius of the underlying communication graphs given in \cref{def:refinedradius2}:

\begin{definition}[$(t,k)$-radius of a graph sequence $\G$]\label{def:refinedradius2}
For an $n$-node graph sequence $\G=G_{N+1},\dots,G_M$ and any two integers $t,k$ with $t\geq 0$ and $k\geq 1$, we define the \emph{$(t,k)$-radius} $\rad(G,t,k)$ as follows:
\begin{equation}
\rad(G,t,k) = \min_{D,|D|=t+k} \max_{D' \subseteq D, |D'|=t} \ecc(D \setminus D',G \setminus D'). 
\end{equation}
\end{definition}
Recall that $\ecc(D \setminus D',G \setminus D')$ is the number of rounds needed for $D \setminus D'$ to collectively broadcast in the subgraph sequence of $G$ induced by $\Pi \setminus D'$.

Indeed,
a minimum (outgoing) dominating set of $\UG$ for a sequence $\sigma_0,\dots,\sigma_\ell$, 
in particular, the set $D$ constructed in the above proof, 
can also be viewed as a set of processes with the property that \emph{every} process outside $D$ 
is reached by at least one 
process in $D$ in $\UG$. Note that this also implies that the 
processes in $D$ form a ``collective broadcaster'' in the ``artificial'' union
graph $\bigcup_{i=0}^{\ell} G_i$ as well. According to the construction of $\UG$, it is also apparent that the 
processes in $D\setminus \A$ can reach all processes outside $D$ contained in the graph $G_0$ (which
by construction does not contain any process in $\A$) within $r$ rounds. 

The dominance-based $k$-set agreement impossibility condition 
(more specifically, $\gamma^+(G_0)>k$ for every $G_0$ resulting from some sequence $\sigma_0,\dots,\sigma_\ell$) 
forbids that just $k$ processes in $G_0$ can collectively reach all other processes in $G_0$ within the available $r$ 
rounds. We will prove below that this also forbids the existence of any set $D$ of $k+t$ processes (among the set of all $n$ processes)
such that (i) all the processes $\M$ not participating in any face in $\mathcal{I}_t$ (if any) are in $D$ (as the processes that crashed in rounds \emph{before} the last crashing round $N$ are missing in any facet), and (ii) \emph{every} subset $C\subseteq D$ of size $k$ can collectively reach all other processes within the available $r$ rounds in the graph $G_{\Pi\setminus D'}$, where $D'=D\setminus C$ with $|D'|=t$ denotes the set of initially dead processes and $G_{\Pi\setminus D'}=G^1_{\Pi\setminus D'}\circ \dots \circ G^r_{\Pi\setminus D'}$ is the product of the graphs induced by the processes in $\Pi\setminus D'$ in $G^1,\dots,G^r$. This will prove
that the dominance-based impossibility condition above implies the $(t,k)$-radius-based one. 

To prove this claim, suppose that we are given a set $D$ satisfying (i) and (ii) stated above, and assume that there is a sequence $\sigma_0,\dots,\sigma_\ell$, $\ell \geq 1$, with $G_0=G_{\Pi\setminus D'}$ and resulting in $|\A|=t-|\M|$ and $\gamma^+(G_0)>k$: $C$ would then be an outgoing dominating set for $G_0$ with size $k$, which contradicts $\gamma^+(G_0)>k$. Hence, all that remains to be proven is that such a sequence  exists for the set $D$.

\begin{claim}
For every set $D$ satisfying (i) and (ii) stated above, there is a sequence $\sigma_0,\dots,\sigma_\ell$, $\ell \geq 1$, with $G_0=G_{\Pi\setminus D'}$ resulting in $|\A|=t-|\M|$. 
\end{claim}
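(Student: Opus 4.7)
The plan is to construct the sequence $\sigma_0,\sigma_1,\dots,\sigma_\ell$ explicitly. First, I would choose a subset $D'\subseteq D$ of size $t$ that contains $\M$; this is possible because $|\M|\leq t-k$ in the $\mathcalover{K}_N$ case (the $t-k$ processes that crashed in rounds $1,\dots,N-1$ are exactly those missing from every facet of $f_{N-1}(\kappa)$) and $|\M|=0$ in the pseudosphere skeleton case. Setting $C:=D\setminus D'$ then gives $|C|=k$, matching the partition already used in hypothesis~(ii).

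Applying condition \textbf{C1} of \cref{def:additionalconditions} to $S=D'$ (or, in the skeleton case, the fact that every $(n-t)$-vertex collection is a facet) yields a facet $\sigma_0\in\mathcal{I}_t$ with $\names(\sigma_0)=\Pi\setminus D'$, so $G_{\sigma_0}=G_{\Pi\setminus D'}$ is the required $G_0$. Next, let $\A:=D'\setminus\M$ and enumerate it as $\{q_1,\dots,q_\ell\}$, with $\ell=t-|\M|\geq 1$. For each $j$, I would pick a vertex $(p_j',x_j')\in\sigma_0$ with $p_j'\in\Pi\setminus D'$ and define
\[
\sigma_j \;:=\; \bigl(\sigma_0\setminus\{(p_j',x_j')\}\bigr)\cup\{(q_j,\lambda_j)\}
\]
for a label $\lambda_j$ chosen so that $\sigma_j$ is still a facet of $\mathcal{I}_t$. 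In the skeleton case, any $\lambda_j\in[k+1]$ trivially works. In the $\mathcalover{K}_N$ case, I would take $\sigma_0$ to be the ``maximum-view'' facet on $\Pi\setminus D'$, where every alive process $p$ is assigned the maximal round-$N$ view $\rho_p=\view_p(\kappa)$, and set $\lambda_j=\view_{q_j}(\kappa)$; with this choice $\sigma_j$ agrees with $\sigma_0$ on every common vertex and lies in the appropriate pseudosphere summand of $f_{N-1}(\kappa)$.

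It then remains to verify the two hypotheses of \cref{lem:new-Lemma-4-2-ulrich} and identify~$\A$. By construction $|\sigma_0\cap\sigma_j|=n-t-1$, so $\sigma_0\cap\sigma_j$ has dimension $n-t-2$, and distinctness of the $q_j$'s immediately yields $\sigma_i\neq\sigma_j$ for $i\neq j$. Every petal belongs to Case~(ii) because its present process $q_j\in D'$ differs from its absent process $p_j'\in\Pi\setminus D'=\names(\sigma_0)$, so every index lies in $J$ and $\A=\{q_1,\dots,q_\ell\}$ has the desired cardinality $t-|\M|$. The only delicate point I anticipate is the label compatibility in the $\mathcalover{K}_N$ case, since $\lambda_j$ is not freely choosable there; but the ``maximum-view'' choice of $\sigma_0$ described above, together with \textbf{C1} applied to $S_j:=(D'\setminus\{q_j\})\cup\{p_j'\}\supseteq\M$, provides the required facet on the prescribed vertex set and thereby bypasses this obstacle.
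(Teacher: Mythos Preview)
Your approach is correct and is genuinely different from the paper's. The paper does \emph{not} construct the petals explicitly; instead it exploits only the abstract shellability of $\mathcal{I}_t$. It picks $\sigma^0$ as the \emph{maximal} facet (in the shelling order) with $\names(\sigma^0)=\Pi\setminus D'$, and then repeatedly applies \cref{def:shellability} to $\phi_b=\sigma^j$ to harvest candidate facets $\phi_c$ that differ from $\sigma^j$ in a single vertex. Whenever the harvested $\phi_c$ has the same name set as $\sigma^0$ (a case-(i) swap), the procedure moves to the next iterate $\sigma^{j+1}=\phi_c$; it terminates once, for every process $p_i\in D'\setminus\M$, some harvested facet has $p_i$ as its present process. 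The final $\sigma^j$ becomes $\sigma_0$ and the harvested facets become the petals.

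What each approach buys: the paper's argument is structure-agnostic---it would apply to any shellable $\mathcal{I}_t$ satisfying \textbf{C1}---but pays for this generality with an iterative procedure whose termination has to be argued separately. Your explicit construction is shorter and more transparent, but it relies on the \emph{concrete} pseudosphere structure of the two complexes at hand (the ``maximum-view'' facet $\{(p,\view_p(\kappa)):p\in\names(\tau)\}$ in $h_{N-1}(\kappa)$, or an arbitrary label in the skeleton case). One minor point: your final sentence invokes \textbf{C1} to locate $\sigma_j$, but \textbf{C1} only guarantees \emph{some} facet on the right name set, not the maximum-view one you need for the intersection count. The cleaner justification is the one you already sketched: both $\sigma_0$ and $\sigma_j$ are facets of the pseudosphere summand of $h_{N-1}(\kappa)$ for $\tau$ and $\tau'=(\tau\setminus\{p_j'\})\cup\{q_j\}$ respectively, and since $\view_p(\kappa)$ is independent of the choice of $\tau$, the common vertices coincide as required.
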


\begin{proof}
To prove the claim, we describe a procedure for constructing the sequence $\sigma_0,\ldots,\sigma_\ell$.

Consider the shelling order for $\mathcal{I}_t$, and choose $\sigma^0$ as the maximal facet in that order with $C\subseteq \names(\sigma^0)$ and $D' \cap \names(\sigma^0) = \emptyset$, over any $C \subseteq D$ with $|C|=k$. Note carefully that it is here
where we need the fact that (ii) holds for \emph{every} subset $C\subseteq D$ of size $k$. The application of \cref{def:shellability} to $\phi_b=\sigma^0$ (for an arbitrary $\phi_a<\phi_b$) provides us with a set $S^0$ of one or more facets satisfying $|\sigma^0\setminus\sigma^1|=1$ for every $\sigma^1 \in S^0$. If $S^0$ happens to
contain facets $\sigma^{p_i}$ with $\names(\sigma^{p_i})\neq\names(\sigma^j)=\names(\sigma^0)$ and $\{p_i\}=\names(\sigma^{p_i})\setminus\names(\sigma^j)$ for all the $t-|\M|$ processes not contained in $\sigma^0$, the procedure terminates. 
Otherwise, there must be some 
$\sigma^1 \in S^0$ with $\names(\sigma^1)=\names(\sigma^0)$. The procedure continues
by applying \cref{def:shellability} to $\sigma^1$, leading to the sets $S^1,\dots,S^j$
where $S^j$ computed in the $j$-th iteration satisfies the termination condition.
We finally set $\sigma_0=\sigma_0^j$, and $\sigma_i=\sigma^{p_i}$ for an arbitrary choice of
$\sigma^{p_i} \in S^j$ (note that there could be several, both for different 
present processes $p_i$, $p_j$, and even for the same present process $p_i$,
albeit with different absent processes).

This procedure cannot terminate before all the $t-|\M|$ processes not contained
in $\sigma^0$ are contained in $S^j$ in some (final) iteration $j$, since the application of \cref{def:shellability} cannot get stuck and the number of facets
satisfying the condition $\names(\sigma^{j+1})=\names(\sigma^0)$ is finite. For
the same reason, it will terminate when $S^j$ contains the desired $\sigma^{p_i}$
for the $|\A|=t-|\M|$ processes $p_i$ not in $\sigma^0$.
\end{proof}

It is illustrative to explain the working of the above procedure for the two
different instances of our input complex: For $\mathcal{I}_t=\skel_{n-t-1}\bigl(\Psi(n,k+1)\bigr)$, all $n$ processes participate in some face, 
so $\M=\varnothing$. For our choice $\sigma^0$, we pick the set $C$ formed by
the $k$ largest processes in the index order; moreover, we can choose an arbitrary
assignment of initial values for all $n$ processes here: Due to the regular 
structure of the skeleton of the pseudosphere of all possible input values, our procedure will already terminate after the first iteration.

For $\mathcal{I}_t=\mathcalover{K}_N=f_{N-1}(\kappa)$ for some $\kappa \in \mathcalover{K}_{N-1}$, we again pick the set $C$ formed by
the $k$ largest processes in the index order. For the choice of the processes
$\M$ not participating in $\mathcal{I}_t$, we pick $\M$ to be the $t-k$ processes 
with smallest index in $D$, i.e., we choose $\kappa$ appropriately such that
the processes in $\M$ have crashed in $\mathcal{I}_t=f_{N-1}(\kappa)$ already.
Now, the occurrence of some $\names(\sigma^{j+1})=\names(\sigma^j)$ corresponds 
to the case where a vertex $v=(p,x_p) \in \sigma^j$ is exchanged for a new
vertex $v'=(p,x_p') \in \sigma^{j+1}$. Since $\sigma^{j+1} < \sigma^j$
in the shelling order, which boils down to the face order $x_p'<_f x_p$ 
for the views here (cf. \cref{lem:revised13.5.5}), this represents the situation where process $p$
receives a message from a crashed process $p_i$ in $\sigma^{j+1}$ that
it did not receive in $\sigma^{j}$. Clearly, this cannot happen infinitely
often: As soon as $p$ has received a message from all crashed processes,
there is no new $v'$ for another iteration. Viewed from the perspective of
the crashed process $p_i$, as soon as every process $p$ has received a
message from it in some $\sigma^j$, the application of \cref{def:shellability}
will also produce a facet $\sigma^{p_i}$ where $p_i \in \names(\sigma^{p_i})$
is the present process and some process $p_i'\in\names(\sigma_0)$ is thrown
out as the absent process.

\medskip

It is apparent that the condition introduced above can 
be expressed explicitly in terms of the $(t,k)$-radius of $G$
already stated in \cref{def:refinedradius2}.
The lower bound on the agreement overhead $\ovh(G,t,k) \geq \rad(G,t,k)-1$ implied by our considerations above, together with \cref{thm:generalcrashlowerbound}, 
thus yields the following lower bound:

\begin{theorem}\label{thm:newlowerbound2}
Let $t\geq 0$, and $k\geq 1$ be integers. There is no algorithm solving $k$-set agreement with 
arbitrary directed communication graphs in the $t$-resilient model in strictly less than 
$\lfloor \frac{t}{k} \rfloor + \rad(G,t,k)$ rounds.
\end{theorem}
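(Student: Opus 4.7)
The plan is to paste together the two impossibility results already established in the paper: the $\lfloor t/k\rfloor$ lower bound from the crashing rounds (\cref{thm:generalcrashlowerbound}), and a lower bound of $\rad(G,t,k)-1$ on the agreement overhead that follows from \cref{thm:impossdominance}. Concretely, I would run $N=\lfloor t/k\rfloor$ crashing rounds via the carrier maps $f_0,\dots,f_{N-1}$ of \cref{sec:ourgeneralizedlayeredanalysis}, landing in the shellable subcomplex $\mathcalover{K}_N=f_{N-1}(\kappa)$ for some $\kappa\in\mathcalover{K}_{N-1}$, and then use this as the input complex $\mathcal{I}_t$ for the agreement-overhead analysis. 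To derive a contradiction from an algorithm terminating in $r=\rad(G,t,k)-1$ extra rounds, I would invoke \cref{thm:impossdominance} with this $\mathcal{I}_t$ and with the sequence $G^1,\dots,G^r$ equal to $G_{N+1},\dots,G_{N+r}$.

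The heart of the argument is to translate the dominance-based hypothesis of \cref{thm:impossdominance}, namely $\gamma^+(G_0)>k$ for every $G_0$ induced by a facet $\sigma_0\in\mathcal{I}_t$ that participates in a sequence $\sigma_0,\dots,\sigma_\ell$ satisfying the two hypotheses of \cref{lem:new-Lemma-4-2-ulrich}, into the $(t,k)$-radius condition of \cref{def:refinedradius2}. I would argue the contrapositive: if some $G_0$ has $\gamma^+(G_0)\leq k$, realized by a dominating set $C$ of size $k$, then setting $D'=\Pi\setminus\names(\sigma_0)$ (the $t$ processes absent from $\sigma_0$, which includes the $(N-1)k$ processes that crashed before round $N$, completed to a set of size $t$ by the $k$ processes that crash in round $N$ inside $\kappa\mapsto\mathcal{I}_t$) and $D=D'\cup C$ of size $t+k$, the set $C=D\setminus D'$ collectively broadcasts in $G\setminus D'$ within $r$ rounds. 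Varying over all possible $\sigma_0$ — equivalently, over all choices of the $t$ processes in $D'$ — gives a witness $D$ showing $\rad(G,t,k)\leq r$, contradicting our assumption $r<\rad(G,t,k)$.

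The only non-routine step in this translation is showing that \emph{every} facet $\sigma_0\in\mathcal{I}_t$ with $\names(\sigma_0)=\Pi\setminus D'$ can be extended to a sequence $\sigma_0,\sigma_1,\dots,\sigma_\ell$ of distinct facets with $\dim(\sigma_i\cap\sigma_0)=n-t-2$ whose present processes cover all $|A|=t-|\M|$ processes missing from $\sigma_0$ but participating in $\mathcal{I}_t$. This is exactly the claim already established in the text by iteratively invoking the shellability property of \cref{def:shellability} starting from $\sigma_0$: each application either adds a new ``present-process'' facet to the sequence, or exchanges a vertex of $\sigma_0$ for a strictly smaller one in the face order, and the latter can only happen finitely often. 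I would reuse that construction verbatim and then note that condition \textbf{C1} of \cref{def:additionalconditions} guarantees that the desired $\sigma_0$ with $\names(\sigma_0)=\Pi\setminus D'$ actually exists in $\mathcalover{K}_N$.

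Combining these pieces, \cref{thm:impossdominance} rules out $k$-set agreement within the $r$ post-crash rounds, which, added to the $N=\lfloor t/k\rfloor$ mandatory crashing rounds of \cref{thm:generalcrashlowerbound}, yields the claimed $\lfloor t/k\rfloor+\rad(G,t,k)$ lower bound. The main obstacle I expect is bookkeeping: keeping track of which $t$ processes are ``initially dead'' from the viewpoint of the overhead argument (the $(N-1)k$ processes crashed before round $N$, together with the $k$ processes crashing in round $N$), and verifying that the sequence produced by the shelling construction really yields $|\A|=t-|\M|$ distinct present processes, so that the bound $\gamma^+(\UG)>k+|\A|$ of \cref{lem:new-Lemma-4-2-ulrich} aligns with the radius quantifier structure of \cref{def:refinedradius2}.
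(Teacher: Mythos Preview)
Your overall plan matches the paper's \cref{sec:alternativeproof}: run $N=\lfloor t/k\rfloor$ crashing rounds to land in $\mathcal{I}_t=\mathcalover{K}_N$, apply the generalized scissors-cut analysis culminating in \cref{thm:impossdominance}, and then convert the dominance hypothesis into the $(t,k)$-radius condition, reusing the Claim that builds the petal sequence $\sigma_0,\dots,\sigma_\ell$ from the shelling order. You have correctly identified that Claim as the crux.

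The gap is in your contrapositive. To invoke \cref{thm:impossdominance} you need ``$r<\rad(G,t,k)\Rightarrow \gamma^+(G_0)>k$ for every facet $\sigma_0$'', whose contrapositive is ``some $G_0$ with $\gamma^+(G_0)\le k$ $\Rightarrow$ $\rad(G,t,k)\le r$''. Your construction takes one such $\sigma_0$, a dominating $k$-set $C\subseteq\names(\sigma_0)$, sets $D'=\Pi\setminus\names(\sigma_0)$ and $D=D'\cup C$, and then asserts that ``varying over all possible $\sigma_0$'' supplies the remaining $D'$-choices for this same $D$. It does not: different $\sigma_0$'s produce different $D$'s, whereas $\rad\le r$ requires a \emph{single} $D$ such that \emph{every} $t$-subset $D'\subseteq D$ satisfies $\ecc(D\setminus D',G\setminus D')\le r$. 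The implication you are trying to prove is in fact false. Take the directed out-star on $\{p_1,\dots,p_4\}$ with edges $p_1\to p_i$ only, $t=k=1$, $r=1$: any facet $\sigma_0$ containing $p_1$ has $\gamma^+(G_{\sigma_0})=1$, yet for every $2$-set $D$, choosing $D'=D\cap\{p_1\}$ (or any $D'$ if $p_1\notin D$) gives $\ecc(D\setminus D',G\setminus D')=\infty$, so $\rad(G,1,1)=\infty$. The paper's argument in \cref{sec:alternativeproof} actually establishes the \emph{opposite} direction: if a witness $D$ for $\rad\le r$ exists, the Claim produces a sequence with $\gamma^+(G_0)\le k$; i.e., the dominance hypothesis implies $\rad>r$. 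The paper then asserts without further detail that $\ovh(G,t,k)\ge\rad(G,t,k)-1$ follows; the fully rigorous route to \cref{thm:newlowerbound2} is the carrier-map argument via \cref{lem:radiusconnectivityrefined} leading to the identical statement \cref{thm:radlowerbound}.
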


We note that $\ovh(G,t,k)\geq \rad(G,t,k)-1$ for $t=0$ coincides with the lower bound for $k$-set agreement in the KNOW-ALL model established in  
\cite{castaneda2021topological}.
Our analysis also provides a lower bound for systems with arbitrary communication graphs and $t$ initially 
dead processes: 
All that is needed here is to start the considerations of this section from the input complex $\mathcal{I}_t=\skel_t\bigl(\Psi(p_i,\V \mid i 
\in [n])\bigr)$, which is shellable according to \cref{thm:shellabilityskelPS}.
This concludes in the following theorem:

\begin{theorem}\label{thm:lowerboundinitiallydead2}
Let $t\geq 0$, and $k\geq 1$ be integers. There is no algorithm solving $k$-set agreement with arbitrary communication graphs with $t$ initially dead processes in strictly less than $\rad(G,t,k)$ rounds.
\end{theorem}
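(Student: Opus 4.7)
The plan is to specialize the analysis already developed in this section to the initially-dead setting, where all $t$ failures occur before round~$1$ and no process crashes mid-execution, so that the $\lfloor t/k\rfloor$ term arising from crashing rounds in \cref{thm:radlowerbound} disappears. First I would take as source complex $\mathcal{P}_N = \skel_{n-t-1}\bigl(\Psi(\{(p_i,[k+1]) \mid i \in [n]\})\bigr)$, whose facets enumerate the initial configurations in which exactly $n-t$ of the $n$ processes are alive with an arbitrary input from $[k+1]$, while the remaining $t$ are initially dead. By \cref{thm:shellabilityskelPS}, this complex is pure and shellable, so the hypothesis required for the carrier map $g$ of \cref{def:aocarriermap} is met without the need for any preliminary chain $f_0,\dots,f_{N-1}$ of crashing-round carrier maps.

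Next I would apply $g:\mathcal{P}_N \to \mathcal{P}_M$ with $N=0$ and $M=r$, where $r<\rad(G,t,k)$ is the assumed round count. Since every absent process in a facet $\phi \in \mathcal{P}_N$ is initially dead rather than uncleanly crashed, one has $\dirty{\phi}=\varnothing$ for every facet, which collapses \cref{eq:gfacet} to the unrestricted view-assignment and makes strictness (\cref{claim:strictness}) transparent. Moreover, condition \textbf{C1} of \cref{def:additionalconditions} holds trivially for this skeleton: every $(n-t-1)$-face with any choice of $n-t$ processes is present. Thus \cref{lem:radiusconnectivityrefined} applies verbatim and yields that $g$ is a $(k-1)$-connected carrier map, so by \cref{lem:connected_image} the image $g(\mathcal{P}_N)$ is $(k-1)$-connected.

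To conclude, I would run through the $J$-restricted argument used in the proof of \cref{thm:newlowerbound}: for every $J\subseteq[k+1]$, the subcomplex $\mathcal{P}_N[J]=\skel_{n-t-1}\bigl(\Psi(\{(p_i,J) \mid i \in [n]\})\bigr)$ is itself shellable, $g$ restricted to it is still $(k-1)$-connected, hence a fortiori $(\dim(J)-1)$-connected, and \cref{thm:imposs} then rules out the existence of a simplicial chromatic decision map to $\outcomp$. An alternative and essentially identical route is to invoke \cref{thm:impossdominance} with $N=0$ and the same source complex and to run the equivalence argument between the dominance-based and the radius-based conditions that precedes \cref{thm:newlowerbound2}.

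The main obstacle I expect concerns bookkeeping in the equivalence step when $\M=\varnothing$, i.e., when every process may participate in some facet of $\mathcal{P}_N$. The construction of the sequence $\sigma_0,\dots,\sigma_\ell$ from a set $D$ with $|D|=t+k$ must be redone using the shelling order of $\skel_{n-t-1}(\Psi)$ supplied by \cref{thm:shellabilityskelPS} rather than the shelling of $\mathcalover{K}_N=f_{N-1}(\kappa)$: one picks, for each $k$-subset $C\subseteq D$, the maximal facet $\sigma^0$ with $C\subseteq\names(\sigma^0)$ and $(D\setminus C)\cap\names(\sigma^0)=\varnothing$, then iteratively invokes \cref{def:shellability} until one facet $\sigma^{p_i}$ is produced for each of the $t$ processes in $D\setminus C$. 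The highly symmetric structure of the pseudosphere skeleton should make this loop terminate cleanly, but verifying this carefully — and checking that the resulting $|\A|=t$ matches the bound $\gamma^+(\UG)\geq k+|\A|+1$ needed to apply \cref{lem:new-Lemma-4-2-ulrich} — is where the bulk of the formal work will lie.
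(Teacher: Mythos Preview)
Your proposal is correct and matches the paper's approach: the paper proves this statement precisely by instantiating the source complex as the shellable skeleton $\skel_{n-t-1}\bigl(\Psi(\{(p_i,[k+1])\mid i\in[n]\})\bigr)$ and rerunning the analysis of the section, exactly as you outline (your carrier-map route is the argument behind the twin \cref{thm:lowerboundinitiallydead}, and your alternative scissors-cut route is the one intended for \cref{thm:lowerboundinitiallydead2}). Regarding the obstacle you anticipate, the paper observes that for the pseudosphere skeleton one has $\M=\varnothing$ and, owing to its regular structure, the iterative application of \cref{def:shellability} terminates already after the first step, so the bookkeeping you flag is in fact immediate in this case.
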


\subsection{Replacing pseudospheres by Kuhn triangulations}
\label{sec:kuhntriangulation}

So far, our analysis relied on a pseudosphere input complex $\mathcal{I}$. In this section, we consider the Kuhn triangulation subcomplex $\subI$ of a pseudosphere input complex $\mathcal{I}$, whose facets are indexed by $\mathbf{x} \in \mathbb{Z}^k$, and show that $\subI$ is shellable. This
finding allows us to replace the exponentially large (in $n$) pseudosphere input complex $\mathcal{I}$ by the only polynomially large input complex $\subI$ in all our derivations.

Fix an ordering of $n$ processes $p_1,\ldots,p_n$. For every tuple $\mathbf{x} = (x_1,\ldots,x_k) \in \mathbb{Z}^k$ satisfying $n \geq x_1 \geq \ldots \geq x_k \geq 0$, we associate to $\mathbf{x}$ an input configuration $\inp(\mathbf{x})$, as follows: The $x_k$ nodes $1,\dots,x_k$ have input~$k$, the $x_{k-1}-x_k$ nodes $x_k+1,\dots,x_{k-1}$ have input~$k-1$, the $x_{k-2}-x_{k-1}$ nodes $x_{k-1}+1,\dots,x_{k-2}$ have input~$k-2$, etc., the $x_1-x_2$ nodes $x_2+1,\dots,x_1$ have input~$1$, and the remaining $n-x_1$ nodes $x_1+1,\dots,n$ have input~$0$. 
Consider a subset of all possible input configurations of a pseudosphere $\mathcal{I} = \psi(\{p_1,\ldots,p_n\},I)$ defined by 

$$I_{sub} = \{ \inp(\mathbf{x}) \mid \mathbf{x}=(x_1,\ldots,x_k) \in \mathbb{Z}^k, n \geq x_1 \geq \ldots \geq x_k \geq 0 \}$$

Every input configuration $\inp(\mathbf{x}) \in I_{sub}$ corresponds to a unique facet $\sigma_{\mathbf{x}}$ of $\mathcal{I}$. In particular,
\[
\sigma_{\textbf{x}} = \{ (p_i,inp^{\textbf{x}}_i) \mid i = 1,\ldots,n; inp^{\textbf{x}}_i \text{ is the input of } p_i \text{ in } inp(\textbf{x})\}
\]

\begin{definition}
    Complex $\subI$ is defined as a subcomplex of $\mathcal{I}$  induced by all facets $\sigma_{\mathbf{x}}$ corresponding to input configurations in  $I_{sub}$.
\end{definition}

Let $\prec$ is an alphabetic order on $\mathbb{Z}^k$, i.e $\mathbf{x} = (x_1,\ldots,x_k) \prec \mathbf{y} = (y_1,\ldots,y_k)$ if there is $i\in\{1,\dots,k\}$ such that $x_j = y_j$ for all $j \in \{1,\ldots,i\}$, and $x_{i+1} < y_{i+1}$. An ordering on facets of $\subI$ is induced as follows: $\sigma_{\mathbf{x}} \prec \sigma_{\mathbf{y}}$ if and only if $\mathbf{x} \prec \mathbf{y}$. Note that $\prec$ is a total ordering on the set of facets of complex $\subI$. Ordering facets of $\subI$ in an increasing order regarding $\prec$, we receive $\texttt{seq} = \sigma_1,\ldots$.

\begin{lemma}\label{lem:shellingkuhn}
    The complex $\subI$ is shellable. Moreover, the sequence $\texttt{seq}$, ordering facets in an increasing order with respect to $\prec$, is a shelling order of $\subI$.
\end{lemma}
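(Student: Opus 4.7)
The plan is to verify the conditions of \cref{def:shellability} directly for the ordering $\texttt{seq}$. First, note that $\subI$ is pure of dimension $n-1$: every facet $\sigma_{\mathbf{x}}$ contains exactly one vertex per process, since the input configuration $\inp(\mathbf{x})$ assigns a value to every $p_i$. It therefore suffices to show that for every pair $\sigma_{\mathbf{x}} \prec \sigma_{\mathbf{y}}$, one can exhibit a facet $\sigma_{\mathbf{z}}$ with $\sigma_{\mathbf{z}} \prec \sigma_{\mathbf{y}}$, $\sigma_{\mathbf{x}} \cap \sigma_{\mathbf{y}} \subseteq \sigma_{\mathbf{z}} \cap \sigma_{\mathbf{y}}$, and $|\sigma_{\mathbf{y}} \setminus \sigma_{\mathbf{z}}| = 1$.

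Given $\sigma_{\mathbf{x}} \prec \sigma_{\mathbf{y}}$, the plan is to let $i$ be the smallest index with $x_i < y_i$, and let $j \geq i$ be the largest index with $y_j = y_i$ (so either $j=k$ or $y_{j+1} < y_j$). I then define $\mathbf{z}$ coordinate-wise by $z_l = y_l$ for $l \neq j$ and $z_j = y_j - 1$. I would check validity of $\mathbf{z}$ in three short steps: $z_{j-1} \geq z_j$ holds because $y_{j-1} \geq y_j > y_j - 1$; $z_j \geq z_{j+1}$ holds because either $j = k$ or by maximality of $j$ we have $y_j - 1 \geq y_{j+1}$; and $z_j \geq 0$ since $y_j = y_i > x_i \geq 0$. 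The comparison $\mathbf{z} \prec \mathbf{y}$ is immediate, as they agree on coordinates $1,\dots,j-1$ and $z_j < y_j$.

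Next, I would unpack what changes between $\mathbf{y}$ and $\mathbf{z}$ at the level of input configurations. The only modified node is $p_{y_j}$: in $\mathbf{y}$ it is the rightmost element of the input-$j$ block $\{y_{j+1}+1,\dots,y_j\}$, while in $\mathbf{z}$ it becomes the leftmost element of the input-$(j-1)$ block $\{z_j+1,\dots,z_{j-1}\}$ (using $y_{j-1} = y_j$ when $j > i$, and the unchanged $y_{j-1}$ when $j=i$). Hence $\sigma_{\mathbf{y}} \setminus \sigma_{\mathbf{z}} = \{(p_{y_j}, j)\}$, giving condition (ii) of \cref{def:shellability}. For condition (i), it remains to verify that $(p_{y_j}, j) \notin \sigma_{\mathbf{x}}$. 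By monotonicity of $\mathbf{x}$ and the choice of $i$, we have $x_j \leq x_i < y_i = y_j$, so the input-$j$ block in $\mathbf{x}$ ends at position $x_j < y_j$, meaning $p_{y_j}$ receives an input strictly less than $j$ in $\mathbf{x}$, not $j$.

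The main obstacle I anticipate is the edge case where $y_i = y_{i+1}$, i.e.\ the input-$i$ block is empty in $\mathbf{y}$: simply decrementing $y_i$ would break the monotonicity constraint $z_i \geq z_{i+1}$. Advancing to the largest $j \geq i$ with $y_j = y_i$ before decrementing bypasses this, and fortunately the same inequality $x_j \leq x_i < y_i = y_j$ is still available to certify that the unique removed vertex $(p_{y_j}, j)$ sits outside $\sigma_{\mathbf{x}}$. All remaining work is the bookkeeping described above, and since the identified $\sigma_{\mathbf{z}}$ manifestly lies in $\subI$ (as $\mathbf{z}$ is a valid tuple), we conclude that $\texttt{seq}$ is a shelling order of~$\subI$.
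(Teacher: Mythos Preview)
Your proof is correct and takes essentially the same approach as the paper: verify \cref{def:shellability} directly by exhibiting an intermediate facet that differs from the later facet only in the vertex of the single process $p_{y_i}$. The sole variation is in the construction of that facet---the paper decrements \emph{every} coordinate in the block $\{l : y_l = y_i\}$ simultaneously, whereas you decrement only the last coordinate $y_j$ of the block; both choices isolate the same process, and the remaining verification is identical.
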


\begin{proof}
    Chose an arbitrary facet $\sigma_{\mathbf{x}}  \in \subI$ corresponding to $\inp(\mathbf{x})$, in which $\mathbf{x} = (x_1,\ldots,x_k)$. 
    Denote by $\{ (p_1,inp^{\textbf{x}}_1),\ldots,\widehat{(p_i,inp^{\textbf{x}}_i)}, \ldots, (p_n,inp^{\textbf{x}}_n) \}$ the $(n-2)$-simplex in $\sigma_{\textbf{x}}$ containing all vertices $(p_j,inp^{\textbf{x}}_j), j \in \{1,\ldots,n\} \setminus \{i\}$, i.e., it does not contain vertex $(p_i,inp^{\textbf{x}}_i)$. We prove that, for every  $\sigma_{\mathbf{y}} \prec \sigma_{\mathbf{x}}$, there is $\sigma_{\mathbf{w}}\prec \sigma_{\mathbf{x}}$ such that: (1) $\sigma_{\mathbf{y}} \cap \sigma_{\mathbf{x}} \subseteq \sigma_{\mathbf{w}} \cap \sigma_{\mathbf{x}}$, and (2) $|\sigma_{\mathbf{w}} \setminus \sigma_{\mathbf{x}}| =1$.

    Since $y \prec x$, there is $i \in \{1,\dots,k\}$ such that $y_i<x_i$. Note that $x_i \geq 1$. By definition of $inp(\mathbf{x})$ and $inp(\mathbf{y})$, we have $inp^{\mathbf{x}}_{x_i} \geq i, inp^{\mathbf{y}}_{x_i} < i$, so $(p_{x_i},inp^{\mathbf{x}}_{x_i}) \notin \sigma_x \cap \sigma_y$. Therefore,
    \[
    \sigma_{\mathbf{x}} \cap \sigma_{\mathbf{y}} \subseteq \{ (p_1,inp^{\textbf{x}}_1),\ldots,\widehat{(p_{x_i},inp^{\textbf{x}}_{x_i})}, \ldots, (p_n,inp^{\textbf{x}}_n) \}
    \]

    We will find $\sigma_{\mathbf{w}} \prec \sigma_{\mathbf{x}}$ such that $\sigma_{\mathbf{w}} \cap \sigma_{\mathbf{x}} = \{ (p_1,inp^{\textbf{x}}_1),\ldots,\widehat{(p_{x_i},inp^{\textbf{x}}_{x_i})}, \ldots, (p_n,inp^{\textbf{x}}_n) \}$.
    Let $\texttt{ind}(\textbf{x},i) =\{j \subseteq \{1,\ldots,k\} \mid x_j = x_i \}$ be an index set. Set $\texttt{ind}(\textbf{x},i)$ contains $i$, so it is non-empty. Note that since $n \geq x_1 \geq \ldots \geq x_k \geq 0$, set $\texttt{ind}(\textbf{x},i)$ consists of consecutive integers in $\{1,\ldots,k\}$, and $p_{x_i}$ is the same as $ p_{x_j}$, for all $j \in \texttt{ind}(\textbf{x},i)$. 
    
    Let $\mathbf{w}=(w_1,\ldots,w_k)$, in which $w_j = x_j, \forall j \notin \texttt{ind}(\textbf{x},i)$, and $w_j = x_j-1, \forall j \in \texttt{ind}(\textbf{x},i)$. Since $x_i \geq 1$, so $w_j\geq 0, \forall j \in \texttt{ind}(\textbf{x},i)$.
    We observe that $\sigma_{\mathbf{w}} \prec \sigma_{\mathbf{x}}$. Moreover, $n \geq w_1\geq\dots\geq w_n \geq 0$, so $\sigma_{\mathbf{w}} \in \subI$.
    By the definition, in two input configurations $\inp(\mathbf{x})$ and $\inp(\mathbf{w})$, there is only $p_{x_i}$ having different inputs. 
    Therefore, $\sigma_{\mathbf{w}} \cap \sigma_{\mathbf{x}}$ is a $(n-2)$-simplex $\{ (p_1,inp^{\textbf{x}}_1),\ldots,\widehat{(p_{x_i},inp^{\textbf{x}}_{x_i})}, \ldots, (p_n,inp^{\textbf{x}}_n) \}$ as desired.

    We have $\sigma_{\mathbf{w}} \cap \sigma_{\mathbf{x}} = \{ (p_1,inp^{\textbf{x}}_1),\ldots,\widehat{(p_{x_i},inp^{\textbf{x}}_{x_i})}, \ldots, (p_n,inp^{\textbf{x}}_n) \}$. Therefore, $\sigma_{\mathbf{y}} \cap \sigma_{\mathbf{x}} \subseteq \sigma_{\mathbf{w}} \cap \sigma_{\mathbf{x}}$, and $|\sigma_{\mathbf{w}} \setminus \sigma_{\mathbf{x}}| =1$.  It implies that $\subI$ is shellable with shelling order $\texttt{seq}$.
\end{proof}

Exactly the same reasoning as in the proof of \cref{thm:shellabilityskelPS} can be used to prove the shellability 
of the Kuhn triangulation. We just need to redefine
the shelling order $<$ to: $\phi_a < \phi_b$ iff either $\phi_a <_f \phi_b$ or else $\bigl(\sig(\phi_a)=\sig(\phi_b)\bigr) \wedge (\phi_a \prec \phi_b)$
according to \cref{lem:shellingkuhn}. The proof of \cref{thm:shellabilityskelPS} can be translated literally to prove the
following result:

\begin{corollary}[Shellability of skeletons of $\subI$]\label{cor:shellabilityskelKuhn}
For any $0 \leq d \leq n$, the $d$-skeleton of the Kuhn triangulation $\subI$ is shellable via $<$.
\end{corollary}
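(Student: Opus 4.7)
The plan is to translate the proof of \cref{thm:shellabilityskelPS} literally, with only the second case modified. Given two facets $\phi_a < \phi_b$ of $\skel_d(\subI)$ under the order $<$, I would split the analysis according to whether $\sig(\phi_a) = \sig(\phi_b)$ or not, and in each case construct a $\phi_c$ satisfying the two conditions of \cref{def:shellability}.

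Case (1), when $\sig(\phi_a) \neq \sig(\phi_b)$ (equivalently, $\phi_a <_f \phi_b$), is handled verbatim as in the proof of \cref{thm:shellabilityskelPS}. Pick the smallest index $\ell$ at which the signatures differ, which forces $\sig(\phi_a)[\ell] = \bot$ and $\sig(\phi_b)[\ell] = \top$; since both facets are $d$-dimensional there is a compensating index $m>\ell$ with $\sig(\phi_a)[m] = \top$ and $\sig(\phi_b)[m] = \bot$. Define $\phi_c$ by removing the vertex of $\phi_b$ at index $m$ and inserting the vertex of $\phi_a$ at index $\ell$. The computation of $\sig(\phi_c)$ is unchanged, and the verifications that $\phi_c < \phi_b$, $|\phi_b\setminus\phi_c|=1$, and $\phi_a\cap\phi_b\subseteq\phi_c\cap\phi_b$ carry over directly. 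The only extra check is that $\phi_c$ lies in $\subI$, which holds because $\subI$ is downward closed and $\phi_c$ is a $d$-subset of the union of (input-compatible) vertices of $\phi_a$ and $\phi_b$, each of which already belongs to some facet of $\subI$.

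Case (2), when $\sig(\phi_a) = \sig(\phi_b)$ and $\phi_a \prec \phi_b$, is where the pseudosphere order $<_p$ used in \cref{thm:shellabilityskelPS} is swapped for the alphabetic Kuhn order $\prec$. Here $\phi_a$ and $\phi_b$ involve the same $d+1$ processes, so they differ only in input values. Let $\ell$ be the smallest process index at which the vertex of $\phi_a$ and the vertex of $\phi_b$ carry different inputs. Following the construction in the proof of \cref{lem:shellingkuhn}, find a facet $\sigma_{\mathbf{w}} \prec \sigma_{\mathbf{x}}$ of $\subI$ (where $\sigma_{\mathbf{x}}$ is any facet of $\subI$ containing $\phi_b$) whose vertex at index $\ell$ agrees with $\phi_a$ and whose remaining vertices of index present in $\phi_b$ agree with $\phi_b$; then take $\phi_c$ to be the $d$-face of $\sigma_{\mathbf{w}}$ obtained by replacing the vertex of $\phi_b$ at index $\ell$ with that of $\phi_a$. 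By minimality of $\ell$ and the same argument as in Case (2) of \cref{thm:shellabilityskelPS}, this $\phi_c$ satisfies both shellability conditions.

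The main obstacle I anticipate is the legality check in Case (2): one must verify that the swap dictated by $\phi_a$ at index $\ell$ can always be realized inside some facet $\sigma_{\mathbf{w}} \in \subI$, i.e., that the resulting tuple $\mathbf{w}$ still satisfies $n \geq w_1 \geq \cdots \geq w_k \geq 0$. This is handled exactly as in the proof of \cref{lem:shellingkuhn}, where $\mathbf{w}$ is built from $\mathbf{x}$ by uniformly decrementing coordinates in the index set $\texttt{ind}(\mathbf{x}, i)$; the only new point is that we must do so while respecting that $\phi_b$ may omit some processes, but since the omitted processes impose no constraint, the same monotonicity argument goes through. Once this is in place, the proof is complete, and the order $<$ is a shelling order on $\skel_d(\subI)$.
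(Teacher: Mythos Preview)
Your overall strategy matches the paper's, which merely asserts that the proof of \cref{thm:shellabilityskelPS} ``can be translated literally'' after replacing $<_p$ by~$\prec$. However, there is a genuine gap in your Case~(1), and it is precisely at the point you dismiss as routine. In the pseudosphere proof, the vertex $v_\ell$ inserted into $\phi_b$ is taken from~$\phi_a$, and the resulting $\phi_c$ is automatically a face of the pseudosphere because \emph{any} set of vertices with pairwise distinct process names is a simplex there. This closure property fails for~$\subI$: the set $\phi_c=(\phi_b\setminus\{v_m'\})\cup\{v_\ell\}$ mixes a vertex of~$\phi_a$ with vertices of~$\phi_b$ and need not lie in any facet~$\sigma_{\mathbf{x}}$. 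Concretely, for $n=3$, $k=1$, $d=1$, take $\phi_a=\{(p_1,0),(p_2,0)\}\subseteq\sigma_{(0)}$ and $\phi_b=\{(p_2,1),(p_3,0)\}\subseteq\sigma_{(2)}$; then $\ell=1$, $m=3$, and your construction yields $\phi_c=\{(p_1,0),(p_2,1)\}$, which is not a face of~$\subI$ since no Kuhn configuration gives~$p_1$ input~$0$ while~$p_2$ gets input~$1$. Your appeal to downward closure does not help: downward closure only guarantees that subsets of a \emph{single} simplex are simplices, not that a union of vertices drawn from two different simplices forms one.

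The repair is easy and does not disturb the rest of your argument: in Case~(1), take $v_\ell$ not from~$\phi_a$ but from any facet $\sigma_{\mathbf{x}}\supseteq\phi_b$ of~$\subI$ (such a facet exists since $\phi_b\in\subI$, and it contains a vertex for process~$p_\ell$). Then $\phi_c\subseteq\sigma_{\mathbf{x}}$ is automatically a $d$-face of~$\subI$, and the verifications of~(i), (ii), and $\phi_c<\phi_b$ go through unchanged, since they use only that $v_m'\notin\phi_a$ and that $p_\ell\notin\names(\phi_b)$, not the specific label carried by~$v_\ell$. A small related imprecision appears in your Case~(2): the $\sigma_{\mathbf{w}}$ produced by \cref{lem:shellingkuhn} changes the input of~$p_{x_i}$ to a specific value determined by the decrement, not necessarily to the value in~$\phi_a$; but again this is irrelevant to conditions~(i) and~(ii), so you should simply drop the phrase ``agrees with~$\phi_a$''.
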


\section{Upper Bound for Fixed Graphs}
\label{sec:upper-bound}

Let $G=(V,E)$ be an $n$-node graph with vertex connectivity $\kappa(G)$. Let $t<\kappa(G)$ be a non-negative integer, and let $k\geq 1$ be an integer. We are interested in solving $k$-set agreement in $G$ with at most $t$ crash failures. For $S\subseteq V$, let $G\setminus S$ denote the subgraph of $G$ induced by the nodes in $V\setminus S$, i.e., $G\setminus S$ is an abbreviation for $G[V\setminus S]$. For every graph~$H$, let $D(H)$ denote its diameter. We define 
$D(G,t)=\max_{S\subseteq V, |S|\leq t}D(G\setminus S)$. 
Note that since $t<\kappa(G)$, and the maximization is over all sets $S$ of size at most~$t$, $D(G,t)$ is finite. 

\begin{theoremrep}\label{theo:upper-bound-fixed-graph}
    There exists an algorithm solving $k$-set agreement in $G$ in $\lfloor\frac{t}{k}\rfloor+D(G,t)$ rounds. 
\end{theoremrep}

\begin{proof}
    The algorithm and its proof of correctness are directly inspired from the $k$-set agreement algorithm for the clique $K_n$ in~\cite{ChaudhuriHLT00}, and from its analysis. The algorithm is merely the min-flooding algorithm for $\lfloor\frac{t}{k}\rfloor+D(G,t)$ rounds. That is, every node sends its input value to all its neighbors at the first round, and, at each round $r\geq 2$, every node forwards the minimum value received so far to all its neighbors. After $\lfloor\frac{t}{k}\rfloor+D(G,t)$ rounds, every node outputs the smallest value it became aware of during the whole execution of the protocol, which may be its own input value, or the input value of another node received during min-flooding. 
    
    Termination and validity are satisfied by construction. We now show that at most $k$ values are outputted in total by the (correct) nodes. Let $r\in\{1,\dots,\lfloor\frac{t}{k}\rfloor\}$, and let us consider the system after $r-1$ rounds of min-flooding have been performed. We focus on the nodes that have not crashed during the first $r-1$ rounds, and, among these nodes, we consider those that are holding the smallest values currently in the system. More precisely, let $U\subseteq V$ be a set of $k$ nodes that have not crashed during the first $r-1$ rounds, and satisfying that, for every value $x$ held by a node $u\notin U$ that has not crashed during the first $r-1$ rounds, $x$ is at least as large as any value currently hold by the nodes in~$U$. 

    We claim that, if some node $u \in U$ does not crash at round~$r$ and holds a value $x$ then, then by the end of round $r+D(G,t)$ each correct node will either know $x$ or a smaller value.
    Indeed, if $u$ does not crash at round~$r$, then, at this round, $u$~sends $x$ to all its (correct) neighbors. Since $t<\kappa(G)\leq \deg(u)$, we have that, for every suffix of the current execution, at least one neighbor $u'$ of $u$ is correct, i.e., one correct node $u'$ holds $x$ at the end of round~$r$. It follows that all the correct nodes will have received~$x$ or smaller values by the end of round $r+D(G,t)$.  

    As a consequence of the claim, if less than $k$ nodes crash at some round $r\in \{1,\dots,\lfloor\frac{t}{k}\rfloor\}$, then at the end of round $r+D(G,t)$, every correct node knows at least one value among the smallest $k$ values present in the system at the end of round $r-1$. This guarantees that at most $k$ distinct values are outputted by the nodes. 
    
    On the other hand, for all executions in which at least $k$ nodes crash in each of the first $\lfloor\frac{t}{k}\rfloor$ rounds, less that $k$ nodes can crash at round $\lfloor\frac{t}{k}\rfloor+1$. So, let $v$ be a node that does not crash at round $\lfloor\frac{t}{k}\rfloor+1$, and that holds one of the smallest $k$ values in the system after $\lfloor\frac{t}{k}\rfloor$ rounds, say~$x$. Round $\lfloor\frac{t}{k}\rfloor+1$ can be viewed as the first round of broadcast of value $x$ from node~$v$. This broadcast will complete in $D(G,t)$ rounds in total, no matter which nodes distinct from $v$ crashes at rounds $r\geq \lfloor\frac{t}{k}\rfloor+1$, and no matter whether $v$ itself crashes at some round $r>\lfloor\frac{t}{k}\rfloor+1$. Therefore, at the end of round $\lfloor\frac{t}{k}\rfloor+D(G,t)$, all correct nodes have received at least one value among the smallest $k$ values present in the system at the end of round $\lfloor\frac{t}{k}\rfloor$. This guarantees that at most $k$ distinct values are outputted by the nodes.   
\end{proof}  

Note that the bound in Theorem~\ref{theo:upper-bound-fixed-graph} matches the bound $\lfloor\frac{t}{k}\rfloor+1$ rounds for $k$-set agreement in the $n$-node clique $K_n$ under the synchronous $t$-resilient model (see~\cite{ChaudhuriHLT00}), as $D(K_n,t)=1$. 

\paragraph*{Examples}
\begin{itemize}
    \item Let us consider the $n$-node cycle, i.e., $G=C_n$, with $t=1$, and $k=1$ (i.e., consensus). We have $D(C_n,1)=n-2$, as, for every node~$v$, $C_n\setminus \{v\}$ is a path with $n-1$ nodes. The algorithm of Theorem~\ref{theo:upper-bound-fixed-graph} must thus perform min-flooding for $1+(n-2)=n-1$ rounds to solve consensus in $C_n$.  Intuitively, this appears to be the best that can be achieved as the node with the smallest input value may crash at the first round, by sending its value to just one of its neighbors, and then $n-2$ additional rounds will be needed for this value to reach all nodes. 

    \item Let us consider the $d$-dimensional hypercube $Q_d$, $d\geq 1$, with $n=2^d$ nodes. We have $\kappa(Q_d)=d$, and there are $d$ internally-disjoint paths of length at most $d+1$ between any two nodes, which implies that $D(Q_d,d-1)=d+1$. The algorithm of Theorem~\ref{theo:upper-bound-fixed-graph} must thus perform min-flooding for $\lfloor\frac{t}{k}\rfloor+(d+1)$ rounds to solve $k$-set agreement in the $t$-resilient hypercube~$Q_d$. 
\end{itemize}

\section{Conclusions}
\label{sec:conclusions}

We provided novel lower bounds for $k$-set agreement in synchronous $t$-resilient systems connected
by an arbitrary directed communication network. Our lower bound combines the $\lfloor t/k \rfloor$
lower bound (which we generalized to arbitrary communication graphs) obtained for rounds where 
exactly $t$ processes crash with an additional novel lower bound on the agreement overhead caused
by an arbitrary network, i.e., different from the complete graph. Our results use the machinery of combinatorial topology for studying
the (high) connectivity properties of the round-by-round protocol complexes obtained by some novel and strikingly simple carrier 
maps, which we firmly believe to have applications also in other contexts. Whereas we also provided some upper bound result, the challenging question of possible tightness is deferred to future research.

\bibliographystyle{plainurl}
\bibliography{ref}

\end{document}